\documentclass[11pt]{article}
\synctex=1
 \expandafter\let\csname equation*\endcsname\relax
 \expandafter\let\csname endequation*\endcsname\relax
\usepackage{bm}
\usepackage{enumerate}
\usepackage[top=1in, bottom=1in, left=1in, right=1in]{geometry}
\usepackage{mathrsfs,color}
\usepackage{amsfonts}\usepackage{multirow}
\usepackage{amssymb}
\usepackage{dsfont}
\usepackage{caption,comment}
\usepackage{blkarray}
\usepackage{amsmath}
\usepackage{float}
\usepackage{amssymb,amsthm}
\usepackage[toc,page]{appendix}
\usepackage{graphicx,afterpage}
\usepackage{epstopdf}
\usepackage{cancel}
\usepackage{mathdots}
\usepackage[pdftex,colorlinks=true]{hyperref}
\usepackage[sort,compress]{cite}

\usepackage[textsize=tiny]{todonotes}

\usepackage{mhequ}
\hypersetup{CJKbookmarks,%
bookmarksnumbered,%
colorlinks,%
linkcolor=black,%
citecolor=black,%
plainpages=false,%
pdfstartview=FitH}
\numberwithin{equation}{section}
\numberwithin{figure}{section}

\newcommand\tabcaption{\def\@captype{table}\caption}

\newtheorem{thm}{Theorem}[section]

\newtheorem{lem}[thm]{Lemma}
\newtheorem{prop}[thm]{Proposition}
\newtheorem{defn}[thm]{Definition}

\newtheorem{aspt}[thm]{Assumption}


\newcommand{\bfOne}{{\bf{1}}}

\newcommand{\bfx}{\mathbf{x}}
\newcommand{\bfK}{\mathbf{k}}
\newcommand{\bfI}{\mathbf{I}}
\newcommand{\bfz}{\mathbf{z}}
\newcommand{\bfA}{\mathbf{A}}
\newcommand{\bfJ}{\mathbf{J}}
\newcommand{\bfB}{\mathbf{B}}
\newcommand{\bfD}{\mathbf{D}}

\newcommand{\bfM}{\mathbf{M}}
\newcommand{\bfy}{\mathbf{y}}
\newcommand{\bfv}{\mathbf{v}}

\newcommand{\bfg}{\mathbf{g}}

\newcommand{\bfH}{\mathbf{H}}
\newcommand{\bfR}{\mathbf{R}}
\newcommand{\bfL}{\mathbf{L}}
\newcommand{\bfU}{\mathbf{U}}

\newcommand{\E}{\mathbb{E}}
\newcommand{\Prob}{\mathbb{P}}

\newcommand{\reals}{\mathbb{R}}

\newcommand{\unit}{\mathds{1}}

\newcommand{\calI}{\mathcal{I}}

\newcommand{\bfP}{\mathbf{P}}

\newcommand{\bfC}{\mathbf{C}}

\newcommand{\bfu}{\mathbf{u}}

\newcommand{\calD}{\mathcal{D}}
\newcommand{\calC}{\mathcal{C}}
\newcommand{\bfxtilde}{\tilde{\bfx}}
\newcommand{\bfztilde}{\tilde{\bfz}}
\newcommand{\bfvtilde}{\tilde{\bfv}}

\begin{document}
\title{
MALA-within-Gibbs samplers for
high-dimensional distributions with sparse conditional structure}
\author{X.T.~Tong, M.~Morzfeld and Y.M.~Marzouk}
\maketitle
\begin{abstract}
Markov chain Monte Carlo (MCMC) samplers are numerical methods for drawing samples from a given target probability distribution. We discuss one particular MCMC sampler, 
the MALA-within-Gibbs sampler, from the theoretical and practical perspectives.
We first show that the acceptance ratio and step size of this sampler are independent of 
the overall problem dimension
when (\emph{i}) the target distribution has sparse conditional structure,
and (\emph{ii}) this structure is reflected in the partial updating strategy of MALA-within-Gibbs.
If, in addition, the target density is block-wise log-concave, 
then the sampler's convergence rate is independent of dimension. 
From a practical perspective, we expect that MALA-within-Gibbs is useful for solving high-dimensional
Bayesian inference problems where the posterior exhibits sparse conditional structure at least approximately. 
In this context, a partitioning of the state that correctly reflects
the sparse conditional structure must be found, and we illustrate this
process in two numerical examples.
We also discuss trade-offs between the block size used for partial
updating and computational requirements that may increase with the
number of blocks. 
%
\end{abstract}

\begin{keywords}
Bayesian computation, high-dimensional distributions, Markov chain Monte Carlo
\end{keywords}

\section{Introduction}
Markov chain Monte Carlo (MCMC) samplers are numerical methods for drawing samples from an arbitrary target probability distribution whose density is known up to a normalizing constant. Generically, a Metropolis-Hastings MCMC sampler proposes a move by drawing from a proposal distribution and accepts or rejects the move with a probability that ensures that the stationary distribution of the Markov chain is the target distribution.

To design or chose a sampler for a given distribution, one typically considers the following three criteria. First, the type of proposal distribution is chosen based on how much information about the target distribution is available. For example, the Metropolis adjusted Langevin algorithm (MALA) requires derivatives of the target, while the random walk Metropolis (RWM) algorithm does not.  Second, the step size,
which controls how far the proposed sample strays from the current MCMC state,
needs to be tuned.
Put simply, too large a step size leads to poor mixing because the acceptance probability is too low; 
too small a step size leads to a large acceptance probability,
but the mixing of the chain is poor because
a large number of steps are required to produce an effectively independent sample. 
Step size tuning must find a practical solution
to this trade-off, and is problem dependent. 
Optimal or practical choices
of the step size 
may depend, among other things, 
on 
the choice of proposal distribution, 
the computational resources available,
the (apparent or effective) dimension of the problem,
and the overall desired accuracy of the MCMC computation.
Lastly, in an $n$-dimensional problem, one can propose an $n$-dimensional update
via an $n$-dimensional proposal, or one can propose, at each step in the chain, an 
update for an $n/m$-dimensional block of variables. 
Such samplers are called within-Gibbs samplers,
partially updating MCMC, component-wise MCMC, or partial resampling algorithms;
see, e.g., \cite{NG06,JJN13,B17}. 

The distributions one wishes to sample by MCMC are often high dimensional. Yet the convergence of MCMC samplers often slows in high dimensions, to the extent that calculations become practically infeasible.
Our main motivation for this work is that, while it is certainly difficult to sample generic high dimensional distributions, distributions that exhibit certain special structure can be feasible to sample, independent of their dimension, provided that the sampler exploits this structure. 
Examples of samplers in the current literature
that leverage various special problem structures are given in Section~\ref{sec:Literature}.
In this paper, we focus on high-dimensional sampling via
the MALA-within-Gibbs sampler in the presence of
sparse conditional structure.

We define sparse conditional structure in Section~\ref{sec:Theory}
via the Hessian of the logarithm of the target density.
In the special case of a Gaussian target distribution, 
sparse conditional structure is equivalent to the precision matrix being sparse.
More generally, 
sparse conditional structure is equivalent to the existence of many conditional independence relationships, or the distribution being Markov with respect to a sparse graph \cite{lauritzen1996graphical}.
We prove in Section~\ref{sec:Theory} that the partial updating strategy of MALA-within-Gibbs, 
with carefully defined updates that make use of the sparse conditional structure, 
leads to acceptance ratios that depend on the dimension of the block-update but are independentof the overall dimension. 
We further show that MALA-within-Gibbs converges with a rate independent of the dimension 
if the target distribution is block-wise log-concave. 

We then discuss MALA-within-Gibbs from a practical perspective in Section~\ref{sec:Practice}.
In this context it is important to realize that
the sparse conditional structure may become apparent only after a suitable change of coordinates.
For MALA-within-Gibbs to be an effective sampler,
we thus need a means of discovering these coordinates,
or, equivalently, identifying sparse conditional structure.
We expect that many Bayesian inference problems,
see, e.g., \cite{ChorinHald2013,RC15,ABN17},
are naturally formulated in coordinates that exhibit sparse conditional structure,
but we also consider an example where a coordinate transformation
is required to reveal sparse conditional structure.
We further discuss the overall computational efficiency
of the MALA-within-Gibbs approach and
raise the issue that the partial updating 
of MALA-within-Gibbs requires
$m$ simulations of the numerical model per sample,
$m$ being the number of blocks.
This implies that there may be an optimal, problem-dependent
choice for the dimensionality of the update
that can lead to significant computational savings.
We explore all of the above issues numerically by applying 
MALA-within-Gibbs to two well known test problems:
a log-Gaussian Cox point process \cite{GC11,MSW98} 
and an elliptic PDE inverse problem
\cite{MTWC15,MarzoukOptimalMaps,Dean2007,Bear,DS11,V15}.
We further compare the computational efficiency
of MALA-within-Gibbs to the efficiencies of other samplers
including MALA, pCN \cite{Cotter13},
and manifold MALA (MMALA) \cite{GC11}.

\section{Notation, assumptions, and background}
We consider probability distributions with density functions $c\pi(\bfx)$,
where $c$ is an unknown normalization constant and $\pi$ is a known function.
We partition the $n$-dimensional vector $\bfx$ into $m$ blocks, $\bfx_1,\ldots,\bfx_m$,
where the subscripts are called block indices.
Note that the blocks $\bfx_j$ are not necessarily consecutive elements of the vector $\bfx$.

\subsection{Notation}
MALA will require gradients of the logarithm of the target density $\pi$,
which we write as $\bfv(\bfx)=\nabla_{\bfx} \log \pi(\bfx)$.
Similarly, we sometimes write derivatives of $\pi$ with respect to the blocks as
$\bfv_j(\bfx)=\nabla_{\bfx_j} \log \pi(\bfx)$.
We write $\nabla^2_{\bfx_i,\bfx_j}$ to denote second derivatives with respect to blocks $i$ and $j$,
i.e., $\nabla^2_{\bfx_i,\bfx_j}\log \pi (\bfx)$ is a matrix of size $\text{dim}(\bfx_i)\times \text{dim}(\bfx_j)$. 
Throughout this paper, we use the Euclidean norm for vectors,
i.e., $\|\bfx\| = \sqrt{\bfx^T\bfx}$,
where superscript $T$ denotes a transpose,
and the $l_2$-operator norm, $\| \bfA \|$, for matrices.
We write $\bfA\preceq\bfB$, where $\bfA$ and $\bfB$ are two $n\times n$ matrices,
when the matrix $\bfA-\bfB$ is negative semi-definite.
We write $\lambda_{\min}(\bfA)$ for the smallest eigenvalue of the matrix $\bfA$.

We write conditional densities of one block,
$\pi(\bfx_j|\bfx_{1},\bfx_2,\dots,\bfx_{j-1},\bfx_{j+1},\dots,\bfx_{m})$,
as $\pi(\bfx_j|\bfx_{\setminus j})$,
i.e, the block index set $\setminus j = \{1,2,\dots,j-1,j+1,\dots,m\}$.
More generally, we write $\mathcal{I}$ for a subset of block indices,
i.e., $\mathcal{I}$ is a subset of $\{1,2,\dots, m\}$.
The cardinality of $\calI$ will be denoted as $|\calI|$. 
We denote the complement of $\mathcal{I}$ by $\calI^c$,
i.e., $\calI^c$  is the subset of $\{1,2,\dots, m\}$
which excludes the block indices in $\mathcal{I}$.

An important concept we will use repeatedly is conditional independence.
Conditional independence means that 
conditioning block $i$ on all but a few other
blocks is irrelevant, which we write as
\begin{equation*}
	\bfx_j \; \perp\!\!\!\perp \; \bfx_{ \mathcal{I}_j^c}\; \vert \; \bfx_{\mathcal{I}_j\setminus \{j\}} \, ,
\end{equation*}
where the index set $\mathcal{I}_j$ depends on $j$ and includes the block index $j$
and where $\calI_j\setminus\{j\}$ is the index set $\calI_j$ with index $j$ removed.
In terms of probability distributions,
conditional independence means that
\begin{equation*}
	\pi(\bfx_j|\bfx_{\backslash j}) = \pi(\bfx_j|\bfx_{\mathcal{I}_j\setminus \{j\}})
\end{equation*}
We assume throughout that $\mathcal{I}_j$ has at most $S\ll m$ elements.

\subsection{Assumptions}
We assume throughout this paper that $\pi(\bfx)$
has continuous second derivatives and that
\begin{enumerate}[(i)]
\item
the dimension, $n$, of $\bfx$ and the number of blocks, $m$, are large;
\item
any block $\bfx_j$ is conditionally independent 
of most other blocks. 
\end{enumerate}
We refer to assumption (ii) as sparse conditional structure.
This terminology is inspired by linear algebra and Gaussian $\pi(\bfx)$---a Gaussian with sparse conditional structure is characterized by a sparse precision matrix.
We make assumption~(ii) mathematically more precise in Section~\ref{sec:Theory:Step}.
For simplicity, we assume that $n/m$ (dimension divided by the number of blocks) is
an integer.

\subsection{Background: MALA and MALA-within-Gibbs}
The MALA sampler with $n$-dimensional updates 
and step size $\tau$
generates a sequence of iterates $\bfx^k$ by repeating the following two steps,
starting from a given $\bfx^0$:
\begin{enumerate}
\item Draw a sample $\bfxtilde^k$ from the MALA proposal by
\begin{equation*}
\bfxtilde^k=\bfx^{k}+\tau \bfv(\bfx^{k})+\sqrt{2\tau} \boldsymbol{\xi}^k,
\end{equation*}
where $\boldsymbol{\xi}^k$ is an independent sample from $\mathcal{N}(\mathbf{0},\bfI_n)$. 
\item Accept this proposal with probability 
\begin{equation*}
\alpha(\bfx^k, \bfxtilde^k)=\min\left\{1, \frac{\pi(\bfxtilde^k)\exp(-\frac1{4\tau}\|\bfx^k-\bfxtilde^k-\tau \bfv(\bfxtilde^k) \|^2)}
{\pi(\bfx^k)\exp(-\frac1{4\tau}\|\bfxtilde^k-\bfx^k-\tau \bfv(\bfx^k)\|^2)}\right\},
\end{equation*}
i.e., let $\bfx^{k+1}=\bfxtilde^k$ with probability $\alpha(\bfx^k, \bfxtilde^k)$, 
 and $\bfx^{k+1}=\bfx^k$ with probability $1-\alpha(\bfx^k, \bfxtilde^k)$. 
\end{enumerate}
It is straightforward to show that $c\pi(\bfx)$ is the invariant distribution of the Markov chain $\bfx^k$. Therefore, when $k\to \infty$, $\bfx^k$ can be viewed as a sample from $c\pi(\bfx)$.

MALA-within-Gibbs is a variation of MALA that 
uses $n/m$-dimensional updates.
We use superscripts to index ``time'' in the Markov chain (see above),
and subscripts to index the blocks.
Thus, starting with a vector $\bfx^0$ and step size $\tau$,
MALA-within-Gibbs iterates the following steps: 
\begin{enumerate}
\item Set $\bfx^{k}=\bfx^{k-1}$. Repeat steps (a) and (b) below for $j=1,\ldots, m$ 
to update all $m$ blocks of~$\bfx^{k}=[\bfx_1^k,\dots,\bfx_m^k]$.
\begin{enumerate}[(a)]
\item Sample a standard Gaussian  $\boldsymbol{\xi}_{j}^k$ of the same dimension as $\bfx_j$
and use the MALA proposal for the current block $\bfx_j$:
\begin{equation}
\label{eqn:MALApropose}
\tilde{\bfx}^k_{j}=\bfx^k_{j}+\tau  \bfv_j (\bfx^k)+\sqrt{2\tau } \boldsymbol{\xi}_{j}^k,
\end{equation}
\item Define $\bfxtilde^k=[\bfx_1^k,\dots,\bfx_{j-1}^k,\tilde{\bfx}_j^k,\bfx_{j+1}^k,\dots,\bfx_m^k]$,
i.e., $\bfxtilde^k$ is equal to $\bfx^k$,
except at its $j$-th block.  Compute the block acceptance ratio
\begin{equation*}
\alpha_j(\bfx^k, \bfxtilde^k)=\min\left\{1, \frac{\pi(\bfxtilde^k)\exp(-\frac1{4\tau}\|\bfx^k-\bfxtilde^k-\tau \bfv_j(\bfxtilde^k) \|^2)}
{\pi(\bfx^k)\exp(-\frac1{4\tau}\|\bfxtilde^k-\bfx^k-\tau \bfv_j(\bfx^k)\|^2)}\right\},
\end{equation*}
Set $\bfx^k$ be $\bfxtilde^k$ with probability $\alpha_j(\bfx^k,\bfxtilde^k)$, else $\bfx^k$ maintains its value.
\end{enumerate}
\item  Increase the time index from $k$ to $k+1$ and go to 1.
\end{enumerate}
As before, it is straightforward to verify that the target $c\pi(\bfx)$ is the invariant distribution
of the MALA-within-Gibbs iterates.
The partial updating can be derived from
applying MALA within a Gibbs iteration (hence the name),
i.e., with target distributions $\pi(\bfx_j|\bfx_{\backslash j})$.

\section{Dimension independent acceptance and convergence rate}
\label{sec:Theory}
Both the MALA and MALA-within-Gibbs samplers can, in principle,
be used for arbitrary target distributions,
but in generic high-dimensional problems we expect that convergence is slow.
In high-dimensional problems with sparse conditional structure,
however, MALA-within-Gibbs can be effective if the partitioning of $\bfx$,
defining the partial updates, is chosen in accordance with the sparse conditional structure.
With a suitable partial updating strategy,
we show that the step size and the acceptance ratio 
of MALA-within-Gibbs (within each block) can be made independent of the overall dimension.
We then show, under additional assumptions of block-wise log-concavity,
that MALA-within-Gibbs converges to the target distribution at a dimension-independent rate.
The proofs of the propositions and the theorem can be found in the appendix.

\subsection{Dimension independent acceptance rate under sparse conditional structure}
\label{sec:Theory:Step}
To simplify the proofs,
the conditional independence assumption is formulated in terms of the gradient $\bfv(\bfx)$.
\begin{aspt}[Sparse conditional structure]
\label{aspt:sparse}
For $\pi(\bfx)$ and the partition $\bfx=(\bfx_1,\ldots, \bfx_m)$, 
there are  constants $S$ and $q$ independent of $n$, so that
\begin{enumerate}[(i)]
\item  The dimension of each block $\bfx_j$ is bounded by $q$.
\item For each block index $j\in {1,\ldots, m}$, there is an $\bfx$-independent block index set $\calI_j\subset\{1,\ldots,m\}$ with $j\in \calI_j$ and cardinality $|\calI_j|\leq S$ 
so that 
\[
\nabla^2_{\bfx_k,\bfx_j} \log \pi(\bfx)=\nabla_{\bfx_k} \bfv_j(\bfx)=\mathbf{0},\quad \text{if}\quad k\notin \calI_j. 
\]
\end{enumerate}
\end{aspt}
Note that (i) is trivial because we deal with finite dimensional problems
and note that (ii), by Lemma~2 of \cite{SBM18}, is equivalent to
$\bfx_j\perp\!\!\!\perp \bfx_{\calI_j^c} \; | \; \bfx_{\calI_j\backslash \{j\}}$ 
if the density is strictly positive and smooth.
In other words, Assumption \ref{aspt:sparse} 
is equivalent to the assumption of sparse conditional structure, as described earlier,
but the formulation in terms of gradients is easier to use in our proofs.

Whether or not Assumption~\ref{aspt:sparse} is satisfied
for a given target distribution depends, to a large extent, 
on how the blocks of $\bfx$ are defined.
Using physical insight into the problem,
it is often possible to group components of $\bfx$
such that Assumption~\ref{aspt:sparse} is satisfied
or approximately satisfied.
We discuss this issue more in Section~\ref{sec:Practice} below,
but it is important to understand that Assumption~\ref{aspt:sparse}
essentially requires a good understanding of the target distribution
and that the results we derive under this assumption
make use of the fact that one understands and leverages
conditional independencies among the components of $\bfx$.

We also assume that the gradient $\bfv(\bfx)$ and its derivatives are bounded.
\begin{aspt}[Bounded vector fields]
\label{aspt:regular}
The vector field $\bfv_j(\bfx)=\nabla_{\bfx_j} \log \pi(\bfx)$,
for $j=1,\dots,m$, and its first derivatives  are bounded, 
i.e., there exist constants $M_v$ and $H_v$, 
independent of the overall dimension $n$, such that
\[
\|\bfv_j(\bfx)\|\leq M_v,\quad \|\nabla_{\bfx_i} \bfv_j (\bfx)\|\leq H_v,\quad \|\nabla_{\bfx_j} \bfv_j (\bfx)-\nabla_{\bfz_j} \bfv_j (\bfz)\|\leq H_v\|\bfx-\bfz\|. 
\]
\end{aspt}

By Assumption~\ref{aspt:sparse}, $\bfv_j(\bfx)$ has no dependence on $\bfx_{\calI^c_j}$, 
so one can write it as $\bfv_j(\bfx_{\calI_j})$. 
If the support of $\pi(\bfx)$ is bounded,
then $\bfv_j(\bfx)$ having no dependence on $\bfx_{\calI^c_j}$,
along with the fact that the dimension of each block $\bfx_i$ is at most $q$, 
often yields Assumption~\ref{aspt:regular}.
Unbounded support is more complicated.
A Gaussian, for example, violates Assumption~\ref{aspt:regular}
because the norm of the gradient is not bounded.
This boundedness assumption, however, 
is made for simplicity and may not be required in practice.
More sophisticated constructions may be used in the future
to relax this assumption and to derive more general results.

Under Assumptions~\ref{aspt:sparse} and~\ref{aspt:regular},
the  following proposition shows that the step size and 
the acceptance ratio of MALA-within-Gibbs are independent of the overall dimension. 
\begin{prop}[Block acceptance]
\label{prop:BlockAcceptance}
Suppose $c\pi(\bfx)$ is the density of a distribution with sparse conditional structure (Assumption \ref{aspt:sparse})
and that, in addition, Assumption~\ref{aspt:regular} holds.
There is a constant $M$, independent of the number of blocks $m$, 
so that, for any given state $\bfx\in \reals^{n}$, 
the block acceptance ratio $\alpha_j(\bfx^k,\bfxtilde^k)$ is bounded below by
\[
\E[ \alpha_j(\bfx^k,\bfxtilde^k)]\geq 1-M\sqrt{\tau }.
\]
for all blocks $j\in\{1,\ldots, m\}$.
\end{prop}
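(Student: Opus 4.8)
The plan is to expand the expected block acceptance ratio around $\tau = 0$ and show that the leading correction is $O(\sqrt{\tau})$ with a constant that does not see the overall dimension. Fix a block index $j$ and a current state $\bfx = \bfx^k$. By \eqref{eqn:MALApropose} the proposed $j$-th block is $\tilde{\bfx}_j = \bfx_j + \tau \bfv_j(\bfx) + \sqrt{2\tau}\,\boldsymbol{\xi}_j$, so the increment $\tilde{\bfx}_j - \bfx_j$ has size $O(\sqrt{\tau})$ in expectation. First I would write the log of the ratio inside the $\min$ as
\[
R_j := \log \pi(\bfxtilde) - \log \pi(\bfx) - \tfrac{1}{4\tau}\bigl\| \bfx_j - \tilde{\bfx}_j - \tau \bfv_j(\bfxtilde) \bigr\|^2 + \tfrac{1}{4\tau}\bigl\| \tilde{\bfx}_j - \bfx_j - \tau \bfv_j(\bfx) \bigr\|^2,
\]
and use the elementary bound $\E[\min\{1, e^{R_j}\}] \geq 1 - \E[|R_j|]$ (or $1 - \E[R_j^-]$), so that it suffices to show $\E[|R_j|] \leq M\sqrt{\tau}$.

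The crucial simplification, and the place where Assumption~\ref{aspt:sparse} enters, is that $\bfxtilde$ differs from $\bfx$ only in block $j$, so $\log\pi(\bfxtilde) - \log\pi(\bfx)$ depends only on $\bfx_{\calI_j}$ — the finitely many ($\leq S$) blocks in $\calI_j$, each of dimension $\leq q$. Hence $R_j$ is a function of a vector of dimension at most $Sq$, which is independent of $n$ and $m$. I would then Taylor-expand $\log\pi(\bfxtilde) - \log\pi(\bfx)$ in the increment $h := \tilde{\bfx}_j - \bfx_j$ to second order with a remainder controlled by the Lipschitz bound on $\nabla_{\bfx_j}\bfv_j$ from Assumption~\ref{aspt:regular}, and similarly expand $\bfv_j(\bfxtilde) = \bfv_j(\bfx) + \nabla_{\bfx_j}\bfv_j(\bfx)\,h + O(\|h\|^2)$. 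The standard MALA algebra then shows that the $O(\tau^0)$ and $O(\tau^{1/2})$ terms in $R_j$ cancel, the $O(\tau)$ terms involving the Hessian partially cancel, and what remains is a remainder of order $\|h\|^3/\tau = O(\tau^{1/2})$ together with Gaussian-moment terms that are also $O(\tau^{1/2})$ or smaller after taking expectations over $\boldsymbol{\xi}_j$. All constants appearing are built from $M_v$, $H_v$, $q$, and $S$ only.

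The main obstacle is the bookkeeping in the cancellation: one must keep track of the cross term $\langle h, \bfv_j(\bfxtilde) - \bfv_j(\bfx)\rangle$ and the quadratic form $h^T \nabla_{\bfx_j}\bfv_j(\bfx) h$ carefully, since a naive bound would leave an $O(1)$ term rather than $O(\sqrt{\tau})$. The resolution is the classical observation that MALA is a second-order-accurate discretization of the Langevin diffusion, so the $O(1)$ and $O(\sqrt{\tau})$ pieces are designed to vanish; here this works verbatim at the block level because $\log\pi(\bfxtilde)-\log\pi(\bfx)$ and $\bfv_j$ are exactly the log-density and its gradient of the conditional $\pi(\bfx_j \mid \bfx_{\setminus j})$, which by conditional independence equals $\pi(\bfx_j \mid \bfx_{\calI_j\setminus\{j\}})$. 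A secondary point to handle carefully is that $\E[\|h\|^3]$ and higher moments of the Gaussian increment scale like $\tau^{3/2}$ times a constant depending only on $q$ (the block dimension), not on $n$; this is immediate once the dimension of $\boldsymbol{\xi}_j$ is bounded by $q$. Combining these, $\E[|R_j|] \leq M\sqrt{\tau}$ with $M = M(M_v, H_v, q, S)$, which gives the claim, and the bound is uniform in $j$ since all estimates used the same $m$-independent constants.
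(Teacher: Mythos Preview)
Your approach is correct and is a genuinely different route from the paper's. The paper does not Taylor-expand $R_j=\log f$ directly. Instead it writes the Metropolis ratio as a function $f(\bfx,w,\xi)$ of $w=\sqrt{\tau}$, proves (Lemma~\ref{lem:Lipacc}, claim~(3)) that $|\partial_w f|\le M(1+\|\xi\|^2)f$ with $M$ built from $M_v,H_v,S,q$, and then integrates in $w$ from $0$ to $\sqrt{\tau}$ to get $1-f\wedge 1\le M\sqrt{\tau}(1+\|\xi\|^2)$; taking expectation over the $q$-dimensional Gaussian $\xi$ gives the bound (Lemma~\ref{lem:couplingprob}, claim~(1)). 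Your argument instead carries out the standard one-step MALA cancellation at the block level, using that only block $j$ moves so that everything reduces to the conditional $\pi(\bfx_j\mid\bfx_{\calI_j\setminus\{j\}})$. This is more elementary and arguably cleaner for the proposition in isolation; in fact your cancellation actually leaves a remainder of order $\|h\|^3$ (not $\|h\|^3/\tau$ as you wrote---the $\tfrac{1}{4\tau}\|h\|^2$ terms cancel exactly), so you would obtain $\E|R_j|=O(\tau^{3/2})$, slightly sharper than the stated $O(\sqrt{\tau})$. What the paper's indirect route buys is reusability: the same Lemma~\ref{lem:Lipacc} also delivers $\nabla_\bfx f$ bounds (claims~(1),(2)), which are exactly what is needed for the coupling estimates $\Prob_{k,j}(\text{accept only one})=O(\tau)\sum_{i\in\calI_j}\|\Delta_i^{k,j}\|$ in Lemma~\ref{lem:couplingprob} that drive the proof of Theorem~\ref{thm:unbiased}. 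Your expansion would need to be redone with an $\bfx$-derivative to recover those pieces.
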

Proposition~\ref{prop:BlockAcceptance}
follows directly from Lemma \ref{lem:couplingprob},
which we prove in Section~\ref{sec:proof}.
The dimension independent block acceptance ratio is intuitive.
Partial updating implies that the proposed updates are, by design, low-dimensional:
their dimension depends on the block size, $q$, 
but is independent of the number of blocks, $m$,
or the overall dimension $n=m\cdot q$.
Thus, only the dimension of the update controls the block acceptance ratio.
The overall number of low-dimensional updates,
which defines the overall dimension, is irrelevant.

It might seem that Proposition~\ref{prop:BlockAcceptance}
contradicts earlier results on optimal scalings of MALA step sizes,
where the optimal step size decreases with dimension at a well-understood rate
\cite{Roberts97,Roberts98,Beskos09,Beskos13}.
These earlier scaling results, however,
do not assume sparse conditional structure.
Thus, in general, the optimal step size of MALA and MALA-within-Gibbs
should decrease with dimension,
but if the target distribution has sparse conditional structure
and if, in addition, this structure is used in the block updates,
then the step size (and acceptance ratio) can be independent of the overall dimension.

Assumption~\ref{aspt:sparse} ensures
that the sparse conditional structure of the target is exploited by the MALA-within-Gibbs sampler.
We thus assume away any difficulties of discovering sparse conditional structure,
but we discuss practical aspects of this assumption in Section~\ref{sec:Practice},
including a brief discussion of what happens when the assumptions are only
nearly met.
We also emphasize that we have no claims at optimal step sizes of MALA-within-Gibbs---we merely show that the step size need not decrease with dimension
to ensure a constant average block acceptance ratio. Moreover,
local tunings as discussed in \cite{B17}
may further improve efficiency, but we do not pursue such ideas here.

Partial updating of MALA has also been considered in \cite{NG06},
where the conclusion is that the updates in MALA-within-Gibbs should be high-dimensional.
Again, this is true in general, but if the target has sparse conditional structure,
the dimensionality of the updates may depend on this structure.
We revisit this issue in Section~\ref{sec:Practice},
where we also bring up a trade-off between the block size
and computational requirements that may increase with the number of blocks.
One can also perform random partial updates,
i.e., choosing at random which components of $\bfx$ are next updated.
Asymptotically, MALA-within-Gibbs
with a random partial updating strategy converges to the target distribution,
but we expect that the convergence will be slow 
for problems with sparse conditional structure
because this structure is not used by random partial updates.

\subsection{Dimension independent convergence rate}
\label{sec:Theory:Conv}
We have shown above that the step size and acceptance ratio
of MALA-within-Gibbs can be independent of dimension
if the sparse conditional structure of the target distribution is known 
and used via a suitable partition of the variables during the within-Gibbs moves.
This is not enough to guarantee fast convergence of MALA-within-Gibbs.
To study the convergence rate of MALA-within-Gibbs,
we require, as an additional assumption,
that the target distribution be unimodal and block-wise log-concave
(see below for a definition).
The reason is that difficulties with MCMC that arise from high dimensionality or
multi-modality are independent of each other:
if the target distribution has multiple modes,
a large number of samples may be required even if the dimension is small.
We focus on aspects of high-dimensional problems with a single mode.

The additional assumption we need in our proof (see Section~\ref{sec:Proofs})
is block-wise log-concavity.
To define block-wise log-concavity,
we first construct an $m\times m$ matrix $\bfH(\bfx)$,
where $m$ is the number of blocks,
with the following properties.
\begin{defn}
A symmetric $m\times m$ matrix function $\bfH(\bfx)$  with entries $H_{j,i}(x)$ is uniformly bounded and negative if there are strictly positive constants $H_v$ and $\lambda_H$ such that for all $j,i$ and all $\bfx$,
\[
|H_{j,i}(\bfx)|\leq H_v,\quad \lambda_{\max}(\bfH(\bfx))\leq -\lambda_H<0. 
\]
\end{defn}
As a simple example, a constant symmetric negative definite matrix is uniformly bounded and negative. Block-wise log-concavity can now be formulated as follows.
\begin{aspt}
\label{aspt:blockconcave}
A probability density $c\pi(\bfx)$ is block-wise log-concave with block size $m$
if there exists a $m\times m$ uniformly bounded and negative matrix $\bfH(\bfx)$ such that
\begin{enumerate}[i)]
\item $\nabla^2_{\bfx_j,\bfx_j} \log \pi(\bfx)\preceq H_{j,j}(\bfx)\, \bfI$, where $\bfI$ is 
the identity matrix of size $\text{dim}(\bfx_j)\times \text{dim}(\bfx_j)$;
\item the off-diagonal elements bound the conditional dependence between blocks, 
that is, $\|\nabla^2_{\bfx_j,\bfx_i}\log \pi(\bfx)\|\leq H_{j,i}(\bfx)$ for all $i\neq j$.
\end{enumerate}
\end{aspt}

Note that if the dimension of the blocks is $q=1$,
so that $m=n$, and if the Hessian of $\log \pi(\bfx)$ is diagonally dominant,  
then $\bfH(\bfx)$ can be taken as the Hessian of $\pi$, 
with all off-diagonal entries replaced by their absolute value. 
Further note that block-wise log-concavity is a stronger assumption
than log-concavity---the function $\pi(\bfx)$ can be log-concave but not block-wise log-concave
(see example below).
On the other hand, a distribution that is block-wise log-concave, for any block size, is also log-concave.

As an illustration, we consider a Gaussian distribution for 
$\bfx = [x_1,\dots,x_{64}]$ with mean zero and covariance matrix $\bfC$ with elements
\[
 [C]_{i,j}=\exp\left(-\frac1{2l} |i-j|\right),\quad i,j=1,\ldots, 64. 
\]
Interpreting this Gaussian as a discretization of a 1D random field
with exponential covariance kernel (and discretization $\Delta x =1$),
the quantity $l$ is a correlation length scale.
If $l$ is small, only those components of $\bfx$ that are near each other
in the 1D domain are significantly correlated.
This suggests partitioning $\bfx$ based on neighborhoods in the 1D domain,
which correspond to consecutive elements of $\bfx$.
For example, the block size $q=4$ results in $m=16$ blocks
\[
\bfx_1=[x_1,\ldots,x_4],\quad \bfx_2=[x_5,\ldots,x_8],\quad \ldots,\quad\bfx_{16}=[x_{61},\ldots,x_{64}]. 
\]
Recall that, for Gaussian distributions, the precision matrix $\bfP$ is equal to $-2\nabla^2 \log \pi$,
which suggests to construct the matrix $\bfH(x)\in \reals^{m\times m}$ by 
\[
H_{i,i}(\bfx)\equiv -\lambda_{\min}(\bfP_{i,i}),\quad H_{i,j}(\bfx)\equiv \|\bfP_{i,j}\|.
\]
Here, $\bfP_{i,j}$ is the $i,j$-th $q\times q$ sub-block of $\bfP$ with indices corresponding to the blocks $\bfx_i$ and $\bfx_j$. 
For example, with $q=4$, $\bfP_{1,2}$ is a sub-block of $\bfP$ 
consisting of rows 1-4 and columns 5-8. 
Assumption \ref{aspt:blockconcave} is then equivalent to assuming that $\bfH$ is negative definite, 
i.e., $\lambda_{\min}(-\bfH)>0$. We can numerically check this condition
by computing eigenvalues of $\bfH$.
Table~\ref{tab:concave} lists values of $\lambda_{\min}(-\bfH)$ 
for varying correlation length scales $l$ and block sizes $q$. 

\begin{table}[tb]
\footnotesize
\begin{center}
\begin{tabular}{ c c c c c c c}
 Setting &q=1 &q=2 & q=4& q=8 &q=16 &q=32\\\hline
$l=2$ &-0.71 &-1.28 &-1.44 &-1.28 &-0.71 &\textbf{0.24}\\
$l=1$ &\textbf{0.04} &-0.21 & -0.29 & -0.21 & \textbf{0.04} &\textbf{0.46}\\
$l=0.5$ &\textbf{0.62} &\textbf{0.54} &\textbf{0.52} & \textbf{0.54} & \textbf{0.62} & \textbf{0.76}
\end{tabular}
\end{center}
\caption{Block-wise log-concavity with $\bfH$ as defined in the text for different block sizes $q$ and different correlation length scales $l$.
Positive numbers, highlighted in bold, indicate block-wise log-concavity.}
\label{tab:concave}
\end{table}

We note that while the Gaussian is log-concave for any $l$,
block-wise log-concavity depends on the length scale $l$ and the 
size of the blocks $q$.
If $l$ is large, only large blocks lead to block-wise log concavity
(with $q=64$ guaranteeing log-concavity and block-wise log-concavity).
If the correlation is (essentially) confined to small neighborhoods, i.e., if $l$ is small,
then small block sizes $q$ also lead to block-wise log-concavity.

With the definition of block-wise log-concavity,
we can now state a theorem about the dimension-independent convergence
rate of MALA-within-Gibbs. The proof is given in Section~\ref{sec:Proofs}.
\begin{thm}
\label{thm:unbiased}
Under Assumptions \ref{aspt:sparse}, \ref{aspt:regular}, and \ref{aspt:blockconcave}, for any $\delta>0$, there exists a $\tau_0>0$ independent of the number of blocks $m$, so that when the step size $\tau<\tau_0$, we can couple two MALA-within-Gibbs samples $\bfx^k$ and $\bfz^k$, such that 
\[
\sum_{i=1}^m\left(\E \| \bfx^k_i-\bfz^k_i\|\right)^2\leq  (1-(1-\delta)\lambda_H \tau)^{2k} \sum_{i=1}^m\left(\E \| \bfx^0_i-\bfz^0_i\|\right)^2. 
\]
In particular, one can let $\bfz^0\sim \pi$.
It follows  that $\bfz^k\sim \pi$, 
which in turn shows that $\bfx^k$ converges to $\pi$ geometrically fast. 
\end{thm}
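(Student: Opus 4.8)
The plan is to establish a coupling of two MALA-within-Gibbs chains, $\bfx^k$ and $\bfz^k$, driven by the same Gaussian noise vectors $\boldsymbol{\xi}_j^k$ at every block update and every time step, and to track the quantity $d_k^2 = \sum_{i=1}^m (\E\|\bfx^k_i - \bfz^k_i\|)^2$ through a single sweep over the $m$ blocks. The key observation is that, because of sparse conditional structure (Assumption~\ref{aspt:sparse}), the update of block $j$ only depends on the blocks in $\calI_j$, so the change propagated into block $j$ from the other chain is controlled by only finitely many ($\le S$) other blocks. First I would write the one-block proposal map explicitly: for the synchronously-coupled proposals, $\tilde\bfx^k_j - \tilde\bfz^k_j = (\bfx^k_j - \bfz^k_j) + \tau(\bfv_j(\bfx^k) - \bfv_j(\bfz^k))$, with the noise cancelling. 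Under block-wise log-concavity, the diagonal term $(\bfx^k_j - \bfz^k_j) + \tau(\bfv_j(\cdot\,;\bfx^k_j) - \bfv_j(\cdot\,;\bfz^k_j))$ is a contraction on the $j$-th block by a factor roughly $1 + \tau H_{j,j}(\bfx) \le 1 - \tau\lambda_H \cdot(\text{something})$, using that $\nabla^2_{\bfx_j,\bfx_j}\log\pi \preceq H_{j,j}\bfI$ implies $\|\bfy + \tau(\bfv_j(\bfy+\text{shift}) - \bfv_j(\bfy))\| \le (1 + \tau H_{j,j})\|\text{shift}\|$ for $H_{j,j} \le 0$; meanwhile the off-diagonal contributions are bounded by $\tau H_{j,i}\|\bfx^k_i - \bfz^k_i\|$ via part (ii) of Assumption~\ref{aspt:blockconcave}.

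The next step is to handle the accept/reject step. This is where Proposition~\ref{prop:BlockAcceptance} (equivalently Lemma~\ref{lem:couplingprob}) enters: the probability that the two chains ``decouple'' at a given block update — i.e., one accepts while the other rejects, or they accept but the resulting block values differ — is $O(\sqrt{\tau})$ in expectation, or rather the expected extra distance introduced is $O(\tau^{3/2})$ or $O(\tau^2)$ after accounting for the fact that a rejection only happens with small probability and only moves a distance $O(\sqrt{\tau})$. I would assemble these into a recursion: after updating block $j$, writing $e_{i} = \E\|\bfx_i - \bfz_i\|$ for the running values, the new $e_j$ satisfies roughly $e_j^{\text{new}} \le (1 + \tau H_{j,j})\, e_j + \tau \sum_{i\neq j} H_{j,i}\, e_i + (\text{rejection error of size } o(\tau))$. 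Collecting the full sweep $j = 1,\dots,m$ into vector form with $\bfe = (e_1,\dots,e_m)^T$, the linear part is governed by $(\bfI + \tau\bfH(\bfx))$ acting on $\bfe$ (up to the fact that earlier-updated blocks already carry their new values within the same sweep, which can be absorbed by taking $\tau$ small and bounding cross terms, or by using a Gauss–Seidel-type estimate), and since $\lambda_{\max}(\bfH) \le -\lambda_H$ we get $\|\bfe^{k+1}\| \le (1 - (1-\delta)\lambda_H\tau)\|\bfe^k\|$ for $\tau < \tau_0(\delta)$, which is exactly the claimed geometric decay of $d_k$; iterating gives the stated bound. The final sentence about $\bfz^0 \sim \pi$ follows because $\pi$ is stationary, so $\bfz^k \sim \pi$ for all $k$, and then $d_k \to 0$ forces $\bfx^k$ to converge to $\pi$ (e.g. in Wasserstein-1, hence weakly).

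The main obstacle I anticipate is making the within-sweep bookkeeping rigorous: because MALA-within-Gibbs updates blocks sequentially, when block $j$ is updated the blocks $1,\dots,j-1$ have already been moved to their time-$(k+1)$ values while $j+1,\dots,m$ are still at time $k$, so the matrix recursion is not cleanly $\bfe^{k+1} = (\bfI+\tau\bfH)\bfe^k$ but something like a forward-substitution (lower-triangular-plus) iteration. One needs to argue that for $\tau$ small this perturbed iteration still contracts at rate $1-(1-\delta)\lambda_H\tau$ — this should follow from a perturbation argument since the discrepancy from the clean iteration is $O(\tau^2)$ per step and $\|\bfH\| \le \sqrt{m}\, H_v$ but actually bounded spectrally, so care is needed to keep constants $m$-independent. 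The second delicate point is controlling the accept/reject contribution uniformly in $m$: one must verify, via the coupling in Lemma~\ref{lem:couplingprob}, that the total excess distance accumulated from rejections across all $m$ blocks in one sweep is still $o(\tau)$ relative to the $\Theta(\tau)$ contraction — i.e. that it is $O(\tau^{3/2})$ or that each block contributes $O(\tau^{3/2}) e_j$, so it can be folded into the $\delta$ slack. These are the two places where the "dimension-independence" is genuinely used and where the proof must be written with care.
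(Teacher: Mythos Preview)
Your proposal is correct and follows essentially the same route as the paper's proof: synchronous coupling of the Gaussian noise together with maximal coupling of the accept/reject uniforms, a one-block recursion of the form $\E_{k,j}\|\Delta_j^{k+1}\| \le (1+\tau H_{j,j})\|\Delta_j^k\| + \tau\sum_{i\neq j} H_{j,i}\|\Delta_i^{k,j}\| + O(\tau^{3/2})\sum_{i\in\calI_j}\|\Delta_i^{k,j}\|$, and then a vector/matrix contraction argument using $\lambda_{\max}(\bfH)\le -\lambda_H$. The two obstacles you flag are precisely the places where the paper does the real work: the within-sweep bookkeeping is handled by writing the recursion as $(\bfI-\bfM_\tau)\,\E_k D_{k+1} \preceq (\bfI+\tau\bfH[\bfx^k,\bfz^k]+\bfM_\tau+\bfM_\tau^0)\,D_k$ with $\bfM_\tau,\bfM_\tau^0$ sparse $O(\tau^{3/2})$ matrices (nonzero only on $\calI_j^2$ and $\calI_j$), and the $m$-independence of the rejection contribution comes from bounding $\|\bfM_\tau\|$ by its row-sum, which is $\le S^2\cdot O(\tau^{3/2})$ thanks to sparsity---exactly the ``fold into the $\delta$ slack'' mechanism you anticipate.
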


Theorem~\ref{thm:unbiased} indicates that MALA-within-Gibbs can be 
a fast sampler for high-dimensional problems if
(\emph{i}) the target distribution has sparse conditional structure
and this structure is used by the MALA-within-Gibbs sampler;
and (\emph{ii}) the target distribution is block-wise log-concave.

Block-wise log-concavity implies that the target has only one mode.
Assuming block-wise log-concavity thus allows us to study computational barriers due to high dimensionality
without requiring that we simultaneously consider challenges due to multi-modality.
Notably, block-wise log-concavity is more restrictive than log-concavity,
which also implies that the target is unimodal.
We use block-wise log-concavity here to gain stronger control 
over the coupling between blocks in the analysis.
Ultimately, a less restrictive assumption (e.g., log-concavity) may be preferable
and one may view our results as a first step towards a full understanding
of how MALA-within-Gibbs can operate effectively in high-dimensional problems.

Theorem~\ref{thm:unbiased}
also has connections to previous work
on Gibbs samplers for Gaussian distributions with sparse conditional structure \cite{MTM19}.
In particular, Theorems 3.1 and 3.2 of \cite{MTM19} show dimension independent
convergence of a Gibbs sampler for Gaussian distributions.
By interpreting the block-wise log-concavity assumption as a generalization of the 
Gaussian assumptions in Theorems 3.1 and 3.2 of \cite{MTM19},
one can understand Theorem~\ref{thm:unbiased} as a generalization  
of this result to a within-Gibbs sampler for non-Gaussian distributions.

\section{Discussion of efficient samplers in high dimensions}
\label{sec:Literature}
We suggested earlier that sampling generic high-dimensional distributions is difficult,
but if the target distribution has a special structure, then efficient samplers can be constructed.
One example are Gaussian distributions.
Gaussians can be sampled efficiently even if their dimension is large, 
either by direct samplers (using techniques from numerical linear algebra for computing matrix square roots),
or by MCMC, using analogies between Gibbs samplers and linear solvers to construct matrix splittings for accelerated sampling \cite{Fox,FN16}. 
Connections between linear solvers and MCMC are also discussed in \cite{MJ95}.

There are also several routes to making MCMC samplers effective for high-dimensional distributions 
that are not Gaussian, and we discuss some of them here in relation to our results.
Recall that the MALA-within-Gibbs sampler relies on sampling blocks of variables; this idea is also used in multigrid Monte Carlo (MGMC) 
for lattice systems \cite{MGMC86,MGMC89,MGMC91}.
The basic ideas in MGMC, however, are different.
MGMC proposes the same move simultaneously for all variables within one block.
The MALA-within-Gibbs sampler proposes moves for each variable within a block,
but handles the blocks (nearly) independently.
In terms of analogues to linear solvers,
MALA-within-Gibbs is perhaps more akin to domain decomposition than to multigrid.

Metropolis-Hastings (MH) samplers offer a general tool for sampling non-Gaussian target distributions.
The step size $\tau$ of many MH samplers needs to be tuned,
which often requires that $\tau$ decrease with dimension $n$.
In fact optimal scalings of $\tau$ with dimension for various MCMC samplers
have been derived:
for RWM $\tau_\text{opt}=O(n^{-1})$, 
for MALA $\tau_\text{opt}=O(n^{-1/3})$, 
and for HMC $\tau_\text{opt}=O(n^{-1/4})$. 
Fixing the acceptance ratio with small $\tau$, however, comes with a price:
the acceptance ratio may be large, 
but all accepted steps are small (on average on the order of $\tau$),
so that the sampler moves often, but slowly.
As a rule of thumb, it takes about $O(1/\tau)$ iterations to move through the 
support of the target distribution (putting aside issues of reaching stationarity 
\cite{christensen2005scaling}). 
For very large dimensions, many MCMC samplers are thus slow to converge.
These results hold for general target distributions.
Even if the analyses that lead to the optimal scalings rely on certain assumptions, 
e.g., that the target measure is of product type,
this problem structure is not directly used by the samplers.
Our results on dimension independent step size and convergence rates
hold only for target distributions with sparse conditional structure
and when this structure is explicitly used by the MALA-within-Gibbs sampler
(via Assumption~\ref{aspt:sparse}).

Another strategy for effective sampling of high-dimensional distributions applies if the parameters $\bfx$ can be decomposed as $\bfx=(\bfy, \bfz)$, where $\bfz$ is low dimensional and, conditioned on $\bfz$, there are fast (perhaps direct) samplers for $\bfy$; see, e.g., \cite{CM16,CM18, CMT18UQ}.  In Bayesian inverse problems, this structure can often be identified by a suitable choice of basis or reparameterization, as in \cite{CuiEtAl16,SpantiniEtAl15,zahm2018certified}: typically $\bfz$ represents directions where the posterior departs significantly from the prior, while $\bfy$ represents prior-dominated directions of the parameter space, which may even be conditionally Gaussian and/or (approximately) independent of $\bfz$.  
Note that the MALA-within-Gibbs sampler does not require partially or conditionally Gaussian target distributions, but our analysis of its efficiency requires sparse conditional structure and block-wise log-concavity.

The theory of function-space MCMC also has led to effective MCMC methods for another class of high-dimensional Bayesian inverse problems; see, e.g., \cite{Cotter13,hairer2014spectral,V15, OPPS16}. 
The basic idea is to design MCMC samplers that are well defined on
function spaces: consider, for example, the inference of a spatially
distributed parameter, and send the spatial discretization parameter $h$ to zero
while keeping the number of observations fixed.
Since the spatial discretization controls the dimension of the problem,
the limit $h \to 0$ corresponds to an infinite-dimensional problem.
The proposal distributions of function-space MCMC
are chosen such that, for a fixed step size, the acceptance ratio remains constant,
independent of $h$ or, equivalently, the dimension of the problem.
There are many variations of such discretization invariant MCMC samplers \cite{CuiEtAl16, rudolf2018generalization,chen2016accelerated,beskos2014stable},
with applications discussed in, e.g., \cite{BuiEtAl13,PetraEtAl14}. Some (e.g., \cite{CuiEtAl16b,bardsley2019scalable,lan2019adaptive}) combine function-space MCMC with the decomposition of the parameter space into (finite dimensional) likelihood-informed directions and (infinite-dimensional) prior-dominated directions, as described above.

The dimension independence of the MALA-within-Gibbs sampler discussed here is markedly different from the discretization invariance of function-space MCMC. We do not consider the infinite-dimensional limit resulting from the discretization of a function on a given domain. Rather, we show that the dimension of the inference problem may not affect the acceptance and convergence rates of the sampler if the problem has sparse conditional structure that is suitably exploited.
This allows us, for example, to consider a sequence of inference problems posed on increasingly large domains and with an increasing number of observations,
while keeping the spatial discretization fixed (see Section~\ref{sec:LogGauss} and Figure~\ref{fig:Truth}). Provided that each of the problems within the sequence has suitable sparse conditional structure, the acceptance and convergence rates of MALA-within-Gibbs are the same for all problems within the sequence.
The convergence rate of function-space MCMC, e.g., the pCN method, is not constant for all problems within such a sequence (see Section~\ref{sec:LogGauss} for a specific example).
This does not come as a surprise, because the dimension independence of pCN  holds only in the case in which the dimension increases due to grid refinement, with the number of observations and the domain size held constant. A sequence of problems with increasing domain size and increasing numbers of observations violates the assumptions underpinning the dimension independence of pCN.
%
%
We refer to \cite{MTM19} for a more thorough discussion of the various notions of high dimensionality that may occur when solving Bayesian inverse problems.
%
%

\section{Practical considerations and numerical experiments}
\label{sec:Practice}
Our results on dimension independent step size, acceptance probabilities, 
and convergence rates hold under precise mathematical assumptions
of sparse conditional structure and log-concavity (see Section~\ref{sec:Theory}).
We now focus on posterior distributions that arise in Bayesian inference problems, 
because of their practical importance and because we anticipate
that the assumption of sparse conditional structure is often satisfied in such problems.
We also demonstrate how to use MALA-within-Gibbs in two numerical examples,
and discuss and compare the computational costs of MALA-within-Gibbs and other MCMC samplers.

\subsection{Posterior distributions with sparse conditional structure}
In below, we provide the Bayesian problem setup.
Let $\bfx$ be an $n$-dimensional vector endowed with a prior probability density $\pi_0(\bfx)$. 
In many problems, $\bfx$ arises from a discretization of a spatially distributed quantity (i.e., a field) and, for that reason, is high dimensional.
The prior reflects assumptions about the smoothness of the field 
and is often assumed to be Gaussian with a known mean and covariance.
A computational model, $\mathcal{M}(\bfx)$,
maps $\bfx$ to observations $\bfy$. 
Typically, the model is nonlinear and the number of observations 
is less than the dimension of $\bfx$.
Any model errors are represented by a random variable $\boldsymbol{\varepsilon}$
and, often, model errors are additive, i.e.,
\begin{equation}
\label{eq:Likelihood}
	\bfy =\mathcal{M}(\bfx)+\boldsymbol{\varepsilon}.
\end{equation}
The distribution of $\boldsymbol{\varepsilon}$ is assumed to be known 
(often Gaussian with mean zero and diagonal covariance matrix).
Equation~\eqref{eq:Likelihood} defines a likelihood $\pi_l(\bfy\vert\bfx)$
and the likelihood and prior jointly define the posterior distribution
\begin{equation*}
	\pi(\bfx\vert\bfy) \propto \pi_0(\bfx) \pi_l(\bfy\vert\bfx).
\end{equation*}

Sparse conditional structure arises naturally in Bayesian posterior distributions when
(\emph{i}) the parameters $\bfx$ are high dimensional,
but not all components of $\bfx$ have significant statistical interactions;
and (\emph{ii}) each observation is informative for only a small subset of the components of $\bfx$
(see also \cite{MTM19}).
Put differently, we assume that the prior has sparse conditional structure
and that the observations do not significantly densify the conditional structure of the prior.
This happens in many geophysical applications, e.g., 
in numerical weather prediction (NWP),
where the posterior distribution is defined jointly by a global atmospheric model 
(with dimension $O(10^8)$) and observations of the atmospheric state 
(typically $O(10^7)$ observations).
In a global atmospheric model,
each model component stores information about the atmospheric state at a specific location at a given time
and each component has significant statistical interactions with nearby components,
but not with components that are far away.
A discussion of the mathematical mechanisms 
that lead to this property can be found in \cite{CMT18CAM}.

\subsection{Implementation of MALA-within-Gibbs}
The partitioning of $\bfx$ into blocks is important
for effective sampling of the posterior by MALA-within-Gibbs,
because only a suitable partition will indeed put the sparse conditional structure to use. 
We do not have a general strategy to find a suitable partition,
but we expect that a workable partition is often intuitive.
For example, if $\bfx$ is defined over a spatial domain (1D, 2D, or 3D)
and if correlations are limited to small neighborhoods,
then the partitioning should be based on these neighborhoods
and the block size should take the correlation lengths scales into account.
We demonstrate this process in a numerical example in Section~\ref{sec:LogGauss}.
Our second example in Section~\ref{sec:Subsurface}
demonstrates how to choose a partition for the partial updating
based on prior covariances in the absence of a spatial scale.

The partial updating of MALA-within-Gibbs
requires $m$ likelihood evaluations per sample.
In a typical Bayesian inverse problem,
each likelihood evaluation will require a full forward solve with the numerical model $\mathcal{M}$,
even if only one block of the model's components is updated.
Using the common effective sample size
\begin{equation}
	N_\text{eff} = N_e/\text{IACT},
\end{equation}
where $N_e$ is the number of MCMC samples 
(length of the Markov chain) and IACT is the 
integrated auto correlation time, see, e.g., \cite{Wolff04,Sokal1998},
we can estimate the cost per effective sample by
\begin{equation}
\label{eq:CostPerSample}
\text{cost per effective sample} = \text{IACT} \times (\text{\# of blocks}) \times \text{cost of likelihood evaluation}.
\end{equation}
It is now clear that the computational cost of MALA-within-Gibbs grows
with the number of blocks,
even if IACT is independent of dimension.
The computational requirements of MALA-within-Gibbs
are, therefore, not independent of the dimension of the problem,
since a higher-dimensional problem requires a larger number of blocks
(keeping other parameters that define the model unchanged; see examples below).
This also points to a trade-off for sampling in high dimensions
that may not be easy to resolve:
to keep the efficiency high (small IACT), one may want to use a large number of small blocks
(with a lower bound on the block size depending on the correlation structure),
but on the other hand, one may want to use small number of large blocks to keep the number of model evaluations per sample small.

\subsection{Numerical illustration 1: Log-Gaussian Cox point processes}
\label{sec:LogGauss}
We consider inference in a log-Gaussian Cox point process
similar to the numerical experiments in \cite{GC11}.
A uniform $N_u\times N_v$ grid, with spacing $\Delta u = \Delta v= 1$, covers
the 2D (spatial) domain, $[1,L]\times[1,L]$. 
The parameter to be inferred is defined over the domain, $\{X_{i,j}, i,j=1,\dots,L/\Delta u\}$. Its prior is $\mathcal{N}(\mu\bfOne,\bfB)$,
where $\bfB$ is a discretization of the exponential covariance kernel, i.e.,
\begin{equation*}
	\text{cov}(X_{s_1,t_1}, X_{s_2,t_2}) = \sigma_s^2 \sigma_t^2
	\exp\left(-\frac{1}{2}\frac{\vert s_1-s_2\vert}{l_s}-\frac{1}{2}\frac{\vert t_1-t_2\vert}{l_t}\right),
\end{equation*}
 where $\sigma_s^2=\sigma_t^2=2$, $\mu = 4$, $l_s=2$, $l_t=4$.
 
 Observations are made at each grid point, denoted by $Y_{i,j}$. The observations are conditionally independent and Poisson distributed with means $\exp(X_{i,j})$. Our goal is to estimate $X_{i,j}$ from $Y_{i,j}$. The prior and likelihood define the posterior distribution
\begin{equation*}
	\pi(\bfx\vert  \bfy)\propto
	\exp\left(-\frac{1}{2} \vert \vert\bfB^{-1/2}(\bfx-\mu\bf1) \vert \vert^2\right)
	 \prod_{i,j} \exp\left(Y_{i,j}X_{i,j}-\exp(X_{i,j})\right),
\end{equation*}
where $\bfx$ is the column stack of $X_{i,j}$,
i.e., an $n=L^2$ dimensional vector.

\subsubsection{Problem setup}
We consider three problems
with increasing domain size $L=16$, $L=32$, and $L=64$,
leading to sampling problems of dimensions $256$, $1024$, and $4096$.
The true values of $X_{i,j}$ for the three domains are shown in Figure~\ref{fig:Truth}.
\begin{figure}[tb]
\begin{center}
\includegraphics[width=0.8\textwidth]{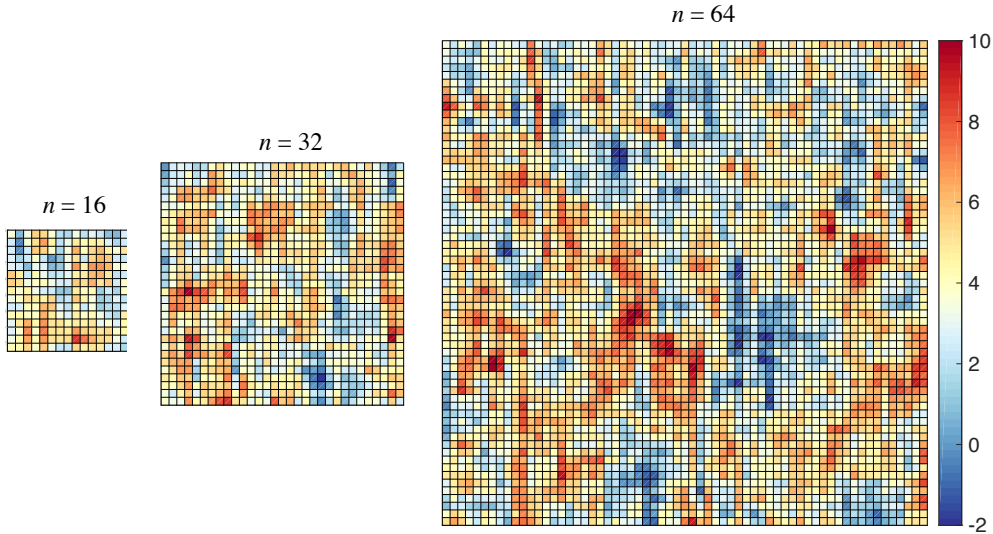}
\caption{True values of $X$ for three problems with increasing domain size $L$
(drawn to scale).}
\label{fig:Truth}
\end{center}
\end{figure}
Note that the (apparent) dimension of the problem ($n=L^2$)
and the number of observations ($L^2$) increase with increasing domain size,
but the prior length scales are fixed and short compared to all three domain sizes.
Moreover, each observation $Y_{i,j}$ carries information about only one grid point, $X_{i,j}$.

We note that our setup is different from the problems usually considered in function-space MCMC,
where the increasing dimension is caused by refining the spatial discretization,
while keeping the size of the domain and the number of observations constant.
In fact, we consider the opposite scenario: the number of observations and the size of the domain increase,
but the spatial discretization remains unchanged.
Our setup is also slightly different from that considered in \cite{GC11},
where the means of the Poisson distributions are $\exp([X]_{i,j})/L^2$ (in our notation)
and where a different covariance kernel is used to define the prior.
The latter is minor.
We do not scale the mean values with domain size, $L$,
because we want to study MALA-within-Gibbs on problems with increasing
dimension while leaving all other parameters that define the problem unchanged.

The prior precision matrix is sparse,
as illustrated in Figure~\ref{fig:PriorPrecision}
for the problem of size $16\times 16$.
\begin{figure}[tb]
\begin{center}
\includegraphics[width=0.4\textwidth]{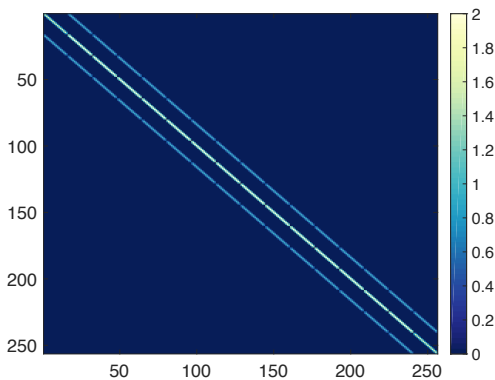}
\caption{Prior precision matrix of the $16\times 16$ problem.}
\label{fig:PriorPrecision}
\end{center}
\end{figure}
The prior precision matrices of the larger problems ($32\times 32$ and $64\times 64$)
have similar sparsity patterns.
We now investigate whether Assumptions \ref{aspt:sparse} (sparse conditional structure) and \ref{aspt:blockconcave} (block-wise log-concavity) are satisfied in this problem. Sparse conditional structure can verified by inspection: the chosen Gaussian prior is a Markov random field where each pixel has only four neighbors, and the likelihood is purely local, introducing no new dependencies. We can verify this structure more carefully as follows, partitioning the state $\bfx$ based on 2D-neighborhoods. Recall that the log posterior density is
\begin{equation}
\label{eq:CoxLogPost}
\log \pi(\bfx\vert  \bfy)=C-\frac{1}{2} \vert \vert\bfB^{-1/2}(\bfx-\mu\mathbf{1}) \vert \vert^2
	 +\sum_{i,j} \left(Y_{i,j}X_{i,j}-\exp(X_{i,j})\right),
\end{equation}
where $C$ is a constant whose value is irrelevant.
Fixing the block size at $q=n/m$, we find that
\begin{equation}
\label{temp:hess}
\nabla_{\bfx_i,\bfx_j}\log \pi(\bfx\vert  \bfy)= -[\bfB^{-1}]_{\bfx_i,\bfx_j}-\mathbf{1}_{i=j}\bfD_i,
\end{equation}
where the $q\times q$ matrices $[\bfB^{-1}]_{\bfx_i,\bfx_j}$ are  
constructed from $\bfB^{-1}$ based on the blocks $\bfx_i,\bfx_j$,
and $\bfD_i$ is a diagonal $q\times q$ matrix with entries being 
$\exp(X_{k,l})$ for each $X_{k,l}$ in $\bfx_i$. 
Due to the sparse structure of the prior precision $\bfB^{-1}$ 
(see Figure~\ref{fig:PriorPrecision}),
$[\bfB^{-1}]_{\bfx_i,\bfx_j}$ is zero if 
the blocks $i$ and $j$ are far from each other in the 2D domain.
In this case, $\mathbf{1}_{i=j}\bfD_i=0$, so that 
by~\eqref{temp:hess},  $\nabla^2_{\bfx_i,\bfx_j}\log \pi(\bfx\vert  \bfy)$  is also zero.
The problem is thus indeed characterized by sparse conditional structure.

The block-wise log-concavity Assumption~\ref{aspt:blockconcave} 
may not be satisfied in this example.
By \eqref{temp:hess}, 
the Hessian is bounded above by $-\bfB^{-1}$,
which suggests to use 
\begin{equation*}
 	H_{i,i}(x)\equiv-\lambda_{\min} ([\bfB^{-1}]_{\bfx_i,\bfx_i}),\quad H_{i,j}(x)\equiv\|[\bfB^{-1}]_{\bfx_i,\bfx_j}\|,\quad  i,j=1,\ldots,m
\end{equation*}
where $[\bfB^{-1}]_{\bfx_i,\bfx_i}$ and $[\bfB^{-1}]_{\bfx_i,\bfx_j}$
are constructed from $\bfB^{-1}$, based on the blocks with indices $i$ and $j$.
With this choice, Assumption~\ref{aspt:blockconcave} requires that 
\begin{equation}
	\label{tmp:c}
	c:=\lambda_{\min} (-\bfH)>0.
\end{equation}
With this choice of $\bfH$ and
with the length scales $l_s=2$, $l_t=4$,
 the condition in~\eqref{tmp:c} is not satisfied,
 suggesting that the problem is not block-wise log-concave. 
 
\subsubsection{MCMC samplers}
We apply simplified manifold MALA (MMALA) \cite{GC11},
and MMALA-within-Gibbs to draw samples from the posterior distributions
of the $16\times 16$, $32\times 32$ and $64\times 64$ problems. 
We also apply pCN, as an example of a function-space MCMC scheme,
to illustrate that function-space MCMC is not dimension independent when some of its underlying assumptions are not met.

The MMALA proposal is
\begin{equation*}
	\tilde{\bfx}^k = \bfx^k+\tau\bfM\, \nabla \log p(\bfx^k\vert \bfy)+\sqrt{2\tau}\,\bfM^{1/2} \xi^{k+1},\quad \xi^{k+1}\sim\mathcal{N}(\bf{0},\bf{I}),
\end{equation*}
where the choice $\bfM = \bf{\Lambda} +\bfB^{-1}$ turns MALA into 
simplified manifold MALA.
The matrix ${\bf{\Lambda}}$ is diagonal and the $i$th diagonal element is
$[{\bf{\Lambda}}]_{i,i} = \exp(\mu + [\bfB]_{i,i})$;
see \cite{GC11}.
We implement MMALA-within-Gibbs using blocks of size $q=d\times d$ and
consider $d=8,16,32,64$.
We emphasize that MMALA-within-Gibbs with a single block,
covering the entire domain, is equivalent to the MMALA sampler.
For example, if $L=64$ and $d=64$, the sampler does not use partial updating
and we recover the usual MMALA;
with $L=64$ and $d=16$, we divide the domain into 16 blocks,
each of size $16\times 16$.
The blocks define a neighborhood of components $X_{i,j}$
of size $d\times d$ on the 2D-domain and are ordered left-to-right and top-to-bottom.

All samplers are initialized at the maximum a posteriori point (MAP)
which we find by solving the optimization problem
\begin{equation*}
	\min_\bfx -\log \pi(\bfx\vert \bfy),
\end{equation*}
using a Gauss--Newton method.
We consider various step sizes $\tau$
and, for each one, we run pCN to generate $10^5$ samples
and MMALA or MMALA-within-Gibbs to generate $10^4$ samples.
We then compute the integrated auto correlation time (IACT) 
of each pixel using the techniques described in \cite{Wolff04}.
Note that we use all samples (no burn-in)
to compute the average acceptance ratios and IACT.
We inspected some of the chains and could not identify 
an apparent transient phase, likely because our initialization
point makes the transients negligible.

%

\subsubsection{MCMC results}
Results of a MMALA-within-Gibbs sampler
with $d=8$ and step size $\tau=0.5$ 
are shown in Figure~\ref{fig:LogGaussIllu}.
\begin{figure}[tb]
\begin{center}
\includegraphics[width=0.8\textwidth]{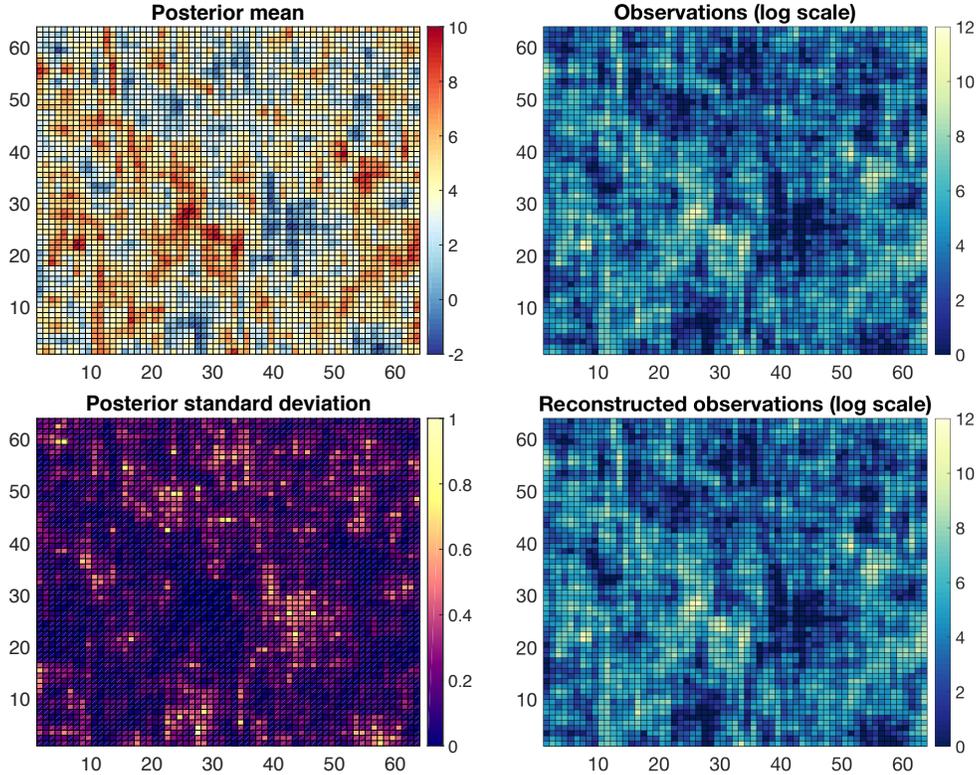}
\caption{Illustration of results obtained by
$10^4$ samples of a MMALA-within-Gibbs sampler
with $d=8$, and step size $\tau=0.5$.
Top row:  posterior mean (left) and observations $Y_{i,j}$ (right).
Bottom row: posterior variance at each grid point (left),
observations corresponding to posterior mean (right).}
\label{fig:LogGaussIllu}
\end{center}
\end{figure}
The panels in the top row show the
posterior mean (average of all MCMC samples)
and the observations $Y_{i,j}$ (on a log-scale).
The panels in the bottom row show the posterior variance at each grid point 
and the observations (on a log-scale) corresponding to the posterior mean.
We note a good agreement between the posterior mean and the 
true field (see Figure~\ref{fig:Truth}),
as well as a good agreement between the observations and the reconstructed obervations.

Our tuning of the step-size is  illustrated in Figure~\ref{fig:LogGaussAccRate},
where the average acceptance ratio of MMALA and MMALA-within-Gibbs
is plotted as a function of the step sizes we tried for the problem with $L=64$.
\begin{figure}[tb]
\begin{center}
\includegraphics[width=0.8\textwidth]{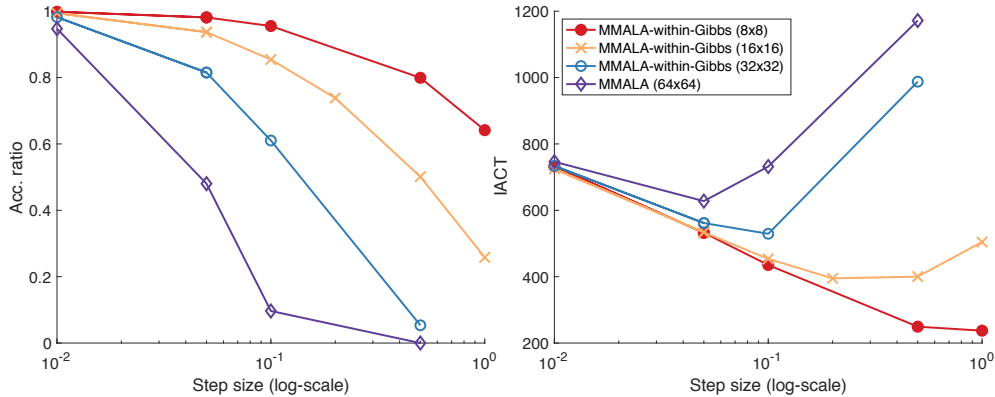}
\caption{Left: average acceptance ratio of MMALA and MMALA-within-Gibbs
as a function of the step size for the $64\times 64$ problem.
Right: IACT of MMALA and MMALA-within-Gibbs
as a function of the step size for the $64\times 64$ problem.}
\label{fig:LogGaussAccRate}
\end{center}
\end{figure}
The results are qualitatively similar for the problems with $L=16$ and $L=32$.
We see that the acceptance ratio decreases with step size,
but for any fixed step size $\tau$, the acceptance ratios
of the within-Gibbs samplers increase when the block sizes are decreased.
The reason is that the partial updating of the MMALA-within-Gibbs 
sampler results in large acceptance ratios for large step sizes
independently of the dimension of the problem.
Figure~\ref{fig:LogGaussAccRate} also shows IACT as a function of the step size.
We note that, for a fixed step size,
IACT increases with the block size 
and that the step size that minimizes IACT 
 increases as the block size decreases. 
 Again, the reason is that the partial updating strategy of MMALA-within-Gibbs
 allows larger steps for smaller blocks, which decreases IACT and 
 accelerates the mixing of the Markov chain.

IACT, averaged over all grid points, 
and the average acceptance probabilities (averages taken over the MCMC moves) 
are listed  in Table~\ref{tab:ResultsCox}.
\begin{table}[h]
\footnotesize
\begin{center}
\begin{tabular}{c ccc cc}
       &   	  & 		   \multicolumn{4}{c}{MMALA-within-Gibbs ($N_e=10^4$)}\\
$L$ & pCN ($N_e=10^5$) & $64\times 64$ blocks & $32\times 32$ blocks & $16\times 16$ blocks& $8\times 8$ blocks\\\hline
16&4626/0.18/0.002&-&-&342/0.95/0.2&204/0.93/0.5\\
32&5363/0.26/0.002&-&437/0.75/0.1&330/0.73/0.2&203/0.78/0.5\\
64&6884/0.21/0.001&627/0.48/0.05&529/0.61/0.1&394/0.75/0.2&249/0.80/0.5
\end{tabular}
\end{center}
\caption{IACT\slash average acceptance probability\slash$\tau$
of pCN, MMALA, and MMALA-within-Gibbs
for three problems with increasing domain size $L$ (and thus increasing dimension).
Note that MMALA-within-Gibbs with block size equal to the domain size corresponds to MMALA.}
\label{tab:ResultsCox}
\end{table}
Here, we list results where we fixed the step size for each considered block size
to make the resulting average acceptance probabilities comparable.
An even better agreement between the acceptance probabilities at each block 
would require a more careful tuning of the step size,
but the step size tuning we carried out is sufficient to make our points
and to illustrate the relevant characteristics of the samplers.

For a fixed block size, MMALA-within-Gibbs yields the same IACT
independently of the domain size (dimension).
For example, with blocks of size $d=16$, IACT of the $4096$-dimensional problem
is similar to the IACT of the $256$- or $1024$-dimensional problems.
Moreover, the step size and corresponding acceptance ratios seem to be 
independent of the overall problem dimension.
The numerical experiments thus corroborate our theoretical results
on dimension independent convergence of MALA-within-Gibbs,
even when the assumption of block-wise log-concavity, required for our proofs, 
is not satisfied with our choices of block size.

Comparing MMALA-within-Gibbs with MMALA and pCN,
we note that the IACT of pCN and of MMALA with $n$-dimensional updates increase with dimension.
In the case of pCN, this behavior is not surprising because the way that dimension 
increases in this sequence of problems breaks some of the assumptions that 
are required for the dimension independence of pCN.
For MMALA, this behavior is also to be expected since MMALA in itself is not dimension independent. 
Using the within-Gibbs framework turns MMALA into a dimension independent sampling algorithm.

The dimension-independent convergence of MMALA-within-Gibbs, however,
does not necessarily imply that MMALA-within-Gibbs is the most efficient
sampler for this problem.
Using the cost-per-effective-sample in Equation~\eqref{eq:CostPerSample},
it is evident that MMALA is more efficient than MMALA-within-Gibbs
(at the block sizes we consider).
The cost estimate~\eqref{eq:CostPerSample}, however, assumes 
that a full likelihood evaluation
is required for each proposed sample of MMALA-within-Gibbs,
which is a conservative estimate.
One can easily envision making use of the problem structure
during likelihood evaluations in each block.
For example, evaluation of the prior term in~\eqref{eq:CoxLogPost}
in each block does not require computing 
the full matrix-vector product $\bfB^{-1/2}(\bfx-\mu\mathbf{1})$.
One can speed up the computations by only updating the relevant 
components that are modified in the current block.
We did not pursue such ideas because this problem is relatively simple
and because our main goal is to demonstrate that MMALA-within-Gibbs
can exhibit dimension independence.

\subsection{Numerical illustration 2: inverse problems with an elliptic PDE}
\label{sec:Subsurface}
We consider the PDE
\begin{equation*}
-\nabla\cdot(\kappa\, \nabla u)=g,
\end{equation*}
on a square domain $(s,t)\in [0,1]^2$
with Dirichlet boundary conditions;
here $u$ represents a pressure field and $g$ is a given source term,
which consists of four delta functions (sources) at four locations in the domain. Details on the boundary conditions and source term are given in  \cite{MTWC15}.
The quantity $\kappa > 0$ represents the permeability of the medium; we use a log-normal prior for the permeability to enforce the non-negativity constraint.
Thus, $K=\log \kappa$ is a Gaussian random field.
We set its mean to be zero and employ the covariance kernel
\begin{equation*}
k(s_1, t_1; s_2, t_2) = \exp\left(
-\frac{(s_1-s_2)^2}{2l_s^2}
-\frac{(t_1-t_2)^2}{2l_t^2}
\right)
\end{equation*}
where $(s_1,t_1)$ and $(s_2,t_2)$ are two points in the square domain and
$l_s$ and $l_t$ are correlation length scales.
Our goal is to estimate the permeability given 128 noisy observations of the pressure $u$
in the center of the domain.
This problem setup is also described in \cite{MTWC15}.
The inverse problems we consider here differ from those in \cite{MTWC15}
only in the correlation lengths of the prior,
which do not affect the numerics of the PDE solve, 
the gradient computations, or the observation and forcing network.
We thus refer to \cite{MTWC15} for the details of the 
numerical solution of the PDE,
and in particular to Figure~2 of \cite{MTWC15} 
for descriptions of the locations of the forcing terms.

\subsubsection{Discretization and problem setups}
For computations, we discretize the PDE using a standard finite element method
with a uniform grid of $16\times 16$ points (see \cite{MTWC15} for details of the discretization).
The discretization leads to the algebraic equation
\begin{equation}
\label{eqn:PDEsubDisc}
\bfA(\hat{\boldsymbol{\kappa}}) \hat{\bfu}=\hat{\bfg},
\end{equation}
where the hat over variables denotes discretized quantities,
i.e., $\hat{\boldsymbol{\kappa}}$, $\hat{\bfu}$, and $\hat{\bfg}$
are vectors of size $N_u=256$ and $\bfA$ is a $256\times 256$ matrix
that depends on the permeability $\hat{\boldsymbol{\kappa}}$.
We will be computing with the discretized PDE from now on
and, for that reason, we drop the hats above all variables.
The pressure observations are modeled by the equation
\begin{equation}
\label{eq:LikelihoodSubsurface}
	\bfy = \bfH\bfu+\boldsymbol{\eta},\quad \boldsymbol{\eta}\sim\mathcal{N}({\bf{0}},\bfR),
\end{equation}
where $\bfH$ is a $N_y\times N_u$ matrix that has exactly one $1$ in each row
and picks out every other component of $\bfu$.
The observation noise covariance is set to be $\bfR=0.1^2\,\bfI$.

After discretization, the log-permeability $\bfK$ is finite dimensional
and its prior distribution is the finite dimensional Gaussian $\mathcal{N}(0,\bfB)$. 
Due to the squared exponential covariance model,
$\bfB$ can be well approximated by a low-rank matrix, i.e.,
\begin{equation}
\bfB\approx \bfU_\theta \bfL_\theta\bfU_\theta^T,
\label{eq:truncprior}
\end{equation}
where $\bfL_\theta$ is a $N_\theta\times N_\theta$
diagonal matrix whose diagonal elements are the $N_\theta < N_u$ largest eigenvalues of $\bfB$
(see \cite{MTWC15} for details).

The Gaussian prior for the log-permeability and the likelihood in~\eqref{eq:LikelihoodSubsurface}
define the posterior distribution $\pi(\bfK\vert \bfy)\propto \pi_0(\bfK)\pi_l(\bfy\vert \bfK)$ for the log-permeability:
\begin{equation*}
	\pi(\bfK\vert \bfy)\propto \exp\left(
	-\frac{1}{2} \|\bfB^{-1/2}\bfK\|^2
	-\frac{1}{2} \|\bfR^{-1/2}(\mathcal{M}(\bfK)-\bfy)\|^2
	\right);
\end{equation*}
here $\mathcal{M}$ maps the log permeability to the pressure at observation locations,
i.e., $\mathcal{M}(\bfK)=\bfH \bfu(\exp(\bfK))$, 
with the $\bfu$ being the solution to the discretized PDE \eqref{eqn:PDEsubDisc}.

Since symmetric positive semi-definite matrices
can always be diagonalized by a coordinate transformation, we consider the change of variables 
\begin{equation}
\label{eq:KToTheta}
	{\boldsymbol{\theta}} = \bfL_{\boldsymbol{\theta}}^{-1/2}\bfU_{\boldsymbol{\theta}}^T\bfK\approx\bfB^{-1/2}\bfK,
\end{equation}
which leads to the posterior distribution
\begin{equation}
\label{eq:ThetaPost}
\pi({\boldsymbol{\theta}}\vert \bfy)\propto\exp\left(
-\frac{1}{2} \|  {\boldsymbol{\theta}}\|^2
-\frac{1}{2} \| \bfR^{-1/2}(\mathcal{M}(\bfK({\boldsymbol{\theta}}))-\bfy)\|^2
\right).
\end{equation}
Below, we use MCMC samplers to draw samples
from the posterior distribution of ${\boldsymbol{\theta}}$.
The corresponding (log-)permeabilities are computed from posterior samples of ${\boldsymbol{\theta}}$
via the inverse of the transformation~\eqref{eq:KToTheta}.

We consider two problem setups, which differ in the correlation lengths of the log-normal prior.
The correlation lengths define the dimension of $\boldsymbol{\theta}$ in that the latter is chosen to retain 95\% of the integrated prior variance.
Specifically, if the correlation lengths are short compared to the $[0,1]\times [0,1]$ domain,
then the dimension is of ${\boldsymbol{\theta}}$ is large;
if the correlation lengths are large, 
the dimension of ${\boldsymbol{\theta}}$ is small.
The correlation length scales
and implied dimensions of ${\boldsymbol{\theta}}$ 
of Setups~1 and~2 are summarized in Table~\ref{tab:ProblemSummary}.
\begin{table}[h!]
\begin{center}
\begin{tabular}{rccc}
&$l_s$ & $l_t$  &$N_{\boldsymbol{\theta}}$  \\\hline
Setup 1& 0.4 & 0.8 & 30\\
Setup 2& 0.2 & 0.1 & 136\\
\end{tabular}
\end{center}
\caption{Correlation lengths and reduced dimensions for Setups~1 and~2.}
\label{tab:ProblemSummary}
\end{table}%

We illustrate the decay of the prior covariance eigenvalues and the true log-permeabilities of Setups~1 and~2
in Figure~\ref{fig:priors}.
\begin{figure}[tb]
\begin{center}
\includegraphics[width=1\textwidth]{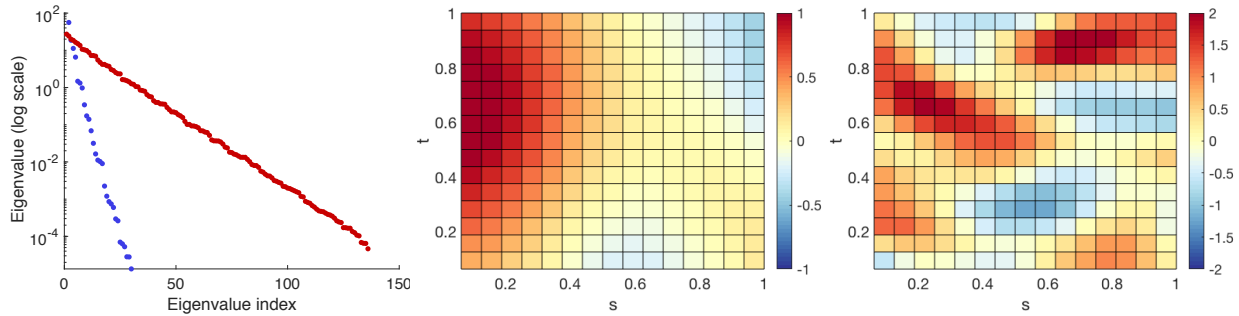}
\caption{
Left: eigenvalues of the prior covariance matrix 
for Setup~1 (blue) and Setup~2 (red).
Center: true log-permeability of Setup~1.
Right: true log-permeability of Setup~2.
}
\label{fig:priors}
\end{center}
\end{figure}
Specifically, we note that the eigenvalues decay more quickly for Setup~1 than for Setup~2 
because Setup~1 is characterized by larger correlation length scales than Setup~2.
The true log-permeabilities of Setups~1 and~2
are random draws from the prior and are shown in the right panels of Figure~\ref{fig:priors}.
There is more small-scale structure in the log-permeability of Setup~2
than in Setup~1, again due to the shorter prior correlation length scales.

Setup~2 is intended to have a higher dimension than Setup~1, not just in the apparent dimension of $\boldsymbol{\theta}$ but also in the sense of the prior-to-posterior update and hence the influence of the data (i.e., the effective dimension, as defined in \cite{Agapiou16}).
We achieve this by keeping the domain size fixed while decreasing the correlation lengths, 
which effectively increases the number of degrees of freedom in the unknown. 
In the previous log-Cox example, we imposed a similar growth by keeping the correlation lengths fixed but increasing the domain size.
Note that, however, in the previous example we also increased the number of observations with the dimension (size of the domain), while in this example, we keep the number of  observations fixed. 
This is a minor issue because in Setup 1, due to the large prior correlation lengths, many of the observations are strongly dependent (i.e., in the prior predictive $\pi(\bfy)$). When the correlation lengths decrease in Setup~2, the number of effectively independent observations increases and thus the relative influence of the likelihood,
and hence the effective dimension, increase as well. 

We emphasize that the way dimension increases in these two elliptic PDE inverse problem setups is 
different from what is usually considered in function-space MCMC.
Dimension independence of function-space MCMC 
requires that the dimension increase due to refinement of a discretization,
while keeping all other aspects of the problem setup (e.g., the prior and forward operator, assuming a consistent discretization scheme) fixed.
As in the previous example, we in fact do the opposite and keep the discretization fixed,
but decrease the prior correlation length scales.
For this reason, one cannot expect that pCN, or other function-space MCMC techniques,
will exhibit dimension independence in the problem setups we consider.

\subsubsection{Sparse conditional structure and block-wise log-concavity}
The theory we created for the dimension independent convergence of the
MALA-within-Gibbs sampler relies on assumptions of sparse conditional
structure and block-wise log-concavity.  With our choice of prior, the
problem does not have sparse conditional structure in 
$(s,t)$-coordinates. Yet the coordinate transformation
\eqref{eq:KToTheta} produces a sparse conditional structure---indeed
complete independence---in the prior for the
$\boldsymbol{\theta}$-coordinates, which correspond to discretized
Karhunen-Lo\`eve (KL) modes.  Conditioning on the observations,
however, can introduce dependence among the smoother KL modes because
an observation at a given $(s,t)$-location is in principle influenced
by all of the modes---due to the nature of the elliptic operator and
the KL modes' global support. Conversely: changes in one KL mode can
affect the solution everywhere in $(s,t)$-coordinates.
Nonetheless, our experiments, along with various other experiments with this problem found in the literature, suggest that this dependence is weak and that the problem thus has an approximate sparse conditional structure in the KL modes.
The assumption of block-wise log-concavity is difficult to verify in this example, in either the $\boldsymbol{\theta}$ or $(s,t)$-coordinates. The reason is that the discretization of the PDE, e.g., in \eqref{eq:ThetaPost},
makes computations difficult because we do not have second-order adjoints to compute the required Hessian.

\subsubsection{MCMC samplers}
We use pCN, MALA, and MALA-within-Gibbs 
to draw samples from the posterior distribution~\eqref{eq:ThetaPost}.
Again, we emphasize that MALA-within-Gibbs with a sufficiently large block size
is the same as the usual MALA without partial updating.
The pCN proposal is
\begin{equation*}
	\tilde{{\boldsymbol{\theta}}}^{k+1} = \sqrt{1-\beta^2}{\boldsymbol{\theta}}^k+\beta \boldsymbol{\xi}^{k+1},
\end{equation*}
where ${\boldsymbol{\theta}}^k$ is the current state of the MCMC
and where $\boldsymbol{\xi} \sim\mathcal{N}({\bf{0}},\bfI_{N_{\boldsymbol{\theta}}})$,
$\bfI_{N_{\boldsymbol{\theta}}}$ being the identity matrix of order $N_{\boldsymbol{\theta}}$.
The proposed $\tilde{{\boldsymbol{\theta}}}^{k+1}$ is accepted with probability 
\begin{equation*}
	\alpha_\text{pCN} = 1\wedge \exp\left(
	 \frac{1}{2}\|\bfR^{-1/2}(\mathcal{M}(\bfK({\boldsymbol{\theta}}^k))-\bfy)\|^2
	- \frac{1}{2}\|\bfR^{-1/2}(\mathcal{M}(\bfK(\tilde{{\boldsymbol{\theta}}}^{k+1}))-\bfy)\|^2
	\right),
\end{equation*}
where $1\wedge x$ denotes $\min\{1,x\}$.
We initialize the pCN chain at the MAP, which we find by
quasi-Newton optimization (Matlab's fminunc) of the cost function
\begin{equation}
\label{eq:F}
F({\boldsymbol{\theta}}) = \log (\pi({\boldsymbol{\theta}}\vert \bfy))=
-\frac{1}{2} \|  {\boldsymbol{\theta}}\|^2
-\frac{1}{2} \| \bfR^{-1/2}(\mathcal{M}(\bfK({\boldsymbol{\theta}}))-\bfy)\|^2
+C,
\end{equation}
where $C$ is a constant that is irrelevant.
We tune the parameter $\beta$
to obtain minimal IACT.
As above, IACT is computed using the techniques and definitions of \cite{Wolff04}.

The MALA proposal for this problem is
\begin{equation*}
	\tilde{{\boldsymbol{\theta}}}^{k+1}= {\boldsymbol{\theta}}^k-\tau \bfJ^{-1} \nabla_{\boldsymbol{\theta}} F({\boldsymbol{\theta}}^k)+\sqrt{2\tau}\bfJ^{-1/2}\boldsymbol{\xi}^{k+1},
\end{equation*}
where $F({\boldsymbol{\theta}})$ is as in \eqref{eq:F}
and where $\bfJ$ is the Hessian of $F$ at the MAP.
As with pCN, we initialize MALA at the MAP and tune the step size $\tau$ of MALA to find a minimal IACT.
As in the previous example, we use all samples for our computations (no burn-in).

MALA-within-Gibbs requires that we partition ${\boldsymbol{\theta}}$ into blocks.
Above, we argued that this problem has an approximate 
sparse conditional structure in the $\boldsymbol{\theta}$ coordinates.
For this reason, we use partitions of ${\boldsymbol{\theta}}$ that 
group consecutive elements of ${\boldsymbol{\theta}}$ together. 
Below, we consider several block sizes
and for each one, we initialize MALA-within-Gibbs at the MAP
(as before) and tune the step size to achieve a minimal IACT.

\subsubsection{MCMC results}
\begin{table}[tb]
\footnotesize
\begin{center}
\begin{tabular}{rc ccc c}
&Method & Length of chain & IACT & Acc. ratio& Step  \\\hline
\parbox[t]{2mm}{\multirow{4}{*}{\rotatebox[origin=c]{90}{Setup~1}}}&MALA-within-Gibbs, $q=1$ & $10^4$ & 25 & 0.43 & 0.5 \\
&MALA-within-Gibbs, $q=15$ & $10^4$ & 141 & 0.22 & 0.05\\
&MALA\slash MALA-within-Gibbs, $q=30$ & $10^5$ & 246 & 0.44 & 0.01 \\
&pCN                                      & $10^6$ & 6,102 & 0.24 & 0.01 \\
\hline\hline
\parbox[t]{2mm}{\multirow{4}{*}{\rotatebox[origin=c]{90}{Setup~2}}}&MALA-within-Gibbs, $q=1$ & $10^3$ & 20 & 0.38 & 0.500 \\
&MALA-within-Gibbs, $q=68$ & $10^4$ & 367 & 0.23 & 0.010\\
&MALA\slash MALA-within-Gibbs, $q=136$ &  $10^5$ & 923 & 0.23 & 0.005\\
&pCN & $10^6$ & 28,015 & 0.45 & 0.010 
\end{tabular}
\end{center}
\caption{Summary of simulation results of Setups~1 and~2.}
\label{tab:ResultsSubsurface}
\end{table}

Typical results one can obtain via MCMC are shown in Figure~\ref{fig:MeanAndStandDiv},
\begin{figure}[tb]
\begin{center}
\includegraphics[width=.7\textwidth]{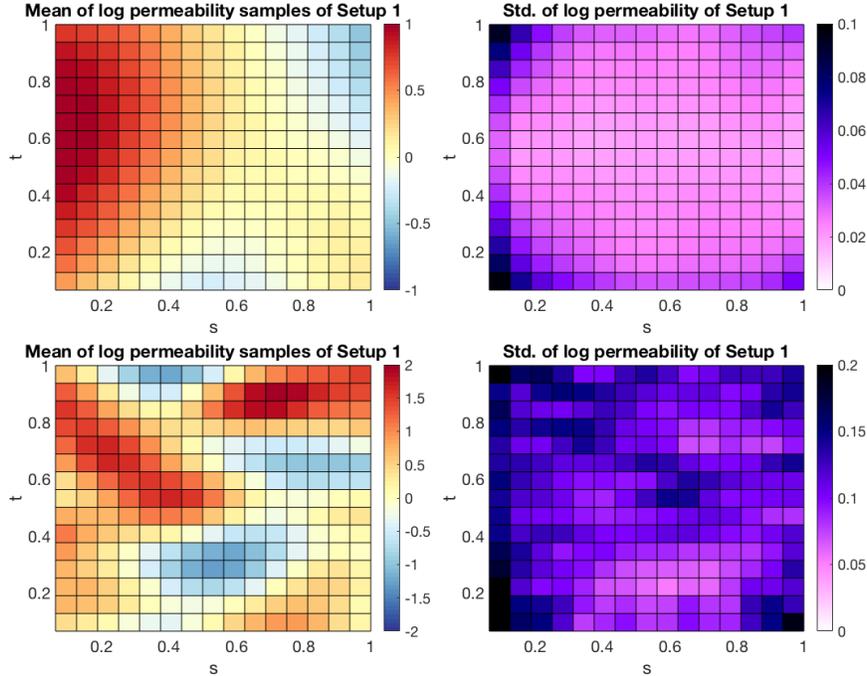}
\caption{Top row: 
approximate posterior mean (left)
and approximate standard deviation (right) 
computed from $N_e=10^4$ MALA-within-Gibbs 
samples with block size $q=1$ for Setup~1.
Bottom row:
approximate posterior mean (left)
and approximate standard deviation (right) 
computed from $N_e=10^3$ MALA-within-Gibbs 
samples with block size $q=1$ for Setup~2.
}
\label{fig:MeanAndStandDiv}
\end{center}
\end{figure}
where we plot an approximation of the posterior mean of the log-permeability
and the approximate posterior standard deviations (on the grid)
computed via MALA-within-Gibbs.
We obtain an approximate posterior mean of the log-permeability, $\bfK$,
from the posterior mean of ${\boldsymbol{\theta}}$,
by mapping ${\boldsymbol{\theta}}$ to $\bfK$ via the inverse of~\eqref{eq:KToTheta}.
The approximate posterior mean of $\bfK$ should be compared to the true
log-permeability in Figure~\ref{fig:priors}.

A summary of the numerical experiments we performed
is provided in Table~\ref{tab:ResultsSubsurface}.
The table lists IACT,
step sizes, and average acceptance ratios for the various MCMC samplers.
The numbers shown are tuned,
in the sense that we only show results for the step size
that leads to minimal IACT (over all step sizes we tried).

We note that the IACT of MALA-within-Gibbs with block size one
is nearly identical for the two problem setups,
indicating that the dimension independence results we obtained under 
more restrictive assumptions may indeed hold in practice.
As in the previous example, we also note that the step size $\tau$
that leads to minimal IACT
decreases as we increase the size of the blocks of MALA-within-Gibbs.
This is further illustrated in Figure~\ref{fig:Subsurface_AccRatio},
\begin{figure}[tb]
\begin{center}
\includegraphics[width=0.8\textwidth]{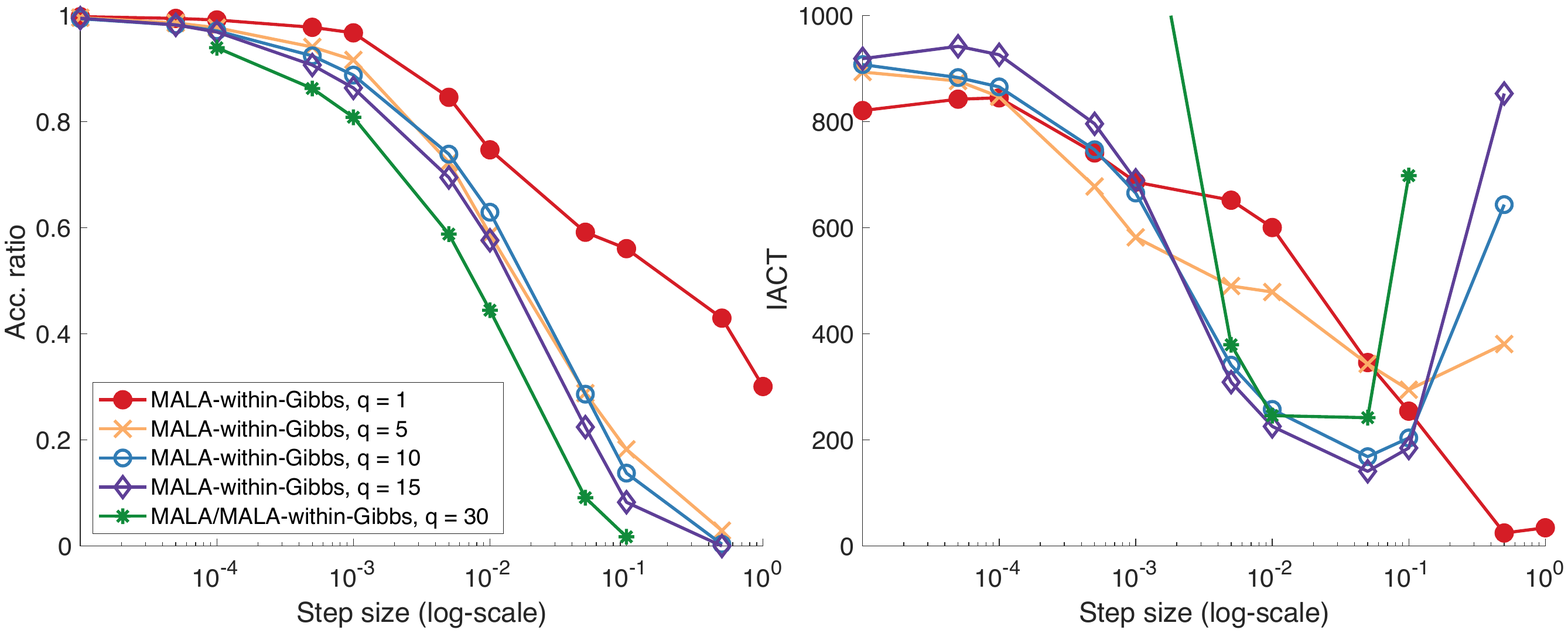}
\caption{Left: average acceptance ratio of MALA-within-Gibbs and MALA,
as a function of step size, for Setup~1.
Right: average IACT of MALA-within-Gibbs and MALA,
as a function of step size, for Setup~1.}
\label{fig:Subsurface_AccRatio}
\end{center}
\end{figure}
where we plot the average acceptance ratio as a function of the step size
for MALA-within-Gibbs (several block sizes) and MALA.
As in the previous example, we note that for a given fixed step size,
the average acceptance ratio increases as we decrease the block size.
The figure also shows IACT as a function of the step size for the various samplers, with optimal step sizes clearly visible.
We note, as before, that the partial updating of MALA-within-Gibbs pushes
the step size that minimizes IACT towards larger values.

We also note that the IACT of pCN is larger than that of MALA,
and that the IACT of pCN and MALA increase with the dimension of ${\boldsymbol{\theta}}$.
The reason that the IACT of pCN is larger than that of MALA may be that 
MALA makes use of gradients of the target, while the pCN proposal does not directly exploit gradient or likelihood information. Increase of the IACT of pCN with dimension is due to the way in which dimension increases
when going from setup~1 to setup~2.
As in the previous example (and as explained above), 
some of the fundamental assumptions that are needed for dimension independence of pCN
are not satisfied when transitioning from setup~1 to setup~2.
For that reason, one cannot expect that pCN be dimension independent in the scenario we consider here.

Finally, we note that the acceptance rate of pCN that minimizes IACT (over the step sizes we tried) in Setup~2
is substantially larger than optimal acceptance rates of RWM (45\% rather than about 20\%).
This is in line with recent numerical experiments and analyses which suggest 
that for Gaussian targets, a good acceptance rate may be near 50\%.
We make no claim, however, that our tuning of pCN is perfect.
We considered a wide range of step sizes and ran pCN chains of length $10^6$ for each choice. One could possibly achieve a slightly better IACT with further tuning, but nonetheless the IACT of pCN can be expected to be significantly larger than that of MALA  or MALA-within-Gibbs.
Moreover, given the overall chain length of only $10^6$, the estimated IACT of $28015$ for pCN may not be entirely precise, but all of our numerical experiments indicate that it is in any case very large.

Recall that a small and dimension-independent IACT of MALA-within-Gibbs does not 
necessarily imply that the algorithm is a computationally efficient sampler;
generating one sample requires several likelihood evaluations
due to the partial updating strategy, as described above.
Estimating the cost per effective sample by~\eqref{eq:CostPerSample},
we see that MALA-within-Gibbs is not an efficient sampler for Setup~1,
but MALA-within-Gibbs with $q=68\times 68$ 
(leading to two blocks and two likelihood evaluations per sample)
is indeed the most effective sampler for Setup~2.
This further illustrates that there is a trade-off between the need to reduce IACT
by using partial updating and the need to keep the cost-per-sample,
which we take to be proportional to the number of blocks, reasonable.
In the future, such issues may be addressed by incorporating the 
partial updating into local likelihood evaluations which
do not require solving the full PDE, similar to the localization of forward dynamics discussed in \cite{LT19}, but such issues are beyond the scope of this paper.

We also note that our numerical experiments are limited
in the sense that we only considered pCN, MALA, and MALA-within-Gibbs.
Other samplers may turn out to be more practical than MALA-within-Gibbs.
Specifically, note that the pressure field is relatively well observed, 
which implies that the posterior differs strongly from the prior in many directions
(high effective dimension). 
This explains, at least in part, why we observe such large IACT for pCN.
Other anisotropic samplers that are modifications of pCN, 
e.g., DILI \cite{CuiEtAl16b}, pCNL \cite{Cotter13}, or generalized pCN \cite{rudolf2018generalization},
might be effective in this problem.
Our goal, however, is not to find the most appropriate sampler for this 
Bayesian inverse problem, but rather to use this example
to demonstrate some of the practical and theoretical aspects of the MALA-within-Gibbs sampler.

\section{Conclusion}
Markov chain Monte Carlo (MCMC) samplers are used to draw samples from a given target probability distribution in a wide array of applications. We have discussed the numerical efficiency of a particular sampler, the MALA-within-Gibbs sampler, when the target distribution exhibits a particular sparse conditional structure. In simple terms, the latter is just a structured conditional independence relationship, or block conditional independence relationship, among the variables of interest.
For Gaussians, sparse conditional structure is equivalent to a (block-)sparse precision matrix.
MALA-within-Gibbs samplers are natural tools to make effective use of
sparse conditional structure for numerical efficiency via a suitable partial updating.
We have shown that the acceptance ratio and step size of MALA-within-Gibbs are independent of the overall
dimension of the problem if the partial updating is chosen to be in line
with the sparse conditional structure of the target distribution. 
Under additional assumptions of block-wise log-concavity, 
we could prove that the convergence rate of MALA-within-Gibbs is independent of dimension.
This suggests that MALA-within-Gibbs can be an effective sampler for high-dimensional problems.

We have investigated the applicability of MALA-within-Gibbs in the context of Bayesian inverse problems,
where we expect to encounter sparse conditional structure.
In many Bayesian inverse problems, we expect that sparse conditional structure can be
anticipated based on the prior distribution and the locality of the likelihood, in appropriate coordinates.
Numerical experiments on two well-known test problems suggest that
our theoretical results are indeed indicative of what to expect in practice,
where the required assumptions may only hold approximately.
For example, in both numerical examples, we could show that measures of performance of the 
MALA-within-Gibbs sampler, e.g., integrated autocorrelation time (IACT), step size, and acceptance ratio,
are indeed independent of the overall dimension of the problem.
Nonetheless, the actual computational cost of MALA-within-Gibbs is
dependent on the problem dimension because the partial updating requires
repeated likelihood evaluations (which are costly) per sample.
Our numerical experiments suggest that there is a trade-off between
additional computational costs due to the partial updating and 
the increase in computational cost due to larger IACT or decreasing step size,
without partial updating.
To keep, for example, IACT small, a large number of partial updates should be used,
but this in turn requires several likelihood evaluations per sample.
This trade-off may not always be easy to resolve in practice.
We have provided examples in which MALA-within-Gibbs leads to significant gains
in the computational cost per (effective) sample,
but we have also encountered examples in which a global update is,
ultimately, the right choice.

\bibliographystyle{plain}
\bibliography{References}

\begin{thebibliography}{10}

\bibitem{Agapiou16}
S.~Agapiou, O.~Papaspiliopoulos, D.~Sanz-Alonso, and A.M. Stuart.
\newblock Importance sampling: computational complexity and intrinsic
  dimension.
\newblock {\em Stat. Sci.}, 32(3):405--431, 2017.

\bibitem{ABN17}
M.~Asch, M.~Bocquet, and M.~Nodet.
\newblock {\em Data assimilation: methods, algorithms and applications}.
\newblock SIAM, 2017.

\bibitem{bardsley2019scalable}
Johnathan Bardsley, Tiangang Cui, Youssef Marzouk, and Zheng Wang.
\newblock Scalable optimization-based sampling on function space.
\newblock {\em arXiv:1903.00870}, 2019.

\bibitem{Bear}
J.~Bear.
\newblock {\em Modeling groundwater flow and pollution}.
\newblock Kluwer, 1990.

\bibitem{B17}
M.~B\'edard.
\newblock Hierarchical models: Local proposal variances for
  {RWM}-within-{G}ibbs and {MALA}-within-{G}ibbs.
\newblock {\em Comput. Stat. Data Anal.}, 109:231 -- 246, 2017.

\bibitem{Beskos13}
A.~Beskos, N.~Pillai, G.~O. Roberts, J.~M. Sanz-Serna, and A.~M. Stuart.
\newblock Optimal tuning of the hybrid {M}onte {C}arlo algorithm.
\newblock {\em Bernoulli}, 19(5):1501--1534, 2013.

\bibitem{Beskos09}
A.~Beskos, G.~O. Roberts, and A.~M. Stuart.
\newblock Optimal scalings for local {M}etropolis-{H}astings chains on
  nonproduct targets in high dimensions.
\newblock {\em Ann. Appl. Probab.}, 19(3):863--898, 2009.

\bibitem{beskos2014stable}
Alexandros Beskos.
\newblock A stable manifold {MCMC} method for high dimensions.
\newblock {\em Statistics \& Probability Letters}, 90:46--52, 2014.

\bibitem{BuiEtAl13}
T.~Bui-Thanh, O.~Ghattas, J.~Martin, and G.~Stadler.
\newblock A computational framework for infinite-dimensional {B}ayesian inverse
  problems. {P}art {I}: {T}he linearized case, with application to global
  seismic inversion.
\newblock {\em SIAM J. Sci. Comput.}, 36(4):A2494--A2523, 2013.

\bibitem{CM16}
N.~Chen and A.~J. Majda.
\newblock Filtering nonlinear turbulent dynamical systems through conditional
  {G}aussian statistics.
\newblock {\em Mon. Weather Rev.}, 144(12):4885--4917, 2016.

\bibitem{CM18}
N.~Chen and A.~J. Majda.
\newblock Conditional {G}aussian systems for multiscale nonlinear stochastic
  systems: Prediction, state estimation and uncertainty quantification.
\newblock {\em Entropy}, 20(7):509, 2018.

\bibitem{CMT18CAM}
N.~Chen, A.~J. Majda, and X.~T. Tong.
\newblock Spatial localization for nonlinear dynamical stochastic models for
  excitable media.
\newblock arXiv:1901.07318.

\bibitem{CMT18UQ}
N.~Chen, A.~J. Majda, and X.~T. Tong.
\newblock Rigorous analysis for efficient statistically accurate algorithms for
  solving {F}okker-{P}lank equations in large dimensions.
\newblock {\em SIAM-ASA J. Uncertain.}, 6(3):1198--1223, 2018.

\bibitem{chen2016accelerated}
Yuxin Chen, David Keyes, Kody~JH Law, and Hatem Ltaief.
\newblock Accelerated dimension-independent adaptive {M}etropolis.
\newblock {\em SIAM Journal on Scientific Computing}, 38(5):S539--S565, 2016.

\bibitem{ChorinHald2013}
A.J. Chorin and O.H. Hald.
\newblock {\em Stochastic tools in mathematics and science}.
\newblock Springer, third edition, 2013.

\bibitem{christensen2005scaling}
O.~F. Christensen, G.~O. Roberts, and J.~S. Rosenthal.
\newblock Scaling limits for the transient phase of local
  {M}etropolis--{H}astings algorithms.
\newblock {\em J. R. Stat. Soc. B}, 67(2):253--268, 2005.

\bibitem{Cotter13}
S.~L. Cotter, G.~O. Roberts, A.~M. Stuart, and D.~White.
\newblock {MCMC} methods for functions: modifying old algorithms to make them
  faster.
\newblock {\em Stat. Sci.}, 28(3):424--446, 2013.

\bibitem{CuiEtAl16b}
T.~Cui, K.~J.~H. Law, and Y.~M. Marzouk.
\newblock Dimension-independent likelihood-informed {MCMC}.
\newblock {\em J. Comput. Phys.}, 304:109--137, 2016.

\bibitem{CuiEtAl16}
T.~Cui, Y.~M. Marzouk, and K.~Willcox.
\newblock Scalable posterior approximations for large-scale {B}ayesian inverse
  problems via likelihood-informed parameter and state reduction.
\newblock {\em J. Comput. Phys.}, 315:363--387, 2016.

\bibitem{DS11}
M.~Dashti and A.~Stuart.
\newblock Uncertainty quantification and weak approximation of an elliptic
  inverse problem.
\newblock {\em SIAM J. Numer. Anal.}, 49(6):2524--2542, 2011.

\bibitem{MGMC91}
Robert~G. Edwards, Jonathan Goodman, and Alan~D. Sokal.
\newblock Multi-grid {M}onte {C}arlo ({II}). two-dimensional xy model.
\newblock {\em Nuclear Physics B}, 354(2):289 -- 327, 1991.

\bibitem{Fox}
C.~Fox and A.~Parker.
\newblock Accelerated {G}ibbs sampling of normal distributions using matrix
  splittings and polynomials.
\newblock {\em Bernoulli}, 23(4B):3711--3743, 2017.

\bibitem{GC11}
M.~Girolami and B.~Calderhead.
\newblock Riemann manifold {L}angevin and {H}amiltonian {M}onte {C}arlo
  methods.
\newblock {\em J. R. Stat. Soc. B}, 73:123--214, 2011.

\bibitem{MJ95}
Jonathan Goodman and Neal Madras.
\newblock Random-walk interpretations of classical iteration methods.
\newblock {\em Linear Algebra and its Applications}, 216:61--79, 1995.

\bibitem{MGMC86}
Jonathan Goodman and Alan~D. Sokal.
\newblock Multigrid {M}onte {C}arlo method for lattice field theories.
\newblock {\em Physical Review Letters}, 56:1015--1018, 1986.

\bibitem{MGMC89}
Jonathan Goodman and Alan~D. Sokal.
\newblock Multigrid {M}onte {C}arlo method. conceptual foundations.
\newblock {\em Physical Review Letters}, 40:2035--2071, Sep 1989.

\bibitem{hairer2014spectral}
M.~Hairer, A.M. Stuart, and S.J. Vollmer.
\newblock Spectral gaps for a {M}etropolis--{H}astings algorithm in infinite
  dimensions.
\newblock {\em Ann. Appl. Probab.}, 24(6):2455--2490, 2014.

\bibitem{JJN13}
A.~A. Johnson, G.~L. Jones, and R.~C. Neath.
\newblock Component-wise {M}arkov chain {M}onte {C}arlo: Uniform and geometric
  ergodicity under mixing and composition.
\newblock {\em Stat. Sci.}, 28(3):360--375, 08 2013.

\bibitem{lan2019adaptive}
Shiwei Lan.
\newblock Adaptive dimension reduction to accelerate infinite-dimensional
  geometric {M}arkov chain {M}onte {C}arlo.
\newblock {\em Journal of Computational Physics}, 392:71--95, 2019.

\bibitem{lauritzen1996graphical}
Steffen~L Lauritzen.
\newblock {\em Graphical models}, volume~17.
\newblock Clarendon Press, 1996.

\bibitem{LT19}
Q.~Liu and X.~T. Tong.
\newblock Accelerating {M}etropolis-within-{G}ibbs sampler with localized
  computations of differential equations.
\newblock arXiv:1906.10541, accepted by Statistcs and Computing.

\bibitem{MSW98}
J.~M\o ller, A.~R. Syversveen, and R.~P. Waagepetersen.
\newblock Log {G}aussian {C}ox processes.
\newblock {\em Scand. J. Stat.}, 25(3):451--482, 1998.

\bibitem{MTM19}
M.~Morzfeld, X.~T. Tong, and Y.~M. Marzouk.
\newblock Localization for {MCMC}: sampling high-dimensional posterior
  distributions with local structure.
\newblock {\em J. Comput. Phys.}, 310:1--28, 2019.

\bibitem{MTWC15}
M.~Morzfeld, X.~Tu, J.~Wilkening, and A.J. Chorin.
\newblock Parameter estimation by implicit sampling.
\newblock {\em Comm. App. Math. Com. Sci.}, 10(2):205--225, 2015.

\bibitem{MarzoukOptimalMaps}
T.A. Moselhy and Y.M. Marzouk.
\newblock Bayesian inference with optimal maps.
\newblock {\em J. Comput. Phys.}, 231:7815--7850, 2012.

\bibitem{NG06}
P.~Neal and G.~O. Roberts.
\newblock Optimal scaling for partially updating {MCMC} algorithms.
\newblock {\em Ann. Appl. Probab.}, 16(2):475--515, 05 2006.

\bibitem{FN16}
R.~A. Norton and C.~Fox.
\newblock Fast sampling in a linear-{G}aussian inverse problem.
\newblock {\em SIAM-ASA J. Uncertain.}, 4:1191--1218, 2016.

\bibitem{Dean2007}
D.~S. Oliver, A.~C. Reynolds, and N.~Liu.
\newblock {\em Inverse theory for petroleum reservoir characterization and
  history matching}.
\newblock Cambridge University Press, 2008.

\bibitem{OPPS16}
M.~Ottobre, Pillai N.S., Pinski F.J., and A.M. Stuart.
\newblock A function space {HMC} algorithm with second order {L}angevin
  diffusion limit.
\newblock {\em Bernoulli}, 22(1):60--106, 2016.

\bibitem{PetraEtAl14}
N.~Petra, J.~Martin, G.~Stadler, and O.~Ghattas.
\newblock A computational framework for infinite-dimensional {B}ayesian inverse
  problems. {P}art {II}: {S}tochastic {N}ewton {MCMC} with application to ice
  sheet flow inverse problems.
\newblock {\em SIAM J. Sci. Comput.}, 36(4):A1525--1555, 2013.

\bibitem{RC15}
S.~Reich and C.~Cotter.
\newblock {\em Probabilistic forecasting and {B}ayesian data assimilation}.
\newblock Cambridge University Press, 2015.

\bibitem{Roberts97}
G.~O. Roberts, A.~Gelman, and W.~R. Gilks.
\newblock Weak convergence and optimal scaling of random walk {M}etropolis
  algorithms.
\newblock {\em Ann. Appl. Probab.}, 7:110--120, 1997.

\bibitem{Roberts98}
G.~O. Roberts and J.~S. Rosenthal.
\newblock Optimal scaling of discrete approximations to {L}angevin diffusions.
\newblock {\em J. R. Stat. Soc. B}, 60:255--268, 1998.

\bibitem{rudolf2018generalization}
Daniel Rudolf and Bj{\"o}rn Sprungk.
\newblock On a generalization of the preconditioned crank--nicolson metropolis
  algorithm.
\newblock {\em Foundations of Computational Mathematics}, 18(2):309--343, 2018.

\bibitem{Sokal1998}
A.~D. Sokal.
\newblock Monte {C}arlo methods in statistical mechanics: foundations and new
  algorithms, 1998.

\bibitem{SBM18}
A.~Spantini, D.~Bigoni, and Y.~M. Marzouk.
\newblock Inference via low-dimensional couplings.
\newblock {\em J. Mach. Learn. Res.}, 19(66):1--71, 2018.

\bibitem{SpantiniEtAl15}
A.~Spantini, A.~Solonen, T.~Cui, J.~Martin, L.~Tenorio, and Y.~M. Marzouk.
\newblock Optimal low-rank approximations of {B}ayesian linear inverse
  problems.
\newblock {\em SIAM J. Sci. Comput.}, 37(6):A2451--A2487, 2015.

\bibitem{V15}
S.~Vollmer.
\newblock Dimension-independent {MCMC} sampling for inverse problems with
  non-{G}aussian priors.
\newblock {\em SIAM-ASA J. Uncertain.}, 3(1):535--561, 2015.

\bibitem{Wolff04}
U.~Wolff.
\newblock Monte {C}arlo errors with less errors.
\newblock {\em Comput. Phys. Commun.}, 156(2):143--153, 2004.

\bibitem{zahm2018certified}
Olivier Zahm, Tiangang Cui, Kody Law, Alessio Spantini, and Youssef Marzouk.
\newblock Certified dimension reduction in nonlinear bayesian inverse problems.
\newblock {\em arXiv:1807.03712}, 2018.

\end{thebibliography}

\appendix

\section{Proofs}
\label{sec:Proofs}
In this appendix, we provide the proofs of
Proposition \ref{prop:BlockAcceptance} and Theorem \ref{thm:unbiased}.
The proof strategy relies on a maximal coupling of a pair of MALA-within-Gibbs iterations, say $\bfx^k$ and $\bfz^k$.
This convergence is independent of the initial distributions of $\bfx^k$ and $\bfz^k$, 
and we pick a $\bfz^0$ as a sample from $\pi(\cdot)$.
Since $\pi(\cdot)$ is the stationary distribution of MALA-within-Gibbs,
the distribution of $\bfz^k$ remains $\pi(\cdot)$ for all $k$, 
while $\bfx^k$ converges to it. 
Briefly, our proofs consist of four steps.
\begin{enumerate}[i)]
\item
In Section~\ref{sec:Coupling},
we discuss coupling of a pair of MALA-within-Gibbs iterations.
It will become important to consider how the coupled MALA-within-Gibbs iterations, 
$\bfx^k$ and $\bfz^k$, are accepted or rejected and we distinguish the cases
(\emph{i}) accept $\bfx^k$ and $\bfz^k$;
(\emph{ii}) accept $\bfx^k$, reject $\bfz^k$;
(\emph{iii}) reject $\bfx^k$, accept $\bfz^k$;
and (\emph{iv}) reject $\bfx^k$ and $\bfz^k$.
\item 
In Section \ref{sec:analysis},
we derive order $\tau$ estimates of the accept\slash reject probabilities,  
summarized by Lemma \ref{lem:couplingprob}. 
Proposition \ref{prop:BlockAcceptance} follows as a corollary.
\item 
In Section~\ref{sec:BlockUpdates},
we study the dynamics of
the ``block-update'' distance $\|\Delta^k_j\|=\|\bfx^{k}_j-\bfz^{k}_j\|$.
\item
In Section~\ref{sec:proofoftheorem}, it is shown that 
$\|\Delta^k_j\|$ defines a contraction
under the additional assumption of block-wise log-concavity,
which implies that the  sequence $(\|\Delta^k_1\|,\cdots,\|\Delta^k_m\|)$
converges to zero uniformly.
Altogether, this proofs Theorem~\ref{thm:unbiased}.
\end{enumerate}

We will use symbols such as  $M,M_1,M_2$ to denote constants that are independent of
the number of blocks, $m$, or the overall dimension, $n$. 
The constants, however, may depend on the dimension of a block, $q$, 
the sparsity parameter $S$ and other parameters defined in associated assumptions. 
The values of the constants $M,M_1,M_2$ may be different in different places.
We re-use $M,M_1$, and $M_2$ to avoid introducing many different symbols.

\label{sec:proof}
\subsection{Coupling block movements}
\label{sec:Coupling}
Recall that $\bfx^k$ denotes iterates of the MALA-within-Gibbs algorithm, with $\bfx^0$ sampled from a certain initial distribution. 
Now we consider another sequence of iterations of MALA-within-Gibbs, denoted by $\bfz^k$. The initial distribution of  $\bfz^0$ is set to be the target distribution $\pi(\cdot)$.
Since $\pi(\cdot)$ is the stationary distribution of MALA-within-Gibbs,
the distribution of $\bfz^k$ is $\pi$ for all $k$.  

To discuss the block updates within each Gibbs iteration, we use 
\[
\bfx^{k,j}=[\bfx^{k+1}_1,\cdots,\bfx^{k+1}_{j-1}, \bfx^{k}_{j}, \cdots, \bfx^{k}_{m}], \quad
\bfz^{k,j}=[\bfz^{k+1}_1,\cdots,\bfz^{k+1}_{j-1}, \bfz^{k}_{j}, \cdots, \bfz^{k}_{m}], 
\]
to denote the state of the $k$-th Gibbs cycle before the MALA update of the $j$-th block. With this notation,  the $i$-th block of $\bfx^{k,j}$, denoted by $\bfx^{k,j}_i$, is $\bfx^{k+1}_i$ if $i<j$ and is $\bfx^{k}_i$ if $i\geq j$. 

The $j$-th block proposal made to $\bfx^{k,j}$ is given by \eqref{eqn:MALApropose}, and likewise for $\bfz^{k,j}$. We consider coupling the random noises in the two proposals, so they share the same $\xi^k_j$. In other words, the proposals are 
\[
\bfxtilde^k_{j}=\bfx^k_{j}+\tau  \bfv_j (\bfx^{k,j})+\sqrt{2\tau } \xi_{j}^k,\quad \bfztilde^k_{j}=\bfz^k_{j}+\tau  \bfv_j (\bfz^{k,j})+\sqrt{2\tau } \xi_{j}^k. 
\]
We combine them with other blocks from $\bfx^{k,j}$ and $\bfz^{k,j}$, and define
\[
\bfxtilde^{k,j}:=[\bfx^{k+1}_1,\cdots,\bfx^{k+1}_{j-1}, \bfxtilde^{k}_{j}, \bfx^k_{j+1},\cdots, \bfx^{k}_{m}],\quad
\bfztilde^{k,j}:=[\bfz^{k+1}_1,\cdots,\bfz^{k+1}_{j-1}, \bfztilde^{k}_{j}, \bfz^k_{j+1},\cdots, \bfz^{k}_{m}].
\]

The probabilities with which the proposals
$\bfxtilde^{k,j}$ and $\bfztilde^{k,j}$ are accepted are $\alpha_j(\bfx^{k,j},\bfxtilde^{k,j})$ and $\alpha_j(\bfz^{k,j},\bfztilde^{k,j})$ respectively. 
The accept\slash reject step is equivalent
to comparing $\alpha_j(\cdot)$ with a random variable,
uniformly distributed over $[0,1]$.
Specifically, let  $U^k_{j,\bfx}$ be a draw from a uniform distirbution.
The proposal $\bfxtilde^{k,j}$ is accepted if $U^k_{j,\bfx}\leq \alpha_j(\bfx^{k,j},\bfxtilde^{k,j})$. Similarly, the proposal $\bfztilde^k_j$ is accepted
if $U^k_{j,\bfz}\leq \alpha_j(\bfz^{k,j},\bfztilde^{k,j})$,
where $U^k_{j,\bfz}$ is a draw from a uniform distribution.
A maximal coupling of the acceptance steps is achieved by setting $U^k_{j,\bfx}=U^k_{j,\bfz}=U^k_{j}$. 
More specifically, there are four scenarios for the acceptance, based on the value of $U_j^k$:
\begin{equation}
\label{eqn:couple}
\begin{cases}
\text{Both accept} \quad &\text{if }U^k_j\leq \alpha_j(\bfx^{k,j},\bfxtilde^{k,j})\wedge\alpha_j(\bfz^{k,j},\bfztilde^{k,j}),\\
\text{Both reject}\quad &\text{if }\alpha_j(\bfx^{k,j},\bfxtilde^{k,j})\vee\alpha_j(\bfz^{k,j},\bfztilde^{k,j})<U^k_j,\\
\text{Accept  $\bfztilde$ reject $\bfxtilde$}\quad &\text{if }\alpha_j(\bfx^{k,j},\bfxtilde^{k,j})\wedge\alpha_j(\bfz^{k,j},\bfztilde^{k,j})<U^k_j\leq \alpha_j(\bfz^{k,j},\bfztilde^{k,j}), \\
\text{Accept  $\bfxtilde$ reject $\bfztilde$}\quad &\text{if }\alpha_j(\bfx^{k,j},\bfxtilde^{k,j})\wedge\alpha_j(\bfz^{k,j},\bfztilde^{k,j})<U^n_j\leq \alpha_j(\bfx^{k,j},\bfxtilde^{k,j}). 
\end{cases}
\end{equation}
Here, ``accept'' means to set $\bfx^{k+1}_j=\bfxtilde^k_j$
and ``reject'' means to set  $\bfx^{k+1}_j=\bfx^k_j$, 
and likewise for $\bfztilde$;
moreover, $a\wedge b:=\min\{a,b\}$, and $a\vee b:=\max\{a,b\}$. 
It is straightforward to verify that marginally $\bfx^{k+1}_j$ and $\bfz^{k+1}_j$ follow the same distribution (as described in the MALA-within-Gibbs algorithm). 

The information before the update of $\bfx^k_j,\bfz^{k}_j$ is given by the filtration
\[
\mathcal{F}_{k,j}=\sigma\{\bfz^0,\bfx^0,\xi^{t-1}_i,U^{t-1}_i,\xi^t_s,U^t_s, t\leq k-1,s\leq j-1,i=1,\ldots,m\}.  
\]
For simplicity we write $\mathcal{F}_{k}:=\mathcal{F}_{k,1}$, which is the information available when the $k$-th Gibbs cycle starts. It is clear that $\bfx^k, \bfz^k\in \mathcal{F}_k$. We denote the conditional expectation (probability) w.r.t. $\mathcal{F}_{k,j}$ and $\mathcal{F}_k$ as $\E_{k,j} (\Prob_{k,j})$ and $\E_{k} (\Prob_{k})$ respectively.

\subsection{Block-acceptance probabilities (proof or Proposition~\ref{prop:BlockAcceptance}) }
\label{sec:analysis}
We first derive order estimates of the accept\slash reject probabilities, with respect to $\tau$,
by calculating the derivatives of the acceptance probability. 
We do so by establishing a Lemma for a function $f$ such that
\[
\alpha_j(\bfx^{k,j},\bfxtilde^{k,j})=f(\bfx^{k,j}, \sqrt{\tau}, \xi^k_j)\wedge 1,
\]
where $\alpha_j(\bfx^{k,j},\bfxtilde^{k,j})$ is the acceptance probability of the
$j$th block in a MALA-within-Gibbs iteration.
The Lemma is the used to prove Proposition~\ref{prop:BlockAcceptance}.
We will   make repeated use of the fact that 
$\frac{\nabla_{\bfx} \pi(\bfx) }{\pi(\bfx)}=\nabla_{\bfx} \log \pi(\bfx)=\bfv (\bfx)$.
We simplify the notation by writing $w=\sqrt{\tau}$.

\begin{lem}
\label{lem:Lipacc}
Suppose Assumptions \ref{aspt:sparse} and  \ref{aspt:regular} hold. Fix a $j\in \{1,\cdots,m\}$
and, for any given $\bfx\in \reals^{n},\xi\in\reals^{q_j},q_j=\text{dim}(\bfx_j),w\in [0,1] $, define
\[
f(\bfx,w,\xi):=\frac{\pi(\bfxtilde) \exp(-\frac1{4w^2}\|\bfx_j-\bfxtilde_j-w^2 \bfv_j(\bfxtilde)\|^2)}
{\pi(\bfx)\exp(-\frac1{4w^2}\|\bfxtilde_j-\bfx_j-w^2 \bfv_j(\bfx)\|^2)}.
\]
Here the blocks of $\bfxtilde$ are given by  $\bfxtilde_j=\bfx_j+w^2 \bfv_j(\bfx)+\sqrt{2}w \xi$
(see Equation~\eqref{eqn:MALApropose}), and $\bfxtilde_i=\bfx_i$ for $i\neq j$. 
Then 
\begin{enumerate}[(1)]
\item If $i\notin \calI_j$,  $\nabla_{\bfx_i}f(\bfx,w,\xi)=\bf{0}.$
\item There is a constant $M$ such that $\|\nabla_{\bfx} f(\bfx,w,\xi)\|\leq w^2M(\|\xi\|^2+1) f(\bfx,w,\xi)$.
\item There is a constant $M$ such that $|\partial_w f(\bfx,w,\xi)|\leq M(\|\xi\|^2+1) f(\bfx,w,\xi). $
\end{enumerate}
\end{lem}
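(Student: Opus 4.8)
The plan is to compute $f$ explicitly and differentiate. First I would write
\[
\log f(\bfx,w,\xi) = \log\pi(\bfxtilde) - \log\pi(\bfx)
- \frac{1}{4w^2}\|\bfx_j - \bfxtilde_j - w^2\bfv_j(\bfxtilde)\|^2
+ \frac{1}{4w^2}\|\bfxtilde_j - \bfx_j - w^2\bfv_j(\bfx)\|^2,
\]
and substitute $\bfxtilde_j - \bfx_j = w^2\bfv_j(\bfx) + \sqrt{2}w\xi$. The second squared term then collapses to $\tfrac14\cdot 2\|\xi\|^2 = \tfrac12\|\xi\|^2$, a constant in $\bfx$. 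In the first squared term, $\bfx_j - \bfxtilde_j - w^2\bfv_j(\bfxtilde) = -\sqrt{2}w\xi - w^2(\bfv_j(\bfx) + \bfv_j(\bfxtilde))$, so after dividing by $4w^2$ one gets $\tfrac12\|\xi\|^2 + \tfrac{w}{\sqrt2}\xi^T(\bfv_j(\bfx)+\bfv_j(\bfxtilde)) + \tfrac{w^2}{4}\|\bfv_j(\bfx)+\bfv_j(\bfxtilde)\|^2$. Hence
\[
\log f = \log\pi(\bfxtilde) - \log\pi(\bfx) - \tfrac{w}{\sqrt2}\,\xi^T(\bfv_j(\bfx)+\bfv_j(\bfxtilde)) - \tfrac{w^2}{4}\|\bfv_j(\bfx)+\bfv_j(\bfxtilde)\|^2.
\]
This is the key reduction: the $\|\xi\|^2$ terms cancel, leaving a quantity that is manifestly $O(w^2)$ when $\bfxtilde = \bfx$ at $w=0$, and whose $\bfx$-dependence is controlled entirely by $\bfv_j$ and $\log\pi$.

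For part (1), I would observe that every term in $\log f$ depends on $\bfx$ only through $\bfx_j$ (directly or via $\bfxtilde$) and through $\bfv_j$ evaluated at $\bfx$ or $\bfxtilde$; since $\bfxtilde$ differs from $\bfx$ only in block $j$, and since by Assumption~\ref{aspt:sparse} $\bfv_j$ depends only on $\bfx_{\calI_j}$, the only surviving dependence is on $\bfx_{\calI_j}$. Thus $\nabla_{\bfx_i}\log f = \bf{0}$ for $i\notin\calI_j$, and since $f = e^{\log f}$, also $\nabla_{\bfx_i} f = \bf{0}$.

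For part (2), I would differentiate $\log f$ with respect to $\bfx$. The term $\log\pi(\bfxtilde) - \log\pi(\bfx)$ contributes $\nabla_{\bfx}\log\pi(\bfxtilde) - \bfv(\bfx)$; writing $\bfxtilde = \bfx + (\bfxtilde_j - \bfx_j)e_j$ and Taylor-expanding, this difference is $\int_0^1 \nabla^2_{\bfx}\log\pi(\cdot)\,(\bfxtilde_j-\bfx_j)\,dt$ acting in the $j$-block direction, which has norm $\le H_v\|\bfxtilde_j - \bfx_j\| \le H_v(w^2 M_v + \sqrt2 w\|\xi\|) \le w\cdot M(\|\xi\|+1)$ using the Hessian bound from Assumption~\ref{aspt:regular}. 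Wait — I need $w^2$, not $w$: here I should instead use the block-diagonal Hessian Lipschitz/bound more carefully, noting that $\nabla_{\bfx}(\log\pi(\bfxtilde)-\log\pi(\bfx))$ has a factor $w$ from $\|\bfxtilde_j - \bfx_j\|$ but the derivative of $\bfxtilde$ w.r.t.\ $\bfx$ also enters; the cleanest route is to bound $\nabla_{\bfx}\log f$ by collecting all terms, each of which carries at least one factor $w$ from the proposal increment and one more factor $w$ because $\log f$ itself vanishes to second order — more precisely each term in $\log f$ is either $O(w^2)$ uniformly or has gradient $O(w^2(\|\xi\|^2+1))$, using $\|\xi\|\le\|\xi\|^2+1$ and products thereof. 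The remaining two terms $-\tfrac{w}{\sqrt2}\xi^T(\bfv_j(\bfx)+\bfv_j(\bfxtilde))$ and $-\tfrac{w^2}{4}\|\bfv_j(\bfx)+\bfv_j(\bfxtilde)\|^2$ differentiate to things bounded by $w\|\xi\|\cdot H_v\cdot(1 + |\partial_{\bfx}\bfxtilde|)$ and $w^2 M_v H_v$ respectively; the first is only $O(w)$, so to get $O(w^2)$ I expect I must combine it with the first term — i.e.\ the correct bookkeeping is that $\nabla_{\bfx}\log\pi(\bfxtilde) - \bfv(\bfx) - \tfrac{w}{\sqrt2}\nabla_{\bfx}[\xi^T(\bfv_j(\bfx)+\bfv_j(\bfxtilde))]$ and the like must be grouped, recognizing that $\bfxtilde_j - \bfx_j = w^2\bfv_j + \sqrt2 w\xi$ is exactly the MALA drift-plus-noise so that the leading $O(w)$ pieces telescope against the Langevin correction, leaving $O(w^2)$. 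Then $\nabla_{\bfx} f = f\nabla_{\bfx}\log f$ gives the stated bound. The main obstacle is precisely this cancellation bookkeeping in part (2): naively each term looks $O(w)$, and one must use the specific algebraic form of the MALA proposal (the Langevin drift is tuned to cancel the first-order term) to extract the extra factor of $w$. Part (3) is then analogous but easier, since $\partial_w$ of the $O(w^2)$-type terms is $O(w)$ — wait, that would give $O(w)$ not $O(1)$; but the stated bound in (3) is $O(1)$ (no $w$ factor), which is consistent: $\partial_w(w^2 g) = 2wg + w^2 g' = O(w)\cdot(\|\xi\|^2+1) \le O(1)(\|\xi\|^2+1)$ on $w\in[0,1]$, and similarly $\partial_w$ of the $\xi^T\bfv$ term is $O(1)\cdot\|\xi\|\cdot(1+|\partial_w\bfxtilde|)$ with $\partial_w\bfxtilde_j = 2w\bfv_j + \sqrt2\xi = O(1)(\|\xi\|+1)$, giving the quadratic-in-$\xi$ bound. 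So for (3) I would just differentiate $\log f$ in $w$, bound each piece using Assumptions~\ref{aspt:sparse}--\ref{aspt:regular} and $w\le1$, and multiply back by $f$.
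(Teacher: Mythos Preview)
Your approach is correct and essentially mirrors the paper's: the same algebraic reduction of $\log f$ (the $\|\xi\|^2$ terms cancel), the same direct differentiation for (3), and the same recognition that (2) hinges on a cancellation between the $O(w)$ part of $\bfv(\bfxtilde)-\bfv(\bfx)$ and the $O(w)$ part of $\nabla_\bfx\bigl[\xi^T(\bfv_j(\bfx)+\bfv_j(\bfxtilde))\bigr]$. Your argument for (1) is in fact cleaner than the paper's: you argue directly that $\log f$ depends on $\bfx$ only through $\bfx_{\calI_j}$ (using that $\log\pi(\bfxtilde)-\log\pi(\bfx)=\int_0^1 \bfv_j(\cdot)^T(\bfxtilde_j-\bfx_j)\,dt$ and $\bfv_j$ depends only on $\bfx_{\calI_j}$), whereas the paper computes the full gradient and multiplies by a test vector.

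The only piece you leave as a plan is the explicit bookkeeping for (2). The paper executes it by adding and subtracting $\nabla_\bfx\bfvtilde_j(\bfx)^T(\bfxtilde-\bfx)$: one half combines with $\bfv(\bfxtilde)-\bfv(\bfx)$ to form a Taylor remainder bounded by $H_v\|\bfxtilde-\bfx\|^2=O(w^2(\|\xi\|^2+1))$, and the other half combines with $-\tfrac{1}{\sqrt2}w(\nabla_\bfx\bfvtilde_j(\bfx)+\nabla_{\bfxtilde}\bfvtilde_j(\bfxtilde))^T\tilde\xi$ so that the surviving pieces are $w^2\nabla_\bfx\bfvtilde_j(\bfx)^T\bfvtilde_j(\bfx)$ and $\tfrac{1}{\sqrt2}w(\nabla_\bfx\bfvtilde_j(\bfx)-\nabla_{\bfxtilde}\bfvtilde_j(\bfxtilde))^T\tilde\xi$, both $O(w^2(\|\xi\|^2+1))$ by the Lipschitz bound on $\nabla\bfv_j$. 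This is exactly the ``Langevin telescoping'' you anticipate, so your plan goes through without modification.
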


\begin{proof}
Note that $f$ can be rewritten as 
\begin{align*}
f&=\frac{\pi(\bfxtilde)}{\pi(\bfx)}\exp\left(-\frac1{4} \|w\bfv_j(\bfx)+w\bfv_j(\bfxtilde)+\sqrt{2}\xi\|^2+\frac12 \|\xi\|^2\right)\\
&=\frac{\pi(\bfxtilde)}{\pi(\bfx)}\exp\left(-\frac1{4} (w^2\|\bfv_j(\bfx)\|^2+w^2\|\bfv_j(\bfxtilde)\|^2+2w^2 \bfv_j(\bfx)^T\bfv_j(\bfxtilde)+2\sqrt{2}w \bfv_j(\bfx)^T\xi+2\sqrt{2}w \bfv_j(\bfxtilde)^T \xi)\right),
\end{align*}
where $\bfv_j$ is a $q_j$-dimensional vector.
The fact that the dimension of $\bfv_j$ is $q_j$,
rather than $n$, makes derivations cumbersome.
We thus pad $\bfv_j$ with zero blocks
to form an $n$-dimensional vector $\bfvtilde_j$,
where the $j$-th block of $\bfvtilde_j$ is equal to $\bfv_j(\bfx)$,
but all other components are zero.
Similarly, we form an $n$-dimensional $\tilde{\xi}$ from a $q_j$-dimensional $\xi$
so that the $j$th block of $\tilde{\xi}$ is equal to $\xi$,
but all other components are zero.
With this notation,
\[
\bfxtilde=\bfx+w^2 \bfvtilde_j(\bfx)+\sqrt{2} w \tilde{\xi}.
\]
The associated Jacobians are given by 
\[
\nabla_\bfx \bfxtilde= \bfI+ w^2 \nabla_\bfx \bfvtilde_j(\bfx),\quad \nabla_\bfx \bfvtilde_j(\bfxtilde)=\nabla_{\bfxtilde} \bfvtilde_j(\bfxtilde)(\bfI+ w^2 \nabla_\bfx \bfvtilde_j(\bfx))
\]
Since, by construction, $\|\bfv_j\|^2=\|\bfvtilde_j\|^2$, we have 
\begin{align*}
\log f&=\log \pi(\bfxtilde)-\log \pi(\bfx)\\
&\quad-\frac1{4} (w^2\|\bfvtilde_j(\bfx)\|^2+w^2\|\bfvtilde_j(\bfxtilde)\|^2+2w^2 \bfvtilde_j(\bfx)^T\bfvtilde_j(\bfxtilde)+2\sqrt{2}w \bfvtilde_j(\bfx)^T\tilde{\xi}+2\sqrt{2}w \bfvtilde_j(\bfxtilde)^T \tilde{\xi}).
\end{align*}
The chain rule then gives the gradient of $f$:
\begin{align}
\notag
\frac{\nabla_\bfx f}{f}=\nabla_\bfx \log f&= (\bfI+w^2 \nabla_\bfx \bfvtilde_j(\bfx) )^T \bfv(\bfxtilde)-\bfv(\bfx)
-\frac12 w^2   (\nabla_\bfx \bfvtilde_j (\bfx) )^T \bfvtilde_j(\bfx)\\
\notag
&\quad-\frac12 w^2   (\bfI+w^2\nabla_\bfx \bfvtilde_j(\bfx))^T(\nabla_{\bfxtilde} \bfvtilde_j (\bfxtilde))^T\bfvtilde_j(\bfxtilde)\\
\notag
&\quad-\frac12w^2  \nabla_\bfx \bfvtilde_j(\bfx)^T \bfvtilde_j(\bfxtilde)-\frac12w^2    (\bfI+w^2 \nabla_\bfx \bfvtilde_j(\bfx))^T \nabla_{\bfxtilde} \bfvtilde_j(\bfxtilde)^T\bfvtilde_j(\bfx)\\
\label{tmp:fexp}
&\quad-\frac12 \sqrt{2}w (\nabla_\bfx \bfvtilde_j(\bfx)^T+(\bfI+w^2 \nabla_\bfx \bfvtilde_j(\bfx))^T \nabla_{\bfxtilde} \bfvtilde_j(\bfxtilde)^T) \tilde{\xi}.
\end{align}

We will first verify claim (1). 
Let $\bfu$ be a vector that has nonzero components only in the $i$-th block.
Then, claim~(1) is equivalent to  showing $\bfu^T (\nabla_\bfx \log f) =0$. 
Note the only non-zero blocks of $\nabla_\bfx \bfvtilde_j(\bfx)$ are in its $j$-th row,  
so that only the $j$th block of $\bfu^T \nabla_\bfx \bfvtilde_j(\bfx)^T$ is nonzero.
This block can be written as 
$\bfu_i^T\nabla_{\bfx_i} \bfvtilde_j(\bfx)^T$. 
By Assumption \ref{aspt:regular}, and since $i\notin \calI_j$,
\[
\nabla_{\bfx_i}\bfvtilde_j(\bfx)=\nabla_{\bfx_i}\nabla_{\bfx_j}\log \pi=\mathbf{0}\quad\Rightarrow\quad \bfu^T\nabla_\bfx \bfvtilde_j(\bfx)^T=\mathbf{0}^T.
\]
Knowing this can simplify the computation of $\bfu^T(\nabla_\bfx \log f)$, since most terms in \eqref{tmp:fexp} include the Jacobian matrix, and they drop out when multiplying with $\bfu^T$. The gradient of $f$ in~\eqref{tmp:fexp}
now simplifies to $\bfu^T(\nabla_\bfx \log f) =\langle\bfu,\bfv(\bfxtilde)-\bfv(\bfx)\rangle$,
where we use $\langle a, b\rangle=a^T b $ to denote an inner product. 
Note that $\bfxtilde$ differs from $\bfx$ only in its $j$th block.
Writing $\Delta=\bfxtilde-\bfx\in \reals^{qm}$, we obtain
\begin{align*}
\langle\bfu,\bfv(\bfxtilde)-\bfv(\bfx)\rangle&=\left\langle\bfu,\int^1_0\nabla_\bfx \bfv(\bfx+s\Delta) ds\Delta\right\rangle=0,
\end{align*}
because $\bfu$ has nonzero entries only outside the $j$-th block, and $\Delta$ has nonzero entries only in $j$-th block, and by Assumption \ref{aspt:regular}, the $(\bfx_i,\bfx_j)$-th block of $\nabla_\bfx \bfv=\nabla_{\bfx}^2 \log \pi$ is zero.

For claim (2), we collect terms of  the same $w$ order in \eqref{tmp:fexp}:
\[
\frac{\nabla_\bfx f}{f}=(\bfv(\bfxtilde)-\bfv(\bfx))-\frac12\sqrt{2}w (\nabla_\bfx \bfvtilde_j(\bfx)+\nabla_{\bfxtilde}\bfvtilde_j(\bfxtilde))^T \tilde{\xi}+w^2 R (\xi, \bfx,w).
\]
By Assumption \ref{aspt:regular}, the elements of all vectors and matrices
appearing above are bounded, so  the residual term can be bounded by $\|R(\xi,\bfx,w)\|\leq (\|\xi\|+1) M_1$ with a constant $M_1$. 
By adding and then subtracting a term, we have the following bound
\begin{align}
\label{tmp:nabf1}
\left\|\frac{\nabla_\bfx f}{f}\right\|\leq &\|\bfv(\bfxtilde)-\bfv(\bfx)-\nabla_{\bfx} \bfvtilde_j(\bfx)^T(\bfxtilde-\bfx)\|+w^2(\|\xi\|+1) M_1\nonumber \\
&+\left\|\frac12\sqrt{2}w (\nabla_{\bfx} \bfvtilde_j(\bfx)+\nabla_{\bfxtilde}\bfvtilde_j(\bfxtilde))^T \tilde{\xi}-\nabla_{\bfx} \bfvtilde_j(\bfx)^T (\bfxtilde-\bfx)\right\|.
\end{align}
To bound the first term on the right hand side of \eqref{tmp:nabf1}, first note  that because the $j$-th row block of $\nabla_{\bfx} \bfv(\bfx)$ and $\nabla_{\bfx} \bfvtilde_j(\bfx)$ are the same, and $(\bfxtilde-\bfx)$ has nonzero entries only in the $j$-th block,  
\[
\nabla_{\bfx} \bfv(\bfx)(\bfxtilde-\bfx)=\nabla_{\bfx} \bfv(\bfx)^T(\bfxtilde-\bfx)=\nabla_{\bfx} \bfvtilde_j(\bfx)^T(\bfxtilde-\bfx),
\]
where the first identity is due to the fact that the Hessian of $\log \pi$, $\nabla_\bfx \bfv(\bfx)$, is symmetric. 
The first term in \eqref{tmp:nabf1} can therefore be bounded by a Taylor expansion of $\bfv$,  followed by Assumption \ref{aspt:regular} and Cauchy inequality, 
\begin{align}
\notag
\|&\bfv(\bfxtilde)-\bfv(\bfx)-\nabla_{\bfx} \bfvtilde_j(\bfx)^T(\bfxtilde-\bfx)\|=\|\bfv(\bfxtilde)-\bfv(\bfx)-\nabla_{\bfx} \bfv(\bfx) (\bfxtilde-\bfx)\|\\
\label{tmp:nabf1sol}
&\leq H_v \|\bfxtilde-\bfx\|^2=H_v \|\bfxtilde_j-\bfx_j\|^2=H_v \|w^2 \bfv_j(\bfx)+\sqrt{2}w\xi\|^2\leq  (4w^2 \|\xi\|^2+2w^4 M_v^2) H_v. 
\end{align}
To bound the second term on the right hand side of \eqref{tmp:nabf1}, note that Assumption \ref{aspt:regular}, combined with the Taylor expansion and Young's inequality, gives
\[
\|\nabla_\bfx \bfvtilde_j(\bfx)-\nabla_{\bfxtilde}\bfvtilde_j(\bfxtilde)\|\leq H_v \|\bfx-\bfxtilde\|\leq (\sqrt{2}w\|\xi\|+w^2 M_v)H_v.
\]
Recall that $\tilde{\xi}$ is equal to $\xi$, padded with zeros so that
 $\|\tilde{\xi}\|=\|\xi\|$.
 Thus, the second term on the right hand side of \eqref{tmp:nabf1} is bounded by 
\begin{align}
\notag
&\left\|\frac12\sqrt{2}w (\nabla_{\bfx} \bfvtilde_j(\bfx)+\nabla_{\bfxtilde}\bfvtilde_j(\bfxtilde))^T \tilde{\xi}-\nabla_{\bfx} \bfvtilde_j(\bfx)^T (\bfxtilde-\bfx)\right\|\\
\notag
&\leq \left\|\sqrt{2}w \nabla_{\bfx} \bfvtilde_j(\bfx)^T \tilde{\xi}-\nabla_{\bfx} \bfvtilde_j(\bfx)^T (\bfxtilde-\bfx)\right\|+\left\|\frac12\sqrt{2}w (\nabla_\bfx \bfvtilde_j(\bfx)-\nabla_{\bfxtilde}\bfvtilde_j(\bfxtilde))^T \tilde{\xi}\right\|\\
\notag
&\leq \left\|\sqrt{2}w \nabla_{\bfx} \bfvtilde_j(\bfx)^T\tilde{\xi}-\nabla_{\bfx} \bfvtilde_j(\bfx)^T(\bfxtilde-\bfx)\right\|+(w^2\|\xi\|^2+w^3 \|\xi\| M_v)H_v\\
\notag
&\leq  \left\|\sqrt{2}w \nabla_{\bfx} \bfvtilde_j(\bfx)^T\tilde{\xi}-\nabla_{\bfx} \bfvtilde_j(\bfx)^T(\sqrt{2}w\tilde{\xi}+w^2 \bfvtilde_j(\bfx))\right\|+(w^2\|\xi\|^2+w^3 \|\xi\| M_v)H_v\\
\label{tmp:nabf2sol}
&\leq w^2 M_vH_v+w^2 H_v \|\xi\|^2 +w^3 \|\xi\| M_vH_v. 
\end{align}
If we replace the terms in \eqref{tmp:nabf1} with the bounds in \eqref{tmp:nabf1sol} and \eqref{tmp:nabf2sol}, and if we use the fact that $w\leq 1$, 
we find that there exists a constant $M$ such that
\[
\left\|\frac{\nabla_\bfx f}{f}\right\|\leq w^2 M(\|\xi\|^2+1),
\]
which is our claim (2). 

For claim (3), 
consider the derivative of $\bfxtilde$ with respect to $w$ 
\[
\|\partial_w \bfxtilde\|= \|2w\bfvtilde_j(\bfx)+\sqrt 2\tilde{\xi}\|\leq \sqrt{2}\|\xi\|+2wM_v,
\]
using, again, that $w\leq 1$.
Moreover, since $\nabla_{\bfxtilde}\bfvtilde_j$ has nonzero blocks only on the $j$-th row,
we have that
\[
\|\nabla_{\bfxtilde}\bfvtilde_j (\bfxtilde)\|\leq \sum_{i\in \calI_j} \|\nabla_{\bfxtilde_i}\bfvtilde_j (\bfxtilde)\|\leq SH_v.
\]
Therefore,
\[
\|\partial_w \bfvtilde_j(\bfxtilde)\|=\|\nabla_{\bfxtilde}\bfvtilde_j (\bfxtilde)\partial_w \bfxtilde\|\leq SH_v(\sqrt{2}\|\xi\|+2wM_v).
\]
Finally, recall that
\begin{align*}
\log f&=\log \pi(\bfxtilde)-\log \pi(\bfx)-\frac1{4} (w^2\|\bfvtilde_j(\bfxtilde)+\bfvtilde_j(\bfx)\|^2+2\sqrt{2}w\tilde{\xi}^T(\bfvtilde_j(\bfxtilde)+\bfvtilde_j(\bfx)) ),
\end{align*}
so that
\begin{align*}
\frac{\partial_w f(\bfx,w,\xi)}{f(\bfx,w,\xi)}&=\left\langle \bfv(\bfxtilde),  2w\bfvtilde_j(\bfx)+\sqrt 2\tilde{\xi}\right\rangle-
\frac12 w\|\bfvtilde_j(\bfxtilde)+\bfvtilde_j(\bfx)\|^2\\
&-\frac12 w^2 (\bfvtilde_j(\bfxtilde)+\bfvtilde_j(\bfx))^T\partial_w \bfvtilde_j(\bfxtilde)-\frac{\sqrt{2}}{2}\tilde{\xi}^T(\bfvtilde_j(\bfxtilde)+\bfvtilde_j(\bfx)) -\frac{\sqrt{2}}{2}w\tilde{\xi}^T\partial_w \bfvtilde_j(\bfxtilde).
\end{align*}
Further, recall that the padding with zeros does not affect the inner product: 
\[
\left\langle \bfv(\bfxtilde),  2w\bfvtilde_j(\bfx)+\sqrt 2\tilde{\xi}\right\rangle=\left\langle \bfvtilde_j(\bfxtilde),  2w\bfvtilde_j(\bfx)+\sqrt 2\tilde{\xi}\right\rangle.
\]
Since every term in the expression of $\frac{\partial_w f(\bfx,w,\xi)}{f(\bfx,w,\xi)}$  is, by Assumption \ref{aspt:regular},  bounded, we can conclude there exists an $M$ such that 
\[
|\partial_w f(\bfx,w,\xi)|\leq  (M+M\|\xi\|^2)f(\bfx,w,\xi).
\]
This leads us to claim (3). 
\end{proof}

\begin{lem}
\label{lem:couplingprob}
Under Assumptions \ref{aspt:sparse} and \ref{aspt:regular}, there exists a constant $M$, such that,  for any $\bfx^{k,j}$ and $\bfz^{k,j}$, coupled by one step of MALA-within-Gibbs algorithm at block $j$ as in \eqref{eqn:couple}, we have that
\begin{enumerate}[(1)]
\item $\Prob_{k,j} (\text{accept both})\geq 1-M \sqrt{\tau },$
\item $\Prob_{k,j} (\text{accept only one})\leq M\tau \sqrt{\sum_{i\in \calI_j} \|\bfx^{k,j}_i-\bfz^{k,j}_i\|^2}. $
\item $\E_{k,j} \|\xi_j^k\|\unit_{\text{accept only one}}\leq M\tau \sqrt{\sum_{i\in \calI_j} \|\bfx^{k,j}_i-\bfz^{k,j}_i\|^2}.$
\end{enumerate}
Proposition \ref{prop:BlockAcceptance} follows immediately by the Tower property since
\[
\E[ \alpha_j(\bfx^k,\bfxtilde^k)]=\E[ \E_{k,j}\alpha_j(\bfx^k,\bfxtilde^k)]\geq \E [\Prob_{k,j} (\text{accept both})].
\]
\end{lem}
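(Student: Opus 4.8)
The plan is to deduce all three claims from moment estimates on the function $f$ of Lemma~\ref{lem:Lipacc}, together with the maximal coupling \eqref{eqn:couple}. Throughout write $w=\sqrt{\tau}$ and restrict to $\tau\in(0,\tau_0]$ with $\tau_0$ a small constant depending only on $q$ and the constants in Assumptions~\ref{aspt:sparse} and~\ref{aspt:regular}. The starting point is that, in the rewritten form used in the proof of Lemma~\ref{lem:Lipacc}, $f(\bfx,0,\xi)=1$ for every $\bfx$ and $\xi$; combining this with Lemma~\ref{lem:Lipacc}(3) via a Gr\"onwall argument on $\partial_w\log f$ gives the pointwise two-sided bound
\[
\exp\!\bigl(-Mw(\|\xi\|^2+1)\bigr)\;\le\; f(\bfx,w,\xi)\;\le\;\exp\!\bigl(Mw(\|\xi\|^2+1)\bigr),
\]
valid for all $\bfx$, all $\xi$, and all $w\in[0,1]$. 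Shrinking $\tau_0$ so that $M\tau_0<1/2$ guarantees that the Gaussian expectations $\E[(\|\xi\|^2+1)\exp(Mw(\|\xi\|^2+1))]$ and $\E[\|\xi\|(\|\xi\|^2+1)\exp(Mw(\|\xi\|^2+1))]$, with $\xi\sim\mathcal N(\mathbf 0,\bfI_{q_j})$, are finite and bounded uniformly over $w\in[0,\sqrt{\tau_0}]$; this is the only place the step-size threshold is used.

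For claim (1): conditioning on $\mathcal F_{k,j}$ and $\xi^k_j$ and then integrating out the independent uniform $U^k_j$ shows $\Prob_{k,j}(\text{accept both})=\E_{k,j}\bigl[\alpha_j(\bfx^{k,j},\bfxtilde^{k,j})\wedge\alpha_j(\bfz^{k,j},\bfztilde^{k,j})\bigr]$. Using $a\wedge b\ge 1-(1-a)-(1-b)$ for $a,b\in[0,1]$, it suffices to bound $\E_{k,j}[\,1-\alpha_j\,]$ for one chain. Since $\alpha_j=f\wedge1$ and $1-(f\wedge1)=(1-f)_+$, the lower bound above yields $(1-f)_+\le 1-\exp(-Mw(\|\xi\|^2+1))\le Mw(\|\xi\|^2+1)$, so that $\E_{k,j}[\,1-\alpha_j\,]\le Mw(\E_{k,j}\|\xi^k_j\|^2+1)=Mw(q_j+1)\le M(q+1)\sqrt{\tau}$. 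This step needs no smallness of $\tau$, and Proposition~\ref{prop:BlockAcceptance} follows exactly as indicated in the statement by the tower property.

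For claims (2) and (3): the same conditioning argument, now using that the conditional probability of ``accept only one'' given $\mathcal F_{k,j},\xi^k_j$ equals $|\alpha_j(\bfx^{k,j},\bfxtilde^{k,j})-\alpha_j(\bfz^{k,j},\bfztilde^{k,j})|$, gives $\Prob_{k,j}(\text{accept only one})=\E_{k,j}[\,|\alpha_j(\bfx^{k,j},\bfxtilde^{k,j})-\alpha_j(\bfz^{k,j},\bfztilde^{k,j})|\,]$ and, after multiplying by $\|\xi^k_j\|$ before integrating, the corresponding identity for claim (3). Since $t\mapsto t\wedge1$ is $1$-Lipschitz, $|\alpha_j(\bfx^{k,j},\cdot)-\alpha_j(\bfz^{k,j},\cdot)|\le|f(\bfx^{k,j},w,\xi^k_j)-f(\bfz^{k,j},w,\xi^k_j)|$. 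Interpolating along $\bfy(t)=(1-t)\bfx^{k,j}+t\bfz^{k,j}$, using Lemma~\ref{lem:Lipacc}(1) to drop the blocks outside $\calI_j$, and applying Cauchy--Schwarz gives
\[
|f(\bfx^{k,j},w,\xi)-f(\bfz^{k,j},w,\xi)|\;\le\;\Bigl(\sup_{t\in[0,1]}\|\nabla_\bfx f(\bfy(t),w,\xi)\|\Bigr)\Bigl(\sum_{i\in\calI_j}\|\bfx^{k,j}_i-\bfz^{k,j}_i\|^2\Bigr)^{1/2}.
\]
By Lemma~\ref{lem:Lipacc}(2) and the upper bound above, $\|\nabla_\bfx f(\bfy(t),w,\xi)\|\le w^2M(\|\xi\|^2+1)f(\bfy(t),w,\xi)\le\tau M(\|\xi\|^2+1)\exp(Mw(\|\xi\|^2+1))$; taking $\E_{k,j}$ and invoking the uniform boundedness of the relevant Gaussian expectation gives claim (2), and carrying the extra factor $\|\xi^k_j\|$ through the same computation gives claim (3). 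All constants depend only on $q$, $S$, and the constants in the assumptions, never on $m$ or $n$.

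The main obstacle is making the pointwise bound $f\le\exp(Mw(\|\xi\|^2+1))$ genuinely usable: it must be integrated against a standard Gaussian, which is finite only once $Mw<1/2$, so one is forced to introduce an $m$-independent step-size threshold $\tau_0$ and then check carefully that every constant appearing along the way stays independent of $m$ and $n$. The remaining ingredients---the interpolation, the Cauchy--Schwarz step, and the bookkeeping for blocks outside $\calI_j$---are routine once Lemma~\ref{lem:Lipacc} is in hand.
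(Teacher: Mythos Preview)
Your argument is correct, but it takes a different route from the paper in the handling of the truncation $f\wedge1$. The paper never invokes a global exponential bound on $f$. Instead, for claim~(1) it introduces auxiliary points $w_1\le w_2$ in $[0,\sqrt{\tau}]$ chosen so that $f(\bfx,y,\xi)\le1$ for $y\in[0,w_1]\cup[w_2,\sqrt{\tau}]$ and writes $1-1\wedge f$ as an integral of $\partial_y f$ over precisely these intervals; on them, Lemma~\ref{lem:Lipacc}(3) gives $|\partial_y f|\le M_1(1+\|\xi\|^2)$ directly because $f\le1$, so one obtains $1-1\wedge f\le\sqrt{\tau}M_1(1+\|\xi\|^2)$ without any Gr\"onwall step. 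For claims~(2) and~(3) the paper uses the analogous level-set trick in the spatial variable: assuming WLOG $f(\bfx,\sqrt{\tau},\xi)<1$, it defines $s^*$ to be the first $s\in[0,1]$ at which $f(\bfy_s,\sqrt{\tau},\xi)$ reaches $1$ along the interpolation $\bfy_s$, observes that $|f(\bfx)\wedge1-f(\bfz)\wedge1|\le|f(\bfy_0)-f(\bfy_{s^*})|$, and integrates $\nabla_\bfx f$ only over $[0,s^*]$, where again $f\le1$ so Lemma~\ref{lem:Lipacc}(2) gives $\|\nabla f\|\le M_2\tau(\|\xi\|^2+1)$ directly.

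The trade-off is this: your Gr\"onwall bound $f\le\exp(Mw(\|\xi\|^2+1))$ is conceptually simpler and makes the argument mechanical once Lemma~\ref{lem:Lipacc} is available, but it forces the integrability constraint $M\sqrt{\tau_0}<1/2$ and hence a step-size threshold that is absent in the paper's proof of this lemma (the paper's bounds hold for all $\tau\le1$). Since Theorem~\ref{thm:unbiased} ultimately requires small $\tau$ anyway, this costs nothing for the downstream application, but it does slightly weaken the lemma as stated. The paper's level-set device is the sharper move: by restricting integration to the region $\{f\le1\}$, the multiplicative estimates of Lemma~\ref{lem:Lipacc} become additive with constant $1$, and no exponential ever appears.
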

\begin{proof}
Since here we are concerned with updating one block, for simplicity of the  notation, we write $\bfx=\bfx^{k,j}, \bfxtilde=\bfxtilde^{k,j}, \bfz=\bfz^{k,j}, \bfztilde=\bfztilde^{k,j}$. Let ``\text{reject }$\bfxtilde$" denote the event that the proposal of $\bfxtilde$ is rejected. We will show below that $\Prob_{k,j} (\text{reject }\bfxtilde)\leq \frac12M \sqrt{\tau }$ for a certain $M$. Thus, 
\[
\Prob_{k,j} (\text{accept both})\geq 1-\Prob_{k,j} (\text{reject }\bfxtilde)-\Prob_{k,j} (\text{reject }\bfztilde)\geq 1-M\sqrt{\tau }. 
\]
Let $f(\bfx,w,\xi)$ be as defined  by  Lemma \ref{lem:Lipacc}. 
Note that 
\[
\Prob_{k,j} (\text{reject }\bfxtilde)=\E_{k,j}  (1-f(\bfx,\sqrt{\tau },\xi^k_j)\wedge 1). 
\]
Next we bound $1-f(\bfx,\sqrt{\tau },\xi^k_j)\wedge 1$. Note that $f(\bfx,0,\xi^k_j)=1$.
For each fixed $\bfx,\xi^k_j$, if $f(\bfx,y,\xi^k_j)\leq 1$ for all $y\in (0,\sqrt{\tau }]$, let $w_1=w_2=\sqrt{\tau}$; otherwise, let
\[
w_1=\inf\{y\in [0,\sqrt{\tau }]: f(\bfx,y,\xi^k_j)> 1\},\quad w_2=\sup\{y\in[w_1,\sqrt{\tau} ]: f(\bfx,y,\xi^k_j)\geq 1\}. 
\]
One can check that the following holds with either $f(\bfx,\sqrt{\tau },\xi^k_j)<1$ or $f(\bfx,\sqrt{\tau },\xi^k_j)\geq 1$
\begin{equation}
\label{tmp:1wedge}
1\wedge f(\bfx,\sqrt{\tau },\xi^k_j)-f(\bfx,0,\xi^k_j)=\int^{\sqrt{\tau }}_{w_2}  \partial_y f(\bfx,y,\xi^k_j) dy+\int^{w_1}_0 \partial_y f(\bfx,y,\xi^k_j)dy.
\end{equation}
Note also that by the definition of $w_1$ and $w_2$, 
\[
f(\bfx,y,\xi^k_j)\leq 1,\quad \forall y\in [0,w_1]\cup[w_2,\sqrt{\tau}].
\] 
Thus, by Lemma \ref{lem:Lipacc} (claim~(3)), there is a constant $M_1$ so that:
\[
|\partial_y f(\bfx,y,\xi^k_j)|\leq (M_1+M_1\|\xi^k_j\|^2)|f(\bfx,y,\xi^k_j)|\leq M_1+M_1\|\xi^k_j\|^2,\quad y\in [0,w_1]\cup[w_2,\sqrt{\tau }].
\]
Applying this upper bound to the integrant in \eqref{tmp:1wedge}, 
and using the fact that $f(\bfx,0,\xi^k_j)=1$, we obtain:
\[
1-1\wedge f(\bfx,\sqrt{\tau },\xi^k_j)\leq \sqrt{\tau } M_1(1+\|\xi^k_j\|^2). 
\]
Recall that $\xi^k_j$ is a sample of a standard normal variable
whose dimension is less than $q$ by Assumption \ref{aspt:sparse}
($q$ being the dimension of one block). 
Consequentially, $\E_{k,j}  (1-f(\bfx,\sqrt{\tau },\xi^k_j)\wedge 1)\leq  \sqrt{\tau } M_1(1+q)$, which leads to our first claim.

As for the second claim, given $\bfx,\bfz, \xi^k_j$, the probability of having only one proposal being accepted is $|f(\bfx,\sqrt{\tau },\xi^k_j)\wedge 1-f(\bfz,\sqrt{\tau },\xi^k_j)\wedge 1|$.  Let $\bfy_s$ be the linear interpolation between $\bfx$ and $\bfz$, so that $\bfy_0=\bfx$ and $\bfy_1=\bfz$. If $f(\bfx,\sqrt{\tau },\xi^k_j)$ and $f(\bfz,\sqrt{\tau },\xi^k_j)$ are both above $1$, then our claim holds trivially. 
Otherwise, either
$f(\bfx,\sqrt{\tau },\xi^k_j)$ or $f(\bfz,\sqrt{\tau },\xi^k_j)$ 
is less than $1$. Assume, without loss of generality, that $f(\bfx,\sqrt{\tau },\xi^k_j)<1$
and define 
\[
s^*=\begin{cases}\inf\{s\in [0,1]: f(\bfy_s,\sqrt{\tau },\xi^k_j)\geq 1\},\quad  &f(\bfz,\sqrt{\tau },\xi^k_j)>1;\\
1,\quad &\text{else}.
\end{cases}
\] 
Then, for $s\in [0,s^*]$, $f(\bfy_s,\sqrt{\tau },\xi^k_j)\leq 1$. Also note that, by Lemma \ref{lem:Lipacc} (claim(1)), 
\[
f(\bfx, \sqrt{\tau }, \xi^k_j)=f([\bfx_1,\ldots \bfx_m], \sqrt{\tau }, \xi^k_j)
\]
has dependence on $\bfx_i$  only if $i\in \calI_j$. 
Therefore, by Lemma \ref{lem:Lipacc} (2), there is a constant $M_2$, such that 
\begin{align*}
|f(\bfx,\sqrt{\tau },\xi^k_j)\wedge 1-&f(\bfz,\sqrt{\tau },\xi^k_j)\wedge 1|\leq |f(\bfy_0,\sqrt{\tau },\xi^k_j)-f(\bfy_{s^*},\sqrt{\tau },\xi^k_j)|\\
&=\left|\int^{s_*}_0   \nabla_{\bfy_s} f(\bfy_s,\sqrt{\tau },\xi^k_j) (\bfx^{k,j}-\bfz^{k,j})ds\right|\\
&\leq  \sqrt{\sum_{i\in\calI_j} \|\bfx^{k,j}_i-\bfz^{k,j}_i\|^2} \int^{s_*}_0   \|\nabla_{\bfy_s} f(\bfy_s,\sqrt{\tau },\xi^k_j)\|ds\\
&\leq \sqrt{\sum_{i\in\calI_j} \|\bfx^{k,j}_i-\bfz^{k,j}_i\|^2} \int^1_0  \frac{\|\nabla_{\bfy_s} f(\bfy_s,\sqrt{\tau },\xi^k_j)\|}{f(\bfy_s)}ds\\
&\leq\sqrt{\sum_{i\in\calI_j} \|\bfx^{k,j}_i-\bfz^{k,j}_i\|^2} M_2(\|\xi_j^k\|^2+1)\tau .
\end{align*}
Averaging the above expression over all possible outcomes of $\xi_j^k$
proves our second claim.

Similarly, for the third claim, we have
\begin{align*}
&\E_{k,j}|f(\bfx,\sqrt{\tau },\xi^k_j)\wedge 1-f(\bfz,\sqrt{\tau },\xi^k_j)\wedge 1| \|\xi^k_j\|\\
&\leq \E_{k,j}\sqrt{\sum_{i\in \calI_j} \|\bfx^{k,j}_i-\bfz^{k,j}_i\|^2} M_2(\|\xi_j^k\|^3+\|\xi_j^k\|)\tau.
\end{align*}
The upper bound in claim~(3) can be obtained by averaging over all $\xi_j^k$. 
\end{proof}

\subsection{Block-distance updates}
\label{sec:BlockUpdates}
We focus on the difference between the 
pair of coupled MALA-within-Gibbs iterations and define 
\begin{equation}
\label{tmp:Deltaremind}
\Delta^k_j:=\bfx^{k}_j-\bfz^{k}_j, \quad \Delta^{k,j}=\bfx^{k,j}-\bfz^{k,j},
\end{equation}
where each block of $\Delta^{k,j}$ is given by
\begin{equation*}
\Delta^{k,j}_i =\bfx^{k,j}_i-\bfz^{k,j}_i=\begin{cases}
\Delta^{k+1}_i,\quad &i\leq j-1;\\
\Delta^{k}_i,\quad &i\geq j.
\end{cases}
\end{equation*}
We first analyze the dynamics of $\|\Delta^k_j\|$
and use the results to prove Theorem \ref{thm:unbiased} 
under additional assumptions of block-wise log-concavity.
\begin{prop}
\label{prop:onestep}
Under Assumptions \ref{aspt:sparse} and \ref{aspt:regular},  there is a constant $M$ such that after the $j$-th MALA-within-Gibbs step at the $k$-th iteration, 
\begin{equation}
\label{eqn:proponeclaim}
\E_{k,j}(\|\Delta^{k+1}_j \|) \leq \|\Delta^{k}_j+\tau  \bfv_j (\bfx^{k,j})-\tau  \bfv_j (\bfz^{k,j})\|+M\tau ^{\frac 32}\sum_{i\in \calI_j} \|\Delta^{k,j}_i\|.
\end{equation}
\end{prop}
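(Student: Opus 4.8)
The plan is to condition on the filtration $\mathcal{F}_{k,j}$ and to split $\E_{k,j}\|\Delta^{k+1}_j\|$ over the four mutually exclusive coupling events in \eqref{eqn:couple}: both accept, both reject, accept only $\bfxtilde$, accept only $\bfztilde$. The key observation is that on the "both accept" event the shared Gaussian increment $\sqrt{2\tau}\,\xi^k_j$ cancels, so that $\Delta^{k+1}_j = \Delta^{k}_j+\tau\bfv_j(\bfx^{k,j})-\tau\bfv_j(\bfz^{k,j})=:\Delta^{*}$, which is $\mathcal{F}_{k,j}$-measurable and whose norm is exactly the leading term on the right-hand side of \eqref{eqn:proponeclaim}. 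On each of the other three events one of $\bfx^{k+1}_j,\bfz^{k+1}_j$ keeps its old value, so that $\|\Delta^{k+1}_j\|\le \|\Delta^{k}_j\|$ plus — only on the "accept only one" event — an additional $\tau M_v+\sqrt{2\tau}\,\|\xi^k_j\|$ from the single accepted proposal, using $\|\bfv_j\|\le M_v$ from Assumption~\ref{aspt:regular}.

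Next I would relate $\|\Delta^{k}_j\|$ to $\|\Delta^{*}\|$. Since $\Delta^{*}-\Delta^{k}_j=\tau(\bfv_j(\bfx^{k,j})-\bfv_j(\bfz^{k,j}))$ and $\bfv_j$ depends only on the blocks in $\calI_j$ (Assumption~\ref{aspt:sparse}), an interpolation argument together with $\|\nabla_{\bfx_i}\bfv_j\|\le H_v$ gives $\bigl|\,\|\Delta^{k}_j\|-\|\Delta^{*}\|\,\bigr|\le \tau H_v\sum_{i\in\calI_j}\|\Delta^{k,j}_i\|$. Writing $\E_{k,j}\|\Delta^{k+1}_j\|=\|\Delta^{*}\|+\E_{k,j}\bigl[(\|\Delta^{k+1}_j\|-\|\Delta^{*}\|)\bigr]$ and bounding the correction on the three non-"both accept" events, each term is a product of a probability — which is $O(\sqrt\tau)$ by Lemma~\ref{lem:couplingprob}(1) for "not both accept", or $O(\tau)\sqrt{\sum_{i\in\calI_j}\|\Delta^{k,j}_i\|^2}$ by Lemma~\ref{lem:couplingprob}(2) for "accept only one" — and a factor of size $\tau\sum_{i\in\calI_j}\|\Delta^{k,j}_i\|$ (from $|\|\Delta^{k}_j\|-\|\Delta^{*}\||$) or $\tau M_v$ (from the proposal drift); the Gaussian contribution $\sqrt{2\tau}\,\E_{k,j}[\|\xi^k_j\|\unit_{\text{accept only one}}]$ is controlled directly by Lemma~\ref{lem:couplingprob}(3).

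It then remains to collect estimates: every error term is at most a constant times $\tau^{3/2}\sqrt{\sum_{i\in\calI_j}\|\Delta^{k,j}_i\|^2}$ or $\tau^{2}\sqrt{\sum_{i\in\calI_j}\|\Delta^{k,j}_i\|^2}$, so using $\tau\le 1$ and $\sqrt{\sum a_i^2}\le\sum|a_i|$ all of them are dominated by $M\tau^{3/2}\sum_{i\in\calI_j}\|\Delta^{k,j}_i\|$, which yields \eqref{eqn:proponeclaim}. The main obstacle is the bookkeeping needed to ensure $\|\Delta^{*}\|$ appears with coefficient exactly $1$: this works only because the $\sqrt{2\tau}$-size noise cancels on the dominant "both accept" event, and because the gap $|\|\Delta^{k}_j\|-\|\Delta^{*}\||$ is itself $O(\tau)$ rather than $O(1)$, so that when multiplied by the $O(\sqrt\tau)$ rejection probability from Lemma~\ref{lem:couplingprob}(1) it only contributes at order $\tau^{3/2}$ — precisely what is needed to preserve the contraction estimate used later in Theorem~\ref{thm:unbiased}.
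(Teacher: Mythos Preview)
Your proposal is correct and follows essentially the same route as the paper: the four-way case split on \eqref{eqn:couple}, the cancellation of $\sqrt{2\tau}\,\xi^k_j$ on ``both accept'' to produce the leading $\|\Delta^{*}\|$ term, the bound $\bigl|\,\|\Delta^{k}_j\|-\|\Delta^{*}\|\,\bigr|\le \tau H_v\sum_{i\in\calI_j}\|\Delta^{k,j}_i\|$ paired with the $O(\sqrt\tau)$ ``reject at least one'' probability from Lemma~\ref{lem:couplingprob}(1), and the use of Lemma~\ref{lem:couplingprob}(2)--(3) to control $\tau M_v\cdot\Prob(\text{accept only one})$ and $\sqrt{2\tau}\,\E_{k,j}[\|\xi^k_j\|\unit_{\text{accept only one}}]$. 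The paper organizes the same argument by first writing the weighted sum over the four events explicitly and then algebraically regrouping it as $\|\Delta^{*}\|$ plus two remainder terms (its \eqref{tmp:decomp1}--\eqref{tmp:decomp2}), which are exactly your ``correction'' pieces.
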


\begin{proof}
Recall that the coupling of the acceptance steps has four scenarios
(both accepted, both rejected, one rejected the other accepted). 
If both proposals are rejected, then 
\[
\|\Delta^{k+1}_j\|=\|\Delta^{k}_j\|.
\]
If both proposals are accepted, then 
\[
\|\Delta^{k+1}_j\|=\|\bfx^{k}_j+\tau \bfv_j (\bfx^{k,j})-\bfz^{k}_j-\tau \bfv_j (\bfz^{k,j})\|.
\] 
If only $\tilde{\bfx}^{k,j} $ is accepted, then
\begin{align*}
\|\Delta^{k+1}_j\|&=\|\bfx^k_{j}+\tau \bfv_j (\bfx^{k,j})+\sqrt{2\tau} \xi_{j}^k-\bfz^k_j\|\\
&\leq \|\Delta^{k}_j\|+\tau M_v+\sqrt{2\tau}\|\xi^k_j\|
\end{align*}
Likewise, if only $\bfztilde^k_{j} $ is accepted, then
\[
\|\Delta^{k+1}_j\|\leq  \|\Delta^{k}_j\|+\tau M_v+\sqrt{2\tau}\|\xi^k_j\|.
\]
Summing over all four scenarios, we obtain
\begin{align}
\notag
\E_{k,j} \|\Delta^{k+1}_j\|&\leq \|\Delta_j^k\| \Prob_{k,j}(\text{reject both})+\|\Delta^{k}_j+\tau \bfv_j (\bfx^{k,j})-\tau \bfv_j (\bfz^{k,j})\| \Prob_{k,j}(\text{accept both})\\
\notag
&\quad+ \E_{k,j}(\|\Delta^{k}_j\|+\tau M_v+\sqrt{2\tau}\|\xi^k_j\|)\mathbf{1}_\text{accept $\bfxtilde$ or $\bfztilde$}\\
\notag
&=\|\Delta^{k}_j+\tau \bfv_j (\bfx^{k,j})-\tau \bfv_j (\bfz^{k,j})\|\\
\notag
&\quad+(\|\Delta^k_j\|-\|\Delta^{k}_j+\tau \bfv_j (\bfx^{k,j})-\tau \bfv_j (\bfz^{k,j})\|)\Prob_{k,j}(\text{reject at least one})\\
\label{tmp:deltadecomp}
&\quad+\E_{k,j}(\tau M_v+\sqrt{2\tau}\|\xi^k_j\|)\mathbf{1}_\text{accept $\bfxtilde$ or $\bfztilde$}.
\end{align}
To prove the Proposition, it suffices to show that, for a constant $M$, the following two 
inequalities hold:
\begin{equation}
\label{tmp:decomp1} 
(\|\Delta^k_j\|-\|\Delta^{k}_j+\tau \bfv_j (\bfx^{k,j})-\tau \bfv_j (\bfz^{k,j})\|)\Prob_{k,j}(\text{reject at least one})
\leq \frac12 M\tau ^{\frac 32}\sum_{i\in \calI_j} \|\Delta^{k,j}_i\|,
\end{equation}
\begin{equation}
\label{tmp:decomp2} 
\E_{k,j}(\tau M_v+\sqrt{2\tau}\|\xi^k_j\|)\mathbf{1}_\text{accept $\bfxtilde$ or $\bfztilde$}
\leq \frac12 M\tau ^{\frac 32}\sum_{i\in \calI_j} \|\Delta^{k,j}_i\|.
\end{equation}
The reason is that, if the above inequalities hold,
we can use in \eqref{tmp:decomp1} and \eqref{tmp:decomp2} in $\eqref{tmp:deltadecomp}$, 
to obtain \eqref{eqn:proponeclaim}.

We will first show \eqref{tmp:decomp2}. Note that, by Lemma \ref{lem:couplingprob} claim(2), there is a constant $M_1$ such that 
\begin{equation}
\label{tmp:79}
\Prob_{k,j}(\text{accept $\bfxtilde$ or $\bfztilde$} )\leq M_1 \tau \sqrt{\sum_{i\in\calI_j} \|\bfx^{k,j}_i-\bfz^{k,j}_i\|^2}\leq 
 M_1 \tau \sum_{i\in\calI_j} \|\bfx^{k,j}_i-\bfz^{k,j}_i\|= M_1 \tau \sum_{i\in\calI_j} \|\Delta^{k,j}_i\|. 
\end{equation}
By Lemma \ref{lem:couplingprob} claim(3) 
\begin{equation}
\label{tmp:710}
\E_{k,j}\|\xi^k_j\|\mathbf{1}_\text{accept $\bfxtilde$ or $\bfztilde$}\leq M_1 \tau \sum_{i\in\calI_j} \|\bfx^{k,j}_i-\bfz^{k,j}_i\|= M_1 \tau \sum_{i\in\calI_j} \|\Delta^{k,j}_i\|.
\end{equation}
Since we assume that $\tau\leq1$, we can plug \eqref{tmp:79} and \eqref{tmp:710} into the left hand side of \eqref{tmp:decomp2}, and find a  constant $M_2$ so that 
\begin{align*}
\E_{k,j}(\tau M_v+\sqrt{2\tau}\|\xi^k_j\|)\mathbf{1}_\text{accept $\bfxtilde$ or $\bfztilde$}
&=M_v \tau \Prob(\text{accept $\bfxtilde$ or $\bfztilde$})+\sqrt{2\tau}\E_{k,j}\|\xi^k_j\|\mathbf{1}_\text{accept $\bfxtilde$ or $\bfztilde$}\\
&\leq \tau^{\frac32}M_2\sum_{i\in \calI_j} \|\bfx^{k,j}_i-\bfz^{k,j}_i\|= \tau^{\frac32}M_2\sum_{i\in \calI_j} \|\Delta^{k,j}_i\|.
\end{align*}
This proves \eqref{tmp:decomp2}.

We now use~\eqref{tmp:decomp2} to prove  \eqref{tmp:decomp1}. 
By the triangular inequality and Assumptions \ref{aspt:sparse} and \ref{aspt:regular},
we have that
\begin{equation}
\label{temp:item2}
\left|\|\Delta^{k}_j+\tau \bfv_j (\bfx^{k,j})-\tau \bfv_j (\bfz^{k,j})\|-\|\Delta^k_j\|\right|\leq \tau\|\bfv_j (\bfx^{k,j})-\bfv_j (\bfz^{k,j})\|\leq \tau H_v \sum_{i\in \calI_j} \|\Delta^{k,j}_i\|.
\end{equation}
Moreover, by Lemma \ref{lem:couplingprob} (1), there is a constant $M_3$ such that 
\begin{equation}
\label{tmp:711}
\Prob_{k,j}(\text{reject at least one})\leq M_3\sqrt{\tau}.
\end{equation}
The product of  \eqref{temp:item2} and \eqref{tmp:711} leads to \eqref{tmp:decomp1}:
\begin{align*}
(\|\Delta^k_j\|-\|\Delta^{k}_j+\tau \bfv_j (\bfx^{k,j})-\tau \bfv_j (\bfz^{k,j})\|)\Prob_{k,j}(\text{reject at least one})\leq H_vM_3\tau^{\frac32} \sum_{i\in \calI_j} \|\Delta^{k,j}_i\|.
\end{align*}
This concludes our proof.
\end{proof}

\subsection{Contraction with block-wise log-concavity (Proof of Theorem~\ref{thm:unbiased})}
\label{sec:ProofOfTheorem}
We complete the proof of Theorem~\ref{thm:unbiased}
by showing that the assumption of block-log-concavity
implies that the coupled pair of MALA-within-Gibbs defines a contraction.
\begin{lem}
\label{lem:sparseHess}
Under Assumptions \ref{aspt:sparse} and \ref{aspt:regular}, for all coupled pairs $\bfx,\bfz$,
and independently of the iteration number $k$ (which we drop for convenience):
\[
\|\nabla^2_{\bfx_i,\bfx_j}\log \pi (\bfx)-\nabla^2_{\bfz_i,\bfz_j}\log \pi (\bfz)\|=
\|\nabla_{\bfx_i}\bfv_j (\bfx)-\nabla_{\bfz_i}\bfv_j (\bfz)\|
\leq H_v\sum_{l\in \calI_{i,j}}  \|\bfx_l-\bfz_l\|,
\]
where $\calI_{i,j}=\calI_i\cap\calI_j$  has cardinality $|\calI_{i,j}|\leq S$.
\end{lem}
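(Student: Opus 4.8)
The plan is to reduce the estimate to a block-by-block telescoping argument in which only the blocks indexed by $\calI_{i,j}=\calI_i\cap\calI_j$ are ever varied. The first equality is just the definition $\bfv_j=\nabla_{\bfx_j}\log\pi$, so it suffices to bound $\|\nabla_{\bfx_i}\bfv_j(\bfx)-\nabla_{\bfx_i}\bfv_j(\bfz)\|$ uniformly over all $\bfx,\bfz\in\reals^n$ (in particular independently of the iteration index $k$ and of any coupling). The cardinality bound $|\calI_{i,j}|\le|\calI_i|\le S$ is immediate from Assumption~\ref{aspt:sparse}. If $i\notin\calI_j$, then $\nabla_{\bfx_i}\bfv_j\equiv\mathbf{0}$ by Assumption~\ref{aspt:sparse}(ii) and there is nothing to prove, so I assume $i\in\calI_j$.

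The key step is to observe that the matrix field $\bfx\mapsto\nabla_{\bfx_i}\bfv_j(\bfx)$ in fact depends on $\bfx$ only through the blocks in $\calI_{i,j}$. On one hand, by Assumption~\ref{aspt:sparse}(ii) the field $\bfv_j$ depends only on $\bfx_{\calI_j}$ (as noted in the text following Assumption~\ref{aspt:regular}), hence so does $\nabla_{\bfx_i}\bfv_j$. On the other hand, since $\pi$ has continuous second derivatives, mixed partials commute, so $\nabla_{\bfx_i}\bfv_j(\bfx)=\bigl(\nabla_{\bfx_j}\bfv_i(\bfx)\bigr)^{T}$; since $\bfv_i$ depends only on $\bfx_{\calI_i}$, the same is true of $\nabla_{\bfx_j}\bfv_i$ and therefore of $\nabla_{\bfx_i}\bfv_j$. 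A function that is independent of both $\bfx_{\calI_j^c}$ and $\bfx_{\calI_i^c}$ is independent of $\bfx_{(\calI_i\cap\calI_j)^c}$, which proves the claim.

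It then remains to estimate $\|\nabla_{\bfx_i}\bfv_j(\bfx)-\nabla_{\bfx_i}\bfv_j(\bfz)\|$ by changing one block at a time. Enumerate $\calI_{i,j}=\{l_1,\dots,l_r\}$ with $r\le S$, put $\mathbf{w}^{(0)}=\bfx$, and obtain $\mathbf{w}^{(p)}$ from $\mathbf{w}^{(p-1)}$ by overwriting its $l_p$-th block with $\bfz_{l_p}$. Since $\mathbf{w}^{(r)}$ agrees with $\bfz$ on every block in $\calI_{i,j}$, the previous paragraph gives $\nabla_{\bfx_i}\bfv_j(\mathbf{w}^{(0)})=\nabla_{\bfx_i}\bfv_j(\bfx)$ and $\nabla_{\bfx_i}\bfv_j(\mathbf{w}^{(r)})=\nabla_{\bfx_i}\bfv_j(\bfz)$. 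Consecutive iterates $\mathbf{w}^{(p-1)}$ and $\mathbf{w}^{(p)}$ differ only in block $l_p$, so the Lipschitz bound of Assumption~\ref{aspt:regular}, applied to the Hessian block $\nabla_{\bfx_i}\bfv_j$, yields $\|\nabla_{\bfx_i}\bfv_j(\mathbf{w}^{(p-1)})-\nabla_{\bfx_i}\bfv_j(\mathbf{w}^{(p)})\|\le H_v\|\mathbf{w}^{(p-1)}-\mathbf{w}^{(p)}\|=H_v\|\bfx_{l_p}-\bfz_{l_p}\|$. Summing over $p$ and using the triangle inequality gives $\|\nabla_{\bfx_i}\bfv_j(\bfx)-\nabla_{\bfx_i}\bfv_j(\bfz)\|\le H_v\sum_{l\in\calI_{i,j}}\|\bfx_l-\bfz_l\|$, as claimed.

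The main obstacle is the middle step: the crude observation that $\nabla_{\bfx_i}\bfv_j$ depends only on $\bfx_{\calI_j}$ would give merely a sum over $\calI_j$, and sharpening it to a sum over $\calI_i\cap\calI_j$ requires combining the sparsity of Assumption~\ref{aspt:sparse} with the symmetry of the Hessian. Two minor points also deserve care: the Lipschitz estimate of Assumption~\ref{aspt:regular} must be used for the off-diagonal blocks $\nabla_{\bfx_i}\bfv_j$ (not only the diagonal one), and the interpolation must be carried out block by block rather than along the straight segment from $\bfx$ to $\bfz$, since only the former turns the Euclidean distance $\|\bfx-\bfz\|$ into the localized sum $\sum_{l\in\calI_{i,j}}\|\bfx_l-\bfz_l\|$.
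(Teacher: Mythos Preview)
Your proof is correct and follows essentially the same approach as the paper: both arguments combine Assumption~\ref{aspt:sparse} with the symmetry $\nabla_{\bfx_i}\bfv_j=(\nabla_{\bfx_j}\bfv_i)^T$ to localize the dependence of $\nabla_{\bfx_i}\bfv_j$ to the blocks in $\calI_i\cap\calI_j$, and then apply the Lipschitz bound from Assumption~\ref{aspt:regular}. The only cosmetic difference is that the paper builds two intermediate points $\bfy,\bfu$ (first matching $\bfx$ on $\calI_j$, then on $\calI_i$) and applies the Lipschitz estimate once to $\|\bfu-\bfz\|$, whereas you first establish the $\calI_{i,j}$-localization explicitly and then telescope block by block; both routes yield the same sum, and your observation that the Lipschitz hypothesis must be read for off-diagonal Hessian blocks is exactly what the paper also (tacitly) uses.
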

\begin{proof}
By Assumption \ref{aspt:sparse}, we know that $\bfv_j(\bfx)$ has no dependence on $\bfx_i$ if $i\notin \calI_j$, so that we can write $\bfv_j(\bfx_{\calI_j})$.
Similarly, $\nabla_{\bfx_i}\bfv_j(\bfx)$ can be written as $\nabla_{\bfx_i}\bfv_j(\bfx_{\calI_j})$. 
For any $\bfx$ and $\bfz$, pick $\bfy,\bfu\in\reals^d$ so that $\bfy_{\calI_j}=\bfx_{\calI_j}$, $\bfy_{\calI_j^c}=\bfz_{\calI_j^c}$ and $\bfu_{\calI_i}=\bfy_{\calI_i}$, $\bfu_{\calI_{i}^c}=\bfz_{\calI_{i}^c}$. Note that $\bfu$ will differ from $\bfz$ only at the blocks with indices in $\calI_{i,j}$. Then, since  $\nabla_{\bfx_i}\bfv_j (\bfx)=[\nabla_{\bfx_j}\bfv_i (\bfx)]^T$ and by Assumption \ref{aspt:regular},
\begin{align*}
\|\nabla_{\bfx_i}\bfv_j (\bfx)-\nabla_{\bfz_i}\bfv_j (\bfz)\|&=\|\nabla_{\bfy_i}\bfv_j (\bfy)-\nabla_{\bfz_i}\bfv_j (\bfz)\|\\
&=\|\nabla_{\bfy_j}\bfv_i (\bfy)-\nabla_{\bfz_j}\bfv_i (\bfz)\|\\
&=\|\nabla_{\bfu_j}\bfv_i (\bfu)-\nabla_{\bfz_j}\bfv_i (\bfz)\|\\
&\leq H_v\|\bfu-\bfz\|\leq H_v\sum_{l\in \calI_{i,j}} \|\bfu_l-\bfz_l\|=H_v\sum_{l\in \calI_{i,j}} \|\bfx_l-\bfz_l\|. 
\end{align*}
\end{proof}

\label{sec:proofoftheorem}
\begin{proof}[Proof of Theorem \ref{thm:unbiased}]
From Proposition \ref{prop:onestep}, 
and using the fact that the differences of the pair of coupled MALA-within-Gibbs iterates in their blocks at the $j$-th block update is given by \eqref{tmp:Deltaremind}, we have a.s.,
\begin{equation}
\label{tmp:prop41}
\E_{k} \|\Delta_j^{k+1}\|\leq \E_{k}\|\Delta_j^k+\tau \bfv_j (\bfx^{k,j})-\tau \bfv_j(\bfz^{k,j})\|+M\tau^{\frac32}\sum_{i<j,i\in \calI_j} \E_{k}\|\Delta^{k+1}_i\|+M\tau^{\frac32}\sum_{i\geq j,i\in \calI_j} \|\Delta^{k}_i\|. 
\end{equation}
Here, recall that $\E_k$ is the conditional expectation with respect to the information available before the $k$-th Gibbs cycle. To simplify notation, we define, for an arbitrary function $g$, $g[\bfx,\bfz]=\int^1_0 g(s\bfx+(1-s)\bfz)ds$.  Recall that $\Delta^{k,j}=\bfx^{k,j}-\bfz^{k,j}=[\Delta_1^{k+1}, \Delta_2^{k+1},\cdots,\Delta_{j-1}^{k+1},\Delta_j^k,\cdots, \Delta_m^k]$. Also, we use 
\[
\nabla^2_{\bfx_j,\bfx} \log \pi(\bfx)=[\nabla^2_{\bfx_j,\bfx_1} \log \pi(\bfx),\ldots, \nabla^2_{\bfx_j,\bfx_m} \log \pi(\bfx)],
\] to denote the $j$-th block-wise row of the  Hessian log density.

\textbf{First step:} we decompose the first term 
of the right hand side of \eqref{tmp:prop41} into terms involving block-wise quantities. 
To this end, we first decompose
\begin{align}
\notag
\bfv_j (\bfx^{k,j})- \bfv_j(\bfz^{k,j})&=\nabla_{\bfx_j}\log \pi(\bfx^{k,j})-\nabla_{\bfz_j} \log \pi(\bfz^{k,j})\\
\notag
&=\int^{1}_0\nabla^2_{\bfx_j,\bfx} \log \pi(\bfx^{k,j}+s \Delta^{k,j}) \Delta^{k,j} ds\\
\notag
&=(\nabla^2_{\bfx_j,\bfx} \log \pi)[\bfx^{k,j}, \bfz^{k,j}] \Delta^{k,j} \\
\notag
&=(\nabla^2_{\bfx_j,\bfx_j} \log \pi)[\bfx^{k,j},\bfz^{k,j}] \Delta^{k}_j+\sum_{i<j,i\in \calI_j} (\nabla^2_{\bfx_j,\bfx_i} \log \pi)[\bfx^{k,j}, \bfz^{k,j}] \Delta^{k+1}_i \\
\label{tmp:vdiff}
&\quad+ \sum_{i>j,i\in \calI_J} (\nabla_{\bfx_j,\bfx_i}^2 \log \pi) [\bfx^{k,j},\bfz^{k,j}] \Delta^{k}_i\\
\notag
&=(\nabla^2_{\bfx_j,\bfx_j} \log \pi)[\bfx^{k,j},\bfz^{k,j}] \Delta^{k}_j+\sum_{i\neq j, i\in \calI_j} (\nabla^2_{\bfx_j,\bfx_i} \log \pi)[\bfx^{k,j}, \bfz^{k,j}] \Delta^{k}_i\\
\notag
&\quad + \sum_{i<j,i\in \calI_J} (\nabla_{\bfx_j,\bfx_i}^2 \log \pi) [\bfx^{k,j},\bfz^{k,j}] (\Delta^{k+1}_i-\Delta^{k}_i).
\end{align}
Therefore, we can bound the first term on the right hand side of \eqref{tmp:prop41} by
\begin{align}
\label{tmp:prop41first1}
\E_k\|\Delta_j^k+\tau \bfv_j (\bfx^{k,j})-\tau \bfv_j(\bfz^{k,j})\|\leq \calD_1+\calD_2+\calD_3,
\end{align}
where
\begin{align*}
\calD_1:&=\E_k\left\|\Delta^{k}_j+\tau (\nabla_{\bfx_j,\bfx_j}^2 \log \pi)[\bfx^{k,j},\bfz^{k,j}] \Delta^{k}_j \right\|,\\
\calD_2:&=\tau\sum_{i\neq j, i\in \calI_j}\E_k\left\|(\nabla^2_{\bfx_j,\bfx_i} \log \pi)[\bfx^{k,j}, \bfz^{k,j}] \right\|\| \Delta^{k}_i\|,\\
\calD_3:&=\tau\sum_{i<j,i\in \calI_j}\E_k\left\|(\nabla^2_{\bfx_j,\bfx_i} \log \pi)[\bfx^{k,j}, \bfz^{k,j}] \right\|\|\Delta^{k+1}_i -\Delta^{k}_i\|.
\end{align*}

\textbf{Second step:} we replace $[\bfx^{k,j},\bfz^{k,j}] $ in $\calD_1$ and $\calD_2$ by $[\bfx^{k},\bfz^{k}] $. The reason for doing this is that $\bfx^k$ and $\bfz^k$ are fixed for all $j$, which makes these quantities easier to analyze.  Under the Lipschitz conditions for $\nabla_{\bfx_j,\bfx_i}\log\pi$ in Assumption \ref{aspt:blockconcave}, and recalling that 
\[
\bfx^{k,j}=[\bfx^{k+1}_1,\ldots,\bfx^{k+1}_{j-1}, \bfx^{k}_j,\ldots, \bfx^{k}_m],
\]
we can bound $\|(\nabla^2_{\bfx_j,\bfx_i}\log\pi)[\bfx^{k,j},\bfz^{k,j}]-(\nabla^2_{\bfx_j,\bfx_i}\log\pi)[\bfx^k,\bfz^k]\|$ using Lemma \ref{lem:sparseHess}:
\begin{align}
\notag
&\|(\nabla^2_{\bfx_j,\bfx_i}\log\pi)[\bfx^{k,j},\bfz^{k,j}]-(\nabla^2_{\bfx_j,\bfx_i}\log\pi)[\bfx^k,\bfz^k]\|\\
\notag
&\leq \int^1_0 \|\nabla^2_{\bfx_j,\bfx_i}\log\pi(s\bfx^{k,j}+(1-s)\bfz^{k,j})- \nabla^2_{\bfx_j,\bfx_i}\log\pi (s\bfx^k+(1-s)\bfz^k)\|ds\\
\notag
&\leq \int^1_0 H_v\left(\sum_{l\in \calI_{j,i},l<j} s\|\bfx^{k+1}_l-\bfx^k_l\|+(1-s) \|\bfz^{k+1}_l-\bfz^k_l\|\right)ds\\
\label{tmp:Hessdiff}
&\leq \frac12 H_v  \sum_{l\in \calI_{j,i},l<j}  (\|\bfx^{k+1}_l-\bfx^{k}_l\|+ \|\bfz^{k+1}_l-\bfz^k_l\|).
\end{align}
In \eqref{tmp:Hessdiff}, first note that the summation is over at most $S$ terms, because the cardinality of the index set satisfies 
\[
|\{l: l\in \calI_{j,i}, l<j\}|\leq |\calI_{j,i}|\leq S.
\] 
Second, note that either $\bfx^{k+1}_l=\bfx^k_l$ or $\bfx^{k+1}_l=\tilde{\bfx}^k_l=\bfx^k_l+\tau\bfv_l(\bfx^{k,l})+\sqrt{2\tau}\xi^{k}_l$, depending whether the proposal is rejected or not. Therefore,
\begin{equation}
\label{tmp:xdiff}
\E_{k}\|\bfx^k_l-\bfx^{k+1}_l\|\leq \sqrt{\E_{k} \|\bfx^k_l-\tilde{\bfx}^k_l\|^2}\leq \sqrt{\tau^2 \E_{k} \|\bfv_l(\bfx^{k,l})\|^2+2\tau q}\leq \tau M_v+\tau^{\tfrac12} \sqrt{2q}.
\end{equation}
The above bound also applies to  $\E_{k}\|\bfz^k_l-\bfz^{k+1}_l\|$, for the same reasons. Therefore, by \eqref{tmp:Hessdiff}, for the constant $M_0:= H_vS(M_v\sqrt{\tau}+\sqrt{2q})$, 
\begin{equation*}
\E_k \tau \|(\nabla^2_{\bfx_j,\bfx_i}\log\pi)[\bfx^{k,j},\bfz^{k,j}]-(\nabla^2_{\bfx_j,\bfx_i}\log\pi)[\bfx^k,\bfz^k]\|\leq H_v S(M_v\tau^2+\sqrt{2q}\tau^{\frac32})\leq M_0\tau^\frac{3}{2}.
\end{equation*}
This leads to following bound for $\calD_1$ in \eqref{tmp:prop41first1},
\begin{align}
\notag
&\E_k\left\|\Delta^{k}_j+\tau (\nabla_{\bfx_j,\bfx_j}^2 \log \pi)[\bfx^{k,j},\bfz^{k,j}] \Delta^{k}_j \right\|\\
\notag
&\leq \left\|\Delta^{k}_j+\tau (\nabla_{\bfx_j,\bfx_j}^2 \log \pi)[\bfx^{k},\bfz^{k}] \Delta^{k}_j \right\|+\tau \E_k \|(\nabla^2_{\bfx_j,\bfx_i}\log\pi)[\bfx^{k,j},\bfz^{k,j}]-(\nabla^2_{\bfx_j,\bfx_i}\log\pi)[\bfx^k,\bfz^k]\|\|\Delta_j^k\|\\
\notag
&\leq \left\|\Delta^{k}_j+\tau (\nabla^2_{\bfx_j,\bfx_j} \log \pi)[\bfx^{k},\bfz^{k}] \Delta^{k}_j \right\|+M_0\tau^{\frac32}\|\Delta_j^k\|\\
\label{tmp:Ekdiag}
&\leq (1+\tau H_{j,j}[\bfx^{k},\bfz^{k}]+M_0\tau^{\frac32})\|\Delta_j^k\|. 
\end{align}
To obtain the last inequality, we use Assumption \ref{aspt:blockconcave} i).  

Likewise, we can build an upper bound for $\calD_2$ in \eqref{tmp:prop41first1}. 
\begin{align}
\notag
&\tau \E_k\left\|(\nabla^2_{\bfx_j,\bfx_i} \log \pi)[\bfx^{k,j}, \bfz^{k,j}]\right\|\| \Delta^{k}_i \|\\
\notag
&\leq \tau\E_k\left\|(\nabla^2_{\bfx_j,\bfx_i} \log \pi)[\bfx^{k}, \bfz^{k}]\right\|\| \Delta^{k}_i \|
+\tau\E_k \|(\nabla^2_{\bfx_j,\bfx_i}\log\pi)[\bfx^{k,j},\bfz^{k,j}]-(\nabla^2_{\bfx_j,\bfx_i}\log\pi)[\bfx^k,\bfz^k]\|\| \Delta^{k}_i \|\\
\notag
&\leq (\tau H_{j,i} [\bfx^{k},\bfz^{k}]+M_0\tau^{\frac{3}{2}})\|\Delta_i^{k}\|. 
\end{align}
In summary
\begin{equation}
\label{tmp:Ekoff}
\calD_2=\sum_{i\neq j, i\in \calI_j}\E_k\left\|(\nabla^2_{\bfx_j,\bfx_i} \log \pi)[\bfx^{k,j}, \bfz^{k,j}]\right\|\| \Delta^{k}_i \|\leq S(\tau H_{j,i} [\bfx^{k},\bfz^{k}]+M_0\tau^{\frac{3}{2}})\|\Delta_i^{k}\|. 
\end{equation}

\textbf{Third step:}
we bound $\calD_3$ in \eqref{tmp:prop41first1}.
Note that by Assumption \ref{aspt:regular}, $\|\nabla^2_{\bfx_i,\bfx_j}\log \pi \|\leq H_v$, 
so that 
\begin{equation}
\label{tmp:thirdstep}
\calD_3\leq  \tau H_v \sum_{i\in I_j}\E_k \|\Delta_i^{k+1}-\Delta_i^k\|=\tau H_v \sum_{i\in I_j}\E_k \|(\bfx^{k+1}_i-\bfx^{k}_i)-(\bfz^{k+1}_i-\bfz^{k}_i)\|.
\end{equation}
Note that $\bfx^{k+1}_i$ is either $\bfx^k_i$ or $\tilde{\bfx}^k_i$ based on whether $\tilde{\bfx}^k_i$ is rejected or not.
Thus,
\[
\bfx^{k+1}_i-\bfx^{k}_i=\unit_{\tilde{\bfx}^k_i\,\text{is accepted} } (\tau \bfv_i(\bfx^{k,i})+\sqrt{2\tau}\xi^k_i),\quad
\bfz^{k+1}_i-\bfz^{k}_i=\unit_{\tilde{\bfz}^k_i\,\text{is accepted} } (\tau\bfv_i(\bfz^{k,i})+\sqrt{2\tau}\xi^k_i).
\]
This leads to 
\begin{equation}
\E_k \|(\bfx^{k+1}_i-\bfx^{k}_i)-(\bfz^{k+1}_i-\bfz^{k}_i)\|=
\calC_1+\calC_2+\calC_3,
\label{tmp:gapdiff1}
\end{equation}
where
\begin{align*}
\calC_1:&=\E_k \tau\|\unit_{\tilde{\bfx}^k_i\,\text{and}\,\tilde{\bfz}^k_i\text{are accepted} }(\bfv_i(\bfx^{k,i})-\bfv_i(\bfz^{k,i})) \|,\\
\calC_2:&=\E_k \|\unit_{\tilde{\bfx}^k_i\,\text{is accepted}\,\tilde{\bfz}^k_i\text{is rejected} }(\tau\bfv_i(\bfx^{k,i})+\sqrt{2\tau}\xi^k_i) \|,\\
\calC_3:&=\E_k \|\unit_{\tilde{\bfz}^k_i\,\text{is accepted}\,\tilde{\bfx}^k_i\text{is rejected} }(\tau\bfv_i(\bfz^{k,i})+\sqrt{2\tau}\xi^k_i) \|.
\end{align*}
To bound $\calC_1$, 
we note that by \eqref{tmp:vdiff} and Assumption \ref{aspt:regular}, we have
\[
\|\bfv_i(\bfx^{k,i})-\bfv_i(\bfz^{k,i})\|\leq H_v \left(\sum_{l\geq i, l\in \calI_i} \|\Delta^k_l\|+\sum_{l<i, l\in \calI_i} \|\Delta^{k+1}_l\|\right).
\]
Consequentially,
\begin{equation}
\label{tmp:723}
\calC_1\leq \E_k \tau\|\bfv_i(\bfx^{k,i})-\bfv_i(\bfz^{k,i}) \| \leq \tau H_v\left(\sum_{l\geq i, l\in \calI_i} \|\Delta^k_l\|+\sum_{l<i, l\in \calI_i} \E_k\|\Delta^{k+1}_l\|\right).
\end{equation}
Next, we bound $\calC_2$
using Lemma \ref{lem:couplingprob}, claims (2) and (3), so that for some constant $M_0'$
\begin{align*}
\calC_2 &\leq \tau\E_k\|\unit_{\tilde{\bfx}^k_i\,\text{is accepted}\,\tilde{\bfz}^k_i\text{is rejected} }\bfv_i(\bfx^{k,i})\|+\sqrt{2\tau}\E_k \|\unit_{\tilde{\bfx}^k_i\,\text{is accepted}\,\tilde{\bfz}^k_i\text{is rejected} }\xi^{k}_i\|\\
&\leq \tau M_v\Prob_k(\text{accept only one of }\tilde{\bfx}^k_i,\tilde{\bfz}^k_i )+\sqrt{2\tau}\E_k \unit_{\text{accept only one of }\tilde{\bfx}^k_i,\tilde{\bfz}^k_i }\|\xi^{k}_i\|\\
&\leq  M_0' (M_v\tau^2+\sqrt{2\tau}\tau) \E_k\sqrt{\sum_{l\in \calI_i} \|\bfx^{k,i}_l-\bfz^{k,i}_l\|^2}.
\end{align*}
Also note that 
\begin{align*}
\E_k\sqrt{\sum_{l\in \calI_i} \|\bfx^{k,i}_l-\bfz^{k,i}_l\|^2}\leq \E_k\sum_{l\in \calI_i} \|\bfx^{k,i}_l-\bfz^{k,i}_l\|=\sum_{l\geq i, l\in \calI_i} \|\Delta^k_l\|+\sum_{l<i, l\in \calI_i} \E_k\|\Delta^{k+1}_l\|. 
\end{align*}
Therefore, 
\begin{equation}
\label{tmp:726}
\calC_2 \leq M_0'(M_v\tau+\sqrt{2\tau})\left(\sum_{l\geq i, l\in \calI_i} \|\Delta^k_l\|+\sum_{l<i, l\in \calI_i} \E_k\|\Delta^{k+1}_l\|\right). 
\end{equation}
The same bound holds for $\calC_3$.
Combining \eqref{tmp:thirdstep}, \eqref{tmp:gapdiff1}, \eqref{tmp:723} and \eqref{tmp:726}, we find that, for some constant $M_1$, 
\begin{equation}
\label{tmp:thirdfinish}
\calD_3\leq  \sum_{i\in \calI_j} \tau H_v (\calC_1+\calC_2+\calC_3 )\leq S M_1 \tau^{\frac32}\left(\sum_{l\in \calI_j^2} \|\Delta^k_l\|+\sum_{l\in \calI_j^2} \E_k\|\Delta^{k+1}_l\|\right).
\end{equation}
Here $\calI_j^2:=\bigcup_{i\in \calI_j} \calI_i$; by the union bound of cardinality, $|\calI_j^2|\leq S^2$. Also note that because $j\in \calI_j$ so that $\calI_j\subset \calI_j^2$.

\textbf{Summary of the first three steps:} we bound the first term 
of the right hand side of \eqref{tmp:prop41}  by bounding $\calD_1$ with \eqref{tmp:Ekdiag}, $\calD_2$ with \eqref{tmp:Ekoff}, and $\calD_3$ with \eqref{tmp:thirdfinish}
In conclusion, 
$\E_k  \|\Delta_j^{k+1}\|$ 
can be bounded by 
\begin{align*}
&\E_k  \|\Delta_j^{k+1}\| \leq (1+\tau H_{j,j}[\bfx^{k},\bfz^{k}]+M_0 \tau^{\frac32})\|\Delta^k_j\|+\sum_{i\neq j,i\in \calI_j} (\tau H_{j,i}[\bfx^{k},\bfz^{k}]+M_0\tau^{\frac32}) \|\Delta^{k}_i\|\\
&\quad+\tau^{\frac32}\left(\sum_{i\in \calI^2_j}S M_1  \E_k \|\Delta^{k}_i\|+\sum_{i\in \calI_j^2}S M_1 \E_k \|\Delta^{k+1}_i\|+M\sum_{i<j,i\in \calI_j} \E_{k}\|\Delta^{k+1}_i\|+M\sum_{i\geq j,i\in \calI_j} \|\Delta^{k}_i\|\right),
\end{align*}
or, in a more succinct form,
\begin{align}
\notag
\E_k  \|\Delta_j^{k+1}\|&\leq (1+\tau H_{j,j}[\bfx^{k},\bfz^{k}]+M_0 \tau^{\frac32})\|\Delta^k_j\|+\sum_{i\neq j, i\in \calI_j} (\tau H_{j,i}[\bfx^{k},\bfz^{k}]+M_0\tau^{\frac32}) \|\Delta^{k}_i\|,\\
\label{tmp:Deltafinal}
&\qquad\quad+\sum_{i\in \calI^2_j}(S M_1+M)\tau^{\frac32}  \E_k \|\Delta^{k}_i\|+\sum_{i\in \calI_j^2}(S M_1+M)\tau^{\frac32}  \E_k \|\Delta^{k+1}_i\|. 
\end{align}
This provides us upper bounds on how $\|\Delta_j^{k}\|$ 
changes when the iteration number $k$ increases. 

\textbf{Fourth step:} we first rewrite \eqref{tmp:Deltafinal} in a more compact matrix formulation. 
Define the following $\reals^{m\times m}$ matrices by their entries 
\[
[\bfM_\tau ]_{j,i}=\tau^{\frac{3}{2}}(SM_1+M)1_{i\in \calI^2_j},\quad  [\bfM^0_\tau ]_{j,i}=M_0 \tau^\frac{3}{2} 1_{i\in \calI_j}. 
\]
Next, we define $D_k:=[ \|\Delta^k_1\|,\|\Delta^k_2\|,\cdots, \|\Delta^k_m\|]^T$. It can be verified that 
\eqref{tmp:Deltafinal} is equivalent to the $j$-th row of the following vector inequality, 
\begin{equation}
\label{tmp:Dk}
(\bfI-\bfM_\tau) \E_k D_{k+1}\preceq (\bfI+\bfH[\bfx^k,\bfz^k]\tau+\bfM_\tau+\bfM_\tau^0) D_k,
\end{equation}
where we use the notation $\bfx\preceq \bfy$ to mean that 
the two vectors $\bfx$ and $\bfy$ satisfy $\bfx_i\leq \bfy_i$ for all blocks indexed by $i=1,\dots,m$. 
Recall that for a matrix, its $l_2$-operator norm is bounded by its $l_1$-norm, see \cite{MTM19}. 
Thus,
\begin{equation}
\label{tmp:A2}
\|\bfM_\tau\|\leq \tau^{\frac{3}{2}}\max_j |\calI_j^2| SM_1 \leq \tau^{\frac{3}{2}} (M_1S+M)S^2,\quad
\|\bfM^0_\tau\|\leq \tau^{\frac{3}{2}}\max_j |\calI_j| M_0  \leq \tau^{\frac{3}{2}} M_0S. 
\end{equation}
For sufficiently small $\tau$, $\|\bfM_\tau\|<1$, and we can thus write
\[
(\bfI-\bfM_\tau)^{-1}=\bfI+\bfM_\tau +(\bfM_\tau)^2+(\bfM_{\tau})^3+\cdots,
\]
and 
\begin{equation}
\label{tmp:A3}
\|(\bfI-\bfM_\tau)^{-1}\|\leq \sum_{i=0}^\infty\|\bfM_\tau\|^i=\frac{1}{1-\tau^{\frac{3}{2}} (M_1S+M)S^2}.
\end{equation}
Note that all entries of $\bfM_\tau$ are positive, therefore all entries of $(\bfM_{\tau})^n$ and $(\bfI-\bfM_\tau)^{-1}$ are positive as well. Therefore, \eqref{tmp:Dk} leads to 
\[
 \E_k D_{k+1}\preceq  (\bfI-\bfM_\tau)^{-1}  (\bfI+\bfH[\bfx^k,\bfz^k]\tau+\bfM_\tau+\bfM_\tau^0)D_k. 
\]
By law of total expectation, we have 
\[
\E D_k\preceq \E \prod_{l=0}^{k-1}(\bfI-\bfM_\tau)^{-1}  (\bfI+\bfH[\bfx^l,\bfz^l]\tau+\bfM_\tau+\bfM_\tau^0) D_0 .
\]
Finally we provide an a.s. upper bound for $\|(\bfI-\bfM_\tau)^{-1}  (\bfI+\bfH[\bfx^l,\bfz^l]\tau+\bfM_\tau+\bfM_\tau^0)\|$.
By the log-concavity assumption, $\|\bfI+\bfH [\bfx^{k},\bfz^{k}]\tau\|\leq 1-\tau \lambda_{H}$. 
Thus, by  \eqref{tmp:A2} and \eqref{tmp:A3}, for sufficiently small $\tau$, 
\[
\|(\bfI-\bfM_\tau)^{-1}  (\bfI+\bfH[\bfx^l,\bfz^l]\tau+\bfM_\tau+\bfM_\tau^0)\|\leq \frac{1-\tau \lambda_H +\tau^{\frac{3}{2}}(M_0 S+M_1S^3+MS^2)}{1-\tau^{\frac{3}{2}}(M_1S+M)S^2}\leq (1-(1-\delta)\tau \lambda_H),\quad a.s..
\] 
Finally, we have that 
\[
\|\E D_k\|\preceq \left\|\E \prod_{k=0}^{k-1}(\bfI-\bfM_\tau)^{-1}  (\bfI+\bfH[\bfx^l,\bfz^l]\tau+\bfM_\tau+\bfM_\tau^0) D_0 \right\|\leq  (1-(1-\delta)\tau \lambda_H)^k  \|\E D_0\|. 
\]
\end{proof}

\end{document}